\RequirePackage[dvipsnames]{xcolor}
\documentclass[11pt]{article}
\usepackage[margin=1in]{geometry}
\usepackage{graphicx}
\usepackage{microtype}
\usepackage{subcaption}
\usepackage{amsmath}
\usepackage{amssymb}
\usepackage{amsthm}
\usepackage{aliascnt}
\usepackage{ifthen}
\usepackage{tikz}
\usetikzlibrary{arrows.meta}
\usepackage[comma]{natbib}
\usepackage{todonotes}

\usepackage{nicefrac}

\newtheorem{theorem}{Theorem}[section]
\newaliascnt{proposition}{theorem}
\newtheorem{proposition}[proposition]{Proposition}
\aliascntresetthe{proposition}
\newaliascnt{lemma}{theorem}
\newtheorem{lemma}[lemma]{Lemma}
\aliascntresetthe{lemma}
\newaliascnt{corollary}{theorem}
\newtheorem{corollary}[corollary]{Corollary}
\aliascntresetthe{corollary}
\newaliascnt{fact}{theorem}

\aliascntresetthe{fact}
\theoremstyle{definition}
\newaliascnt{definition}{theorem}
\newtheorem{definition}[definition]{Definition}
\aliascntresetthe{definition}
\newaliascnt{example}{theorem}

\aliascntresetthe{example}
\theoremstyle{remark}
\newaliascnt{remark}{theorem}

\aliascntresetthe{remark}

\usepackage[hidelinks]{hyperref}
\usepackage[capitalize,nameinlink,noabbrev]{cleveref}
\crefname{proposition}{proposition}{propositions}
\crefname{lemma}{lemma}{lemmas}
\crefname{corollary}{corollary}{corollaries}
\crefname{fact}{fact}{facts}
\crefname{definition}{definition}{definitions}
\crefname{example}{example}{examples}
\crefname{remark}{remark}{remarks}

\newcommand{\floor}[1]{
{\lfloor {#1} \rfloor}
}

\newcommand{\ceiling}[1]{
{\lceil {#1} \rceil}
}

\DeclareMathOperator{\argmax}{argmax}

\newcommand{\given}{\,\mid\,}

\newcommand{\prob}[2][]{\text{\bf Pr}\ifthenelse{\not\equal{}{#1}}{_{#1}}{}\!\left[{\def\givenn{\middle|}#2}\right]}
\newcommand{\expect}[2][]{\text{\bf E}\ifthenelse{\not\equal{}{#1}}{_{#1}}{}\!\left[{\def\givenn{\middle|}#2}\right]}

\newcommand{\tparen}{\big}
\newcommand{\tprob}[2][]{\text{\bf Pr}\ifthenelse{\not\equal{}{#1}}{_{#1}}{}\tparen[{\def\given{\tparen|}#2}\tparen]}
\newcommand{\texpect}[2][]{\text{\bf E}\ifthenelse{\not\equal{}{#1}}{_{#1}}{}\tparen[{\def\given{\tparen|}#2}\tparen]}

\newcommand{\sprob}[2][]{\text{\bf Pr}\ifthenelse{\not\equal{}{#1}}{_{#1}}{}[#2]}
\newcommand{\sexpect}[2][]{\text{\bf E}\ifthenelse{\not\equal{}{#1}}{_{#1}}{}[#2]}

\newcommand{\dd}{{\mathrm d}}

 \usepackage{amsmath}
\usepackage{amssymb}
\usepackage{amsfonts}
\usepackage{ifthen}
\usepackage{xcolor}

\usepackage{bbm}
\usepackage{BOONDOX-cal}

\newcommand{\composed}[3]{#1{#2{#3}}}
\newcommand{\double}[2]{\kern.15ex#1{\kern-.15ex#1{\kern-.15ex#2}}}

\newcommand{\themodifier}{}

\newcommand{\ironed}{\bar}
\newcommand{\constrained}{\hat}

\newcommand{\thestage}{an}

\newcommand{\interimfont}[1]{\mathnormal{\color{black}#1}}

\newcommand{\interim}[1]{{\setstagefont{\interimfont}#1}}

\newcommand{\stagefont}{\mathrm}
\newcommand{\setstagefont}[1]{\renewcommand{\stagefont}{#1}}

\newcommand{\decoration}{\noaccents}

\newcommand{\stagemodifier}[2]{\composed {\ifthenelse{\equal{#2}{}}{}{\csname #2\endcsname}}{\ifthenelse{\equal{#1}{}}{}{\csname #1\endcsname}}}

\newcommand{\thestagemodifier}{\stagemodifier{\thestage}{\themodifier}}

\newcommand{\ensuredecoration}{\renewcommand{\decoration}{\thestagemodifier}}

\ensuredecoration

\newcommand{\optimized}[1]{#1\opt}
\newcommand{\differentiated}[1]{#1'}
\newcommand{\tagged}[2]{{#2}^{#1}}

\newcommand{\primedarg}[1]{#1\primed}
\newcommand{\noaccents}[1]{#1}

\newcommand{\additemsubscript}{}
\newcommand{\itemsubscript}{}

\newcommand{\drop}[1]{}
\newcommand{\definevariant}[3]{\expandafter\newcommand\expandafter{\csname \expandafter\drop\string#1#2\endcsname}{}
\expandafter\DeclareRobustCommand\expandafter{\csname \expandafter\drop\string#1#2\endcsname}{#3}}
\newcommand{\usevariant}[2]{\csname \expandafter\drop\string#1#2\endcsname}

\newcommand{\agind}[1][\agent]{_{\smash{#1\itemsubscript}}}
\newcommand{\agith}[1][\agent]{_{(\smash{#1})\smash{\itemsubscript}}}
\newcommand{\minusagind}[1][\agent]{_{\smash{-#1}}}

\newcommand{\newagentvar}[3][\decoration]{\definevariant{#2}{}{#1{\stagefont{#3}}\additemsubscript}\definevariant{#2}{i}{#1{\stagefont{#3}}\agind}
\definevariant{#2}{ith}{#1{\stagefont{#3}}\agith}
\definevariant{#2}{s}{\kern.1ex#1{\kern-.1ex\boldsymbol{\stagefont{#3}}}}
\definevariant{#2}{smi}{\kern.1ex#1{\kern-.1ex\boldsymbol{\stagefont{#3}}}\minusagind} 
}

\newcommand{\itind}[1][\itm]{_{#1}}
\newcommand{\minusitind}[1][\itm]{_{-#1}}

\newcommand{\newitemvar}[3][\decoration]{\definevariant{#2}{}{#1{\stagefont{#3}}}\definevariant{#2}{j}{#1{\stagefont{#3}}\itind}
\definevariant{#2}{s}{\kern.1ex#1{\kern-.1ex\boldsymbol{\stagefont{#3}}}}
\definevariant{#2}{smi}{\kern.1ex#1{\kern-.1ex\boldsymbol{\stagefont{#3}}}\minusitind} 
}

\newcommand{\newagentvarsm}[4]{\newagentvar[\stagemodifier{#3}{#4}]{#1}{#2}}

\newcommand{\theval}{v}
\newagentvarsm{\val}{\theval}{an}{none}
\newagentvarsm{\Val}{\theval}{interim}{forquant}
\newagentvarsm{\type}{t}{an}{none}
\newagentvarsm{\othertype}{s}{an}{none}

\newcommand{\therev}{R}
\newcommand{\thepricerev}{P}
\newagentvarsm{\rev}{\therev}{interim}{nomodifer}
\newagentvarsm{\marg}{\therev}{interim}{differentiated}
\newagentvarsm{\rawrev}{\thepricerev}{interim}{}
\newagentvarsm{\pricerev}{\thepricerev}{interim}{}
\newagentvarsm{\rawmarg}{\thepricerev}{interim}{differentiated}
\newagentvarsm{\pricemarg}{\thepricerev}{interim}{differentiated}

\newcommand{\thevirt}{\varphi}
\newcommand{\thecumvirt}{\Phi}
\newagentvarsm{\virt}{\thevirt}{interim}{}
\newagentvarsm{\cumvirt}{\thecumvirt}{interim}{}
\newagentvarsm{\qvirt}{\thevirt}{interim}{}
\newagentvarsm{\ivirt}{\thevirt}{interim}{ironed}
\newagentvarsm{\icumvirt}{\thecumvirt}{interim}{ironed}

\newagentvarsm{\dist}{F}{interim}{}
\newagentvarsm{\dens}{f}{interim}{}
\newagentvarsm{\hazard}{h}{interim}{}
\newagentvarsm{\cumhazard}{H}{interim}{}

\newcommand{\thealloc}{x}
\newagentvar{\alloc}{\thealloc}
\newagentvar[\ironed]{\ialloc}{\thealloc}

\newagentvarsm{\zee}{z}{an}{}
\newagentvarsm{\wye}{y}{an}{}
\newagentvarsm{\cee}{c}{an}{}
\newagentvarsm{\delt}{\delta}{an}{}

\newcommand{\thefalloc}{z}
\newagentvar{\falloc}{\thefalloc}
\newagentvarsm{\epfalloc}{\thefalloc}{expost}{}

\newcommand{\thequant}{q}
\newagentvarsm{\quant}{\thequant}{an}{none}
\newagentvarsm{\exquant}{\thequant}{an}{constrained}
\newagentvarsm{\critquant}{\thequant}{an}{constrained}  \newagentvarsm{\monoq}{\thequant}{an}{optconstrained}
\newagentvar[\interim]{\toquant}{\thequant}

\newcommand{\theqalloc}{y}
\newcommand{\thecumalloc}{Y}
\newagentvarsm{\qalloc}{\theqalloc}{interim}{none}
\newagentvarsm{\cumalloc}{\thecumalloc}{interim}{none}
\newagentvarsm{\calloc}{\theqalloc}{interim}{constrained}
\newagentvarsm{\cumcalloc}{\thecumalloc}{interim}{constrained}
\newagentvarsm{\iqalloc}{\theqalloc}{interim}{ironed}
\newagentvarsm{\icumalloc}{\thecumalloc}{interim}{ironed}

\newagentvar{\qprice}{\price}
\newagentvar{\qrev}{R}
\newagentvar[\ironed]{\iqrev}{\qrev}

\newagentvarsm{\excalloc}{\theqalloc}{expost}{constrained}
\newagentvarsm{\exalloc}{\theqalloc}{expost}{forquantile}
\newagentvarsm{\extalloc}{\thealloc}{expost}{forvalue}
\newagentvarsm{\exfalloc}{\thefalloc}{expost}{forvalue}

\newagentvarsm{\inducedrev}{\therev}{interim}{primedarg}

\newagentvar[\tilde]{\pseudorev}{\therev}
\newagentvar[\tilde]{\pseudorawrev}{\thepricerev}

\newagentvar{\typespace}{{\cal T}}
\newagentvar{\typesubspace}{S}

\newcommand{\theoutcome}{w}
\newagentvarsm{\outcome}{\theoutcome}{an}{none}
\newagentvarsm{\toutcome}{\theoutcome}{interim}{forval}
\newagentvar{\outcomespace}{{\cal W}}

\newcommand{\served}[1]{#1^1}
\newcommand{\nonserved}[1]{#1^0}
\newcommand{\alloced}[1]{#1^{\alloc}}
\newcommand{\allocedi}[1]{#1^{\alloci}}
\newagentvar[\alloced]{\xoutcome}{\outcome}
\newagentvar[\allocedi]{\xioutcome}{\outcome}
\newagentvar[\served]{\soutcome}{\outcome}

\newagentvar[\nonserved]{\nsoutcome}{\outcome}

\newcommand{\theprice}{p}
\newagentvar{\price}{\theprice}

\newcommand{\thepay}{p}
\newagentvarsm{\pay}{\thepay}{an}{}
\newagentvarsm{\vpay}{\thepay}{interim}{forval}
\newagentvarsm{\tpay}{\thepay}{interim}{forval}
\newagentvarsm{\epvpay}{\thepay}{expost}{forval}
\newagentvarsm{\talloc}{\thealloc}{interim}{forval}
\newagentvarsm{\valloc}{\thealloc}{interim}{forval}
\newagentvarsm{\eptalloc}{\thealloc}{expost}{forval}
\newagentvarsm{\epvalloc}{\thealloc}{expost}{forval}
\newagentvarsm{\epqalloc}{\theqalloc}{expost}{forquant}

\newcommand{\theaalloc}{y}
\newagentvarsm{\aalloc}{\theaalloc}{interim}{forbid}
\newagentvarsm{\epaalloc}{\theaalloc}{expost}{forbid}
\newagentvarsm{\apay}{\thepay}{interim}{forbid}
\newagentvarsm{\epapay}{\thepay}{expost}{forbid}

\newagentvarsm{\balloc}{\thealloc}{interim}{forbid}
\newagentvarsm{\bpay}{\thepay}{interim}{forbid}
\newagentvarsm{\epballoc}{\thealloc}{expost}{forbid}
\newagentvarsm{\epbpay}{\thepay}{expost}{forbid}

\newagentvar{\act}{a}
\newagentvar{\bidspace}{A}
\newagentvar{\actspace}{A}

\newagentvarsm{\critval}{\theval}{an}{constrained}
\newagentvarsm{\epvcritval}{\theval}{expost}{constrained}

\newagentvar[\constrained]{\crittype}{\type}
\newagentvar[\constrained]{\critvirt}{\virt}
\newagentvar[\constrained]{\reserve}{\val} \newagentvar[\constrained]{\bidreserve}{\bid} \newagentvarsm{\monop}{\val}{an}{optconstrained}
\newagentvar[\constrained]{\monot}{\type}
\newagentvar[\optimized]{\monorev}{\rev}

\newcommand{\thercalloc}{y}
\newagentvarsm{\rcalloc}{\thercalloc}{an}{none}
\newagentvarsm{\optrcalloc}{\thercalloc}{expost}{optimized}
\newagentvarsm{\biddist}{G}{interim}{none}
\newagentvarsm{\biddens}{g}{interim}{none}
\newagentvarsm{\critbid}{\bid}{an}{constrained}
\newagentvarsm{\cbid}{B}{an}{constrained}
\newagentvar[\primedarg]{\wbid}{\bid}
\newagentvar{\gfunc}{\vartheta}
\newagentvar{\block}{C}

\newagentvarsm{\pee}{p}{interim}{none}

\newagentvarsm{\util}{u}{an}{}
\newagentvarsm{\vutil}{u}{interim}{forvalue}
\newagentvarsm{\butil}{u}{interim}{forbid}

\newcommand{\thebid}{b}
\newagentvar{\bid}{\thebid}
\newagentvarsm{\strat}{\thebid}{interim}{}

\newagentvar[\tagged{\text{SD}}]{\sdvirt}{\virt}
\newagentvar[\composed{\tagged{\text{SD}}}{\ironed}]{\sdivirt}{\virt}
\newagentvar[\tagged{\text{MD}}]{\mdvirt}{\virt}
\newagentvar[\composed{\tagged{\text{MD}}}{\ironed}]{\mdivirt}{\virt}

\newagentvar[\ironed]{\iprice}{\price}  \newagentvar[\ironed]{\ival}{\val}  

\newagentvar{\ints}{{\cal I}}

\newcommand{\feasibles}{{\cal X}}

\newagentvar{\wal}{w}

\newitemvar{\pos}{j}
\newitemvar{\weight}{w}
\newitemvar[\differentiated]{\mweight}{\weight}
\newitemvar{\udtype}{\type}
\newitemvar[\constrained]{\udcrittype}{\type}
\newitemvar{\udalloc}{\alloc}
\newitemvar{\udprice}{\price}
\newitemvar[\constrained]{\udcalloc}{\qalloc}
\newitemvar[\constrained]{\udcumcalloc}{\cumalloc}

\newagentvar{\mech}{M}
\newagentvar[\skew{5}{\hat}]{\cmech}{M}
\newagentvar{\alg}{{\cal A}}

\newcommand{\Rev}[1]{\operatorname{Rev}\!\left(#1\right)}

\newagentvarsm{\budget}{B}{an}{none}

\newagentvarsm{\rawprice}{\theprice}{expost}{forbid}
\newagentvarsm{\rawalloc}{\thealloc}{expost}{forbid}

\newcommand{\thecost}{c}
\newagentvarsm{\cost}{\thecost}{an}{none}
\newagentvarsm{\qcost}{\thecost}{interim}{forquant}

\DeclareMathOperator{\OPT}{OPT}

\DeclareMathOperator{\WM}{WM}
\DeclareMathOperator{\RM}{RM}

\newagentvarsm{\trans}{\sigma}{interim}{none}

\newagentvar{\demandset}{S}

\newcommand{\opt}{^{\star}}
\newcommand{\primed}{^\dagger}

\newagentvar{distout}{w}
\newagentvar{idistout}{\bar{w}}

\newcommand{\uniformzeroone}{U[0,1]}

\newcommand{\positivepart}[1]{\smash{(#1)^+}}

\newcommand{\apx}{\alpha}

\newagentvarsm{\epricerev}{\thepricerev}{an}{}
\newagentvarsm{\erev}{\therev}{an}{}
\newagentvarsm{\emarg}{\therev}{an}{differentiated}
\newagentvarsm{\epricemarg}{\thepricerev}{an}{differentiated}
\newagentvarsm{\efalloc}{\theqalloc}{interim}{none}
\newagentvarsm{\evirt}{\thevirt}{an}{none}

\newcommand{\maxval}{h}

\DeclareMathOperator{\RP}{RP}

\newagentvar{\gap}{\delta}

\newagentvar[\tilde]{\ualloc}{\alloc}
\newagentvar[\tilde]{\uprice}{\price}

\renewcommand{\mech}{M}
\newcommand{\Mechs}{\mathbb{M}}
\newcommand{\Dists}{\mathbb{F}}
\DeclareMathOperator{\RPP}{RPP}
\newcommand{\revM}{\Mechs_{\mathrm{rev}}}
\newcommand{\allM}{\Mechs_{\mathrm{all}}}

\newagentvar{\bundle}{S}
\newcommand{\pricedist}{G}
\newcommand{\pointmass}[1]{\disti[#1]}
\newcommand{\utilB}{\util_B}
\newcommand{\utilS}{\util_S}

\DeclareMathOperator{\Wel}{Wel}
\DeclareMathOperator{\FB}{FB}
\DeclareMathOperator{\SB}{SB}

\DeclareMathOperator{\MHR}{MHR}

\newcommand{\apxratio}{\rho}

\newcommand{\CompEff}{\mu}
\newcommand{\IndEff}{\eta}
\newcommand{\regretvalue}{\delta}

\newcommand{\OptPIratio}{\rho}
\newcommand{\RevGap}[1]{1-\frac{\OptPIratio(\revM,#1)}{\OptPIratio(\allM,#1)}}
\newagentvar{\sample}{s}
\newcommand{\samplemult}{a}

\newcommand{\sellerarea}{S}
\newcommand{\buyerarea}{B}

\title{Robustness in Mechanism Design \thanks{This is a preprint of a
    review article that is accepted to appear in the {\em Annual Review of
    Economics} in August of 2027.  It contains an additional section on
    the revelation gap that, for space reasons, was mostly removed
    from the published version.  Thanks to Yingkai Li, Simon
    Loertscher, Larry Samuelson, and Kangning Wang for their
    feedback.}}  \author{Jason Hartline \thanks{Department
    of Computer Science, Northwestern University, Evanston, Illinois,
    USA; \url{hartline@northwestern.edu}.}} \date{}

\begin{document}

\maketitle

\begin{abstract}
The article reviews the robust design and analysis of auctions,
focusing on the predominant paradigm in the computer science
literature, which quantifies robustness by the worst-case, over
environments, ratio between the robust auction's performance and that
of an optimal auction for the environment.  Applications include robust
analysis of canonical auctions such as the first-price auction for
the welfare objective, prior-robust auction design for the
revenue objective, and prior-robust design of exchange mechanisms for the
gains-from-trade objective.  Finally, the revelation principle is not
without loss in robust mechanism design.  The potential loss can be
quantified by the revelation gap, the fraction of the optimal robust
guarantee that is lost when restricting from all mechanisms to
revelation mechanisms.

 \end{abstract}

\noindent\textbf{Keywords:}
robust analysis, mechanism design, auction theory, bilateral exchange,
revelation gap

\section{Introduction}

At the turn of the century, when theoretical computer scientists began
to study algorithms as mechanisms, due primarily to concerns of
strategic behavior in distributed systems such as the Internet, they
adopted the prevailing analysis paradigm from approximation algorithms
\citep[see the textbook,][]{vaz-01}.  An algorithm's performance should be
measured by the worst-case fraction it obtains of the optimal
performance in the same scenario.  This framework was initially applied to
robust mechanism design in \citet{GHW-01} and
\citet{NR-01} and to robust analysis of equilibria in \citet{KP-99,KP-09} and
\citet{RT-02}.\footnote{The first paper was recognized by the 2021 ACM
SIGecom Test of Time Award and the latter three were recognized by the
2012 G{\"o}del Prize for their significant impact in computer
science.}  There are two key aspects of this analysis framework: it is
a robustness framework and it is scale invariant.

Robustness is a critical concern in computer science.  Algorithms are
often designed for reuse across heterogeneous applications rather than
tailored to a particular input environment.  For example, the same
sorting algorithm may be called as a subroutine by many programs on
very different inputs.  Computer science therefore values algorithms
that provide useful guarantees without being redesigned for each
application.

Scalability is a critical concern in computer science.  For example,
the canonical notion of polynomial runtime as the definition of
computational tractability comes from the fact that it implies that
performance scales well.  If the runtime of an algorithm on an input
of length $n$ is a polynomial, like $T(n) = n^k$ for a constant $k$,
then when the input gets a constant factor $c$ larger, the runtime
gets only a constant factor $c^k$ longer, as $T(c n) = c^k n^k = c^k
T(n)$.  Scaling well is important in computer systems.  For example,
as a computer network grows, it would be problematic if the algorithms
that support it needed to be redesigned.  For example, as a
technological startup company increases its user base, it would be
problematic if key technological infrastructure needed to be
redesigned.

Robustness to scale asks that an algorithm be good both on
small inputs and on large inputs.  Translating this concern to the
economic analysis of auctions, it asks that the performance be good
when the optimal performance is small and that the performance be good
when the optimal performance is large.

In contrast, there are two robustness paradigms that are popular in
the economics literature on auctions, max-min absolute performance and
min-max regret.  The former asks for a mechanism that maximizes its
minimum performance over environments \citep[e.g.,][]{BS-11}; the
latter asks for a mechanism that minimizes its worst-case regret,
i.e., the difference between the optimal performance in the
environment and its performance \citep[e.g.,][]{BS-08}.  When the goal
is robustness to scale, these latter two robustness paradigms give
only trivial answers.  As the scale goes to zero, max-min absolute
performance goes to zero; as the scale goes to infinity, min-max
regret becomes unbounded.  See the detailed example and comparison
in
\Cref{s:robustness-paradigms}.

One economic justification for evaluating robustness as a ratio comes from
comparing it to the standard multiplicative commission structures for
intermediaries \citep{HR-14}.  For example, online marketplaces like
eBay and Booking.com and traditional intermediaries like real estate
agents charge sellers a multiplicative fee on the sale price.  Thus, a
principal deciding whether to adopt such a platform or intermediary evaluates their
cost multiplicatively.  Likewise, it is natural to view a robust
auction as an intermediary that need not do the optimal
thing in every environment but is convenient to use and to evaluate
the non-optimality of the robust mechanism multiplicatively.

For example, since max-min ratio allows analysis of auctions across
scales, it allows for answering of the unqualified question: what
auctions are good?

The position taken in this review article is that when robustness to
scale is desired, the appropriate robustness paradigm is max-min
ratio.  The other robustness paradigms have their applications.  For
max-min absolute performance, for example, \citet{car-17} shows that a
multi-item seller who knows an additive buyer's marginal distributions
for the items, but not the correlation structure, maximizes worst-case
revenue by selling the items separately.
\citet{BBM-17} give a max-min absolute analysis of revenue in a
first-price auction over possible information structures of the
buyers.  \citet{BD-21} characterize the optimal auction for max-min
absolute revenue across information structures and equilibria.  The
min-max regret notion is common in statistical problems such as learning in games \citep[see review article,][]{har-27} and
robust information aggregation, e.g., \citet{ABS-18}.  Notably,
\citet{GHHK+25} show that min-max regret gives a better understanding of
robust aggregation than either max-min ratio or max-min absolute
performance.  Importantly, none of these environments allow arbitrary
scaling of the optimal performance.  See the survey by \citet{car-19}
for an extensive review of these literatures.

This review article will examine in depth a few especially salient
topics in the recent computer science literature on robust mechanism
design.  \Cref{s:robust-analysis-auction-equilibria}
develops robust methods for analyzing the welfare in winner-pays-bid
auctions.  These methods are robust to value distributions,
information structures, and equilibrium selection; moreover, they
decompose the loss into easy-to-analyze contributions from aggregate
uncertainty and unpriced
externalities.  \Cref{s:robust-auction-design} considers revenue
maximization in auctions with i.i.d.\ buyers.  It shows that the
second-price auction guarantees a $1-\nicefrac1n$ fraction of the optimal
revenue, but some other auction formats obtain a higher
fraction of the optimal revenue.  \Cref{s:robust-exchange} develops
robust mechanisms for optimizing gains from trade in bilateral
exchange.  We see that the random proposer mechanism, where either the
buyer or seller is picked at random to propose a take-it-or-leave-it
price to the other agent, obtains at least a $\nicefrac12$ fraction of the
second-best gains from trade and at least a $\nicefrac14$ fraction of the
first-best gains from trade.  Subsequent work improves the first-best
guarantee to $\nicefrac1e\approx 0.37$ \citep{BWZ-26}.  This mechanism obtains these robust
guarantees in every perfect Bayesian equilibrium.
\Cref{s:revelation-gap} begins
with the observation that the revelation principle does not hold for
robust mechanism design, and asks to quantify the robust performance gap
between revelation mechanisms and all mechanisms, including non-revelation
mechanisms.  For example, the random-proposer mechanism for bilateral
exchange obtains $\nicefrac12$ of the second-best gains from trade, but no
revelation mechanism obtains any constant fraction of it.  Thus,
bilateral exchange exhibits a total revelation gap, i.e., one.

This review article, to maintain a consistent framework throughout,
has presented prior-robust analyses as max-min ratios of a robust
mechanism's performance against the second-best performance.  These
ratios are numbers in $[0,1]$ where larger is better.  Many of the
original sources analyze instead the reciprocal quantity, i.e., the
min-max ratio of the second-best performance to the mechanism's
performance.  This is a matter of convention only, and does not change
any qualitative results.  In keeping with the $[0,1]$ convention, the
revelation gap is the fraction of the optimal all-mechanism max-min guarantee
that is lost when attention is restricted to revelation mechanisms.  This is
also a $[0,1]$ ratio with zero representing no gap and one representing a total gap
where there is no non-trivial max-min ratio guarantee for revelation
mechanisms.

\section{Robustness Paradigms}
\label{s:robustness-paradigms}

Robust mechanism design begins with a choice of robustness paradigm.
For the same family of value distributions, max-min absolute
performance, max-min ratio, and min-max regret generally select
different mechanisms because they protect against different losses.
The illustration presented here on the monopolist pricing problem is
originally an example from \citet{HJL-25}.

While the economics literature focuses on max-min absolute performance
and min-max regret, this survey showcases the max-min ratio.  The
analysis below makes the distinction concrete: the
max-min absolute performance is governed by low-revenue settings where
there is little to gain, the min-max regret by high-revenue settings
where there is much to lose, and the max-min ratio by a proportional
guarantee at every scale.

\subsection{Robustness Paradigms}

Let \(\Dists\) be a family of value distributions and \(\Mechs\) a
family of mechanisms.  Let \(\vals\) denote a value profile and
\(\dist\) a value distribution.  For mechanism
\(\mech\in\Mechs\), write \(\mech(\vals)\) for its performance on
realized values \(\vals\), and define its expected performance under distribution
\(\dist\in\Dists\) by
$\mech(\dist)=\expect[\vals\sim\dist]{\mech(\vals)}$.  For each
\(\dist\in\Dists\), let \(\SB_{\dist} = \argmax_{\mech \in \Mechs}
\mech(\dist)\) denote a second-best mechanism tailored to \(\dist\),
and write \(\SB(\dist)=\SB_{\dist}(\dist)\) for its expected
performance.\footnote{The term second best ($\SB$)
contrasts with first best ($\FB$) to distinguish mechanism-design
problems with incentive and participation constraints from optimization
problems without those constraints. Generally, this
review article compares prior-robust mechanisms to second best to isolate the difficulty of robustness from that of incentive and participation constraints.}

\begin{figure}[t]
\centering \begin{subfigure}[t]{0.31\linewidth}
\centering
\begin{tikzpicture}[xscale=2.82,yscale=0.92,every node/.style={font=\scriptsize}]
\fill[gray!35] (0.2,0.2) -- (1,0.2) -- (1,2) -- (0.2,0.4) -- cycle;
\node at (.45,1.6) {$\SB(\dist)$};
\draw[dotted] (0,0) -- (0.2,0.4);
\draw[dotted] (1,2) -- (1.2,2.4);
\draw[dashed] (0.2,0.2) -- (1.2,0.2);
\draw[thick] (0.2,0.4) -- (1,2);
\node at (.70,0.80) {$\mech(\dist)$};
\draw (0,-0.05) -- (0,0.05);
\draw[<->] (0,2.4) -- (0,0) -- (1.2,0);
\end{tikzpicture}
\caption{Max-min absolute.}
\label{fig:robustness-objectives-absolute}
\end{subfigure}\hfill
\begin{subfigure}[t]{0.31\linewidth}
\centering
\begin{tikzpicture}[xscale=2.82,yscale=0.92,every node/.style={font=\scriptsize}]
\fill[gray!35] (0.2,0.2) -- (1,1) -- (1,2) -- (0.2,0.4) -- cycle;
\node at (.4,1.6) {$\SB(\dist)$};
\draw[dotted] (0,0) -- (0.2,0.4);
\draw[dotted] (1,2) -- (1.2,2.4);
\draw[dashed] (0,0) -- (1.2,1.2);
\draw[thick] (0.2,0.4) -- (1,2);
\node at (.69,1.03) {$\mech(\dist)$};
\draw (0,-0.05) -- (0,0.05);
\draw[<->] (0,2.4) -- (0,0) -- (1.2,0);
\end{tikzpicture}
\caption{Max-min ratio.}
\label{fig:robustness-objectives-ratio}
\end{subfigure}\hfill
\begin{subfigure}[t]{0.31\linewidth}
\centering
\begin{tikzpicture}[xscale=2.82,yscale=0.92,every node/.style={font=\scriptsize}]
\fill[gray!35] (0.2,0) -- (0.4,0) -- (1,1.2) -- (1,2) -- (0.2,0.4) -- cycle;
\node at (.4,1.6) {$\SB(\dist)$};
\draw[dotted] (0,0) -- (0.2,0.4);
\draw[dotted] (1,2) -- (1.2,2.4);
\draw[dashed] (0.4,0) -- (1.2,1.6);
\draw[thick] (0.2,0.4) -- (1,2);
\node at (.62,0.86) {$\mech(\dist)$};
\draw (0,-0.05) -- (0,0.05);
\draw[<->] (0,2.4) -- (0,0) -- (1.2,0);
\end{tikzpicture}
\caption{Min-max regret.}
\label{fig:robustness-objectives-regret}
\end{subfigure}
 \caption{Three robustness paradigms impose different requirements.
The horizontal axis orders distributions by second-best performance
\(\SB(\dist)\), the thick line is the distribution-specific optimum,
and \(\mech(\dist)\) is the performance of a fixed mechanism.  In
\subref{fig:robustness-objectives-absolute}, max-min absolute performance
protects an absolute floor; in \subref{fig:robustness-objectives-ratio},
max-min ratio scales with \(\SB(\dist)\); and in
\subref{fig:robustness-objectives-regret}, min-max regret allows a fixed
additive loss.
Adapted from \citet{HJL-25}.}
\label{fig:robustness-objectives-schematic}
\end{figure}

\begin{definition}[Robustness paradigms]
\label{def:robustness-paradigms}
Given a family \(\Mechs\) of mechanisms and a family \(\Dists\) of value
distributions, the three robustness paradigms, illustrated in
\Cref{fig:robustness-objectives-schematic}, are:
\[
\begin{aligned}
  \text{max-min absolute performance:}\quad
    &\sup\nolimits_{\mech\in\Mechs}\inf\nolimits_{\dist\in\Dists}\mech(\dist), \\
  \text{max-min ratio:}\quad
    &\sup\nolimits_{\mech\in\Mechs}\inf\nolimits_{\dist\in\Dists}
      \frac{\mech(\dist)}{\SB(\dist)}, \\
  \text{min-max regret:}\quad
    &\inf\nolimits_{\mech\in\Mechs}\sup\nolimits_{\dist\in\Dists}
      \bigl(\SB(\dist)-\mech(\dist)\bigr).
\end{aligned}
\]
For a fixed mechanism, the three inner criteria are its minimum absolute
performance, minimum ratio, and maximum regret.  A ratio with
\(\SB(\dist)=0\) is one.
\end{definition}

\subsection{Example: Monopoly Pricing}
\label{ex:robust-monopoly-comparison}

Consider a seller aiming to robustly maximize revenue in the sale of a
single item to a single buyer.  The buyer's value \(\val\) is drawn
from an unknown distribution \(\dist\) supported on \([1,\maxval]\)
with \(\maxval \geq e\).\footnote{The assumption \(\maxval \geq e\)
avoids a lower-support boundary case in the min-max regret analysis;
see \citet{BS-08} for the more general construction and analysis.}
Here \(\Dists\) is the family of all such distributions and \(\Mechs\)
is the family of randomized posted-price mechanisms.\footnote{The
restriction to randomized posted-price mechanisms is without loss for
the single-agent problem: \Cref{prop:all-prior-ex-post} reduces the interim 
constraints to their ex-post counterparts, and the resulting mechanism
can be represented as a distribution over posted prices
\citep{mye-81}.}  A randomized posted-price mechanism
\(\RPP_{\pricedist}\) is identified by a distribution \(\pricedist\)
over prices.  The revenue from posting price \(\critval\) is
\(\critval\, \prob[\val \sim \dist]{\val \geq \critval}\).  By
\citet{mye-81}, the optimal mechanism for distribution \(\dist\), denoted
\(\SB_{\dist}\), posts the monopoly price
$\monop=\argmax_{\critval}
\critval\,\prob[\val\sim\dist]{\val\geq\critval}$.

We now solve the robust monopoly-pricing example under each paradigm.
Each problem can be viewed as a zero-sum game between the mechanism
designer and nature.  It suffices to (a) identify the value of each
game and strategies for (b) the mechanism designer and (c) nature that
guarantee the value.  The analysis will also (d) evaluate the
performance of the designer's mechanism in the two other robustness
paradigms.  For each of these games, it suffices to consider nature's
mixed strategy over point-mass value distributions; we denote the
point mass at \(\val\) by \(\pointmass{\val}\), with
\(\SB(\pointmass{\val})=\val\).  The performance of the optimal
mechanism for each paradigm is compared across the three paradigms in
\Cref{tab:robust-monopoly-objectives}.

\begin{table}
\centering
\caption{Comparison of the robustness paradigms in the
  monopoly-pricing example.  Rows correspond to the optimal mechanism
  in each paradigm; columns evaluate that mechanism in each paradigm.
  The optimal for each is boxed.  Adapted from \citet{HJL-25}.}
\label{tab:robust-monopoly-objectives}
\begin{tabular}{lccc}
\hline
mechanism & minimum revenue & minimum ratio & maximum regret \\
\hline
max-min absolute & \fbox{\(1\)} & \(\frac{1}{\maxval}\) & \(\maxval-1\) \\
max-min ratio & \(\frac{1}{1+\ln \maxval}\) &
  \fbox{\(\frac{1}{1+\ln \maxval}\)} &
  \(\maxval-\frac{\maxval}{1+\ln \maxval}\) \\
min-max regret & \(0\) & \(0\) & \fbox{\(\frac{\maxval}{e}\)} \\
\hline
\end{tabular}
\end{table}

\begin{proposition}[\citealp{HJL-25}]
\label{prop:max-min-absolute-monopoly-pricing}
In the robust monopoly-pricing example, the randomized posted-pricing
rule selected by max-min absolute performance posts price \(1\).  Its
minimum revenue is \(1\), its minimum ratio is \(\nicefrac{1}{\maxval}\), and its
maximum regret is \(\maxval-1\).
\end{proposition}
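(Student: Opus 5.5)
Following the paper's template, I will (a) identify the game value as \(1\), (b) give a designer strategy guaranteeing it, (c) give a nature strategy capping it, and (d) evaluate the chosen mechanism under the other two paradigms.

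\textbf{Value and designer strategy.} The value of the max-min absolute game is \(1\). For the designer, posting the deterministic price \(1\) sells with probability one under every \(\dist\) supported on \([1,\maxval]\), since \(\prob[\val\sim\dist]{\val\geq 1}=1\). Its revenue is therefore exactly \(1\) for every distribution, so its minimum revenue is \(1\).

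\textbf{Nature strategy and uniqueness.} Nature plays the point mass \(\pointmass{1}\). Any randomized posted-price mechanism \(\RPP_{\pricedist}\) then earns \(\pricedist(\{1\})\cdot 1\le 1\). So no mechanism beats \(1\), and the value of the game is \(1\). This step also shows that price \(1\) is the selected rule. To guarantee \(1\) against \(\pointmass{1}\), a mechanism needs \(\pricedist(\{1\})=1\), because prices above \(1\) earn nothing and prices below \(1\) earn less than \(1\). So \(\pricedist\) must be the point mass at price \(1\); I expect this to be the only step needing any care.

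\textbf{Other paradigms.} For every \(\dist\) the revenue is \(1\) and \(\SB(\dist)\in[1,\maxval]\).
\begin{itemize}
\item The ratio is \(1/\SB(\dist)\). It is minimized by maximizing \(\SB(\dist)\), and \(\SB(\dist)\le\maxval\) because the monopoly revenue \(\critval\,\prob{\val\geq\critval}\le \maxval\) for any price. The point mass \(\pointmass{\maxval}\) attains \(\SB=\maxval\), so the minimum ratio is \(\nicefrac1\maxval\).
\item The regret is \(\SB(\dist)-1\). By the same bound it is at most \(\maxval-1\), and \(\pointmass{\maxval}\) attains this, so the maximum regret is \(\maxval-1\).
\end{itemize}
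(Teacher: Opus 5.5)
Your proof is correct and follows the paper's (a)--(d) template: price \(1\) guarantees revenue \(1\), nature's \(\pointmass{1}\) caps every mechanism at \(1\), and the point mass \(\pointmass{\maxval}\) gives the worst ratio \(\nicefrac1\maxval\) and worst regret \(\maxval-1\). Two of your steps only make explicit what the paper leaves implicit: the uniqueness argument that \(\pricedist\) must be the point mass at price \(1\), and the bound \(\SB(\dist)\le\maxval\) over all \(\dist\) rather than only point masses. One small correction: if prices below \(1\) are allowed, the revenue against \(\pointmass{1}\) is \(\int_{\critval\le1}\critval\,\dd\pricedist(\critval)\) rather than \(\pricedist(\{1\})\); this is still at most \(1\), and your uniqueness sentence already handles that case.
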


\begin{proof}
(a) In this game between the mechanism designer and nature, the payoff is
  the absolute performance.  The value of the game is \(1\).  We verify this
  value by giving strategies for the mechanism designer and nature that
  guarantee a payoff of \(1\).

  (b) The mechanism designer guarantees a payoff of one by posting a price
  of $\critval = 1$.

  (c) Nature guarantees a payoff of at most one by playing the value
  distribution $\pointmass{1}$.

  (d) Against \(\pointmass{\val}\), the tailored optimum is \(\val\),
  while the revenue from posting price \(1\) is \(1\).  The ratio is
  therefore \(\nicefrac{1}{\val}\), and the regret is \(\val-1\).  The worst case
  is $\val = \maxval$.
\end{proof}

\begin{proposition}[\citealp{BS-08,HJL-25}]
\label{prop:min-max-regret-monopoly-pricing}
In the robust monopoly-pricing example, the randomized posted-pricing rule
selected by min-max regret has CDF
\[
  \pricedist(\critval)=
  \begin{cases}
    0, & \critval\in[1,\nicefrac{\maxval}{e}),\\
    1+\ln(\critval/\maxval), & \critval\in[\nicefrac{\maxval}{e},\maxval].
  \end{cases}
\]
Its minimum revenue is \(0\), its minimum ratio is \(0\), and its maximum regret
is \(\nicefrac{\maxval}{e}\).
\end{proposition}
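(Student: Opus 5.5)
I would follow the same four-part template used for \Cref{prop:max-min-absolute-monopoly-pricing}, viewing min-max regret as a zero-sum game in which the payoff to nature is the regret \(\SB(\dist)-\RPP_{\pricedist}(\dist)\). The plan is: (a) assert the value is \(\nicefrac{\maxval}{e}\); (b) show the stated \(\pricedist\) guarantees regret at most \(\nicefrac{\maxval}{e}\) against every \(\dist\); (c) exhibit a mixed strategy for nature over point masses that forces regret at least \(\nicefrac{\maxval}{e}\) against every randomized posted price; and (d) evaluate \(\pricedist\) under the other two criteria.

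For (b), I would first compute the regret against a point mass \(\pointmass{\val}\). The revenue is \(\int_{[1,\val]}\critval\,\dd\pricedist(\critval)\), so the regret is \(\val-\int_{\nicefrac{\maxval}{e}}^{\val}\critval\cdot\nicefrac{1}{\critval}\,\dd\critval\). For \(\val\geq\nicefrac{\maxval}{e}\) this equals \(\val-(\val-\nicefrac{\maxval}{e})=\nicefrac{\maxval}{e}\); for \(\val<\nicefrac{\maxval}{e}\) the regret is \(\val<\nicefrac{\maxval}{e}\). The density \(\nicefrac1\critval\) is chosen exactly to equalize regret, and \(\pricedist\) is a valid CDF because \(\pricedist(\maxval)=1\) and \(\pricedist(\nicefrac{\maxval}{e})=0\), which uses \(\maxval\geq e\). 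The key step is extending this bound to general \(\dist\). Let \(\monop\) be the monopoly price of \(\dist\) and \(\quant=\prob[\val\sim\dist]{\val\geq\monop}\). Then the regret is at most \(\monop\quant-\expect[\val\sim\dist]{\text{rev from }\pricedist\text{ on }\val}\). Since the revenue on \(\val\) is nondecreasing in \(\val\), I would lower bound the revenue by \(\quant\) times the point-mass revenue at \(\monop\), giving regret at most \(\quant(\monop-\mathrm{rev}(\monop))\leq\quant\cdot\nicefrac{\maxval}{e}\leq\nicefrac{\maxval}{e}\). This reduction is the step I expect to need the most care, though monotonicity should make it work.

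For (c), nature draws \(\val\) from a distribution with \(\prob{\val\geq\critval}=\nicefrac{\maxval}{(e\critval)}\) on \([\nicefrac{\maxval}{e},\maxval]\), with an atom of mass \(\nicefrac1e\) at \(\maxval\). Every price in that range then earns the same expected revenue \(\nicefrac{\maxval}{e}\), and prices below \(\nicefrac{\maxval}{e}\) earn less. So any randomized posted price earns at most \(\nicefrac{\maxval}{e}\) against this mixture. The expected second best of the mixture over point masses is \(\expect{\val}=\nicefrac{\maxval}{e}+\int_{\nicefrac{\maxval}{e}}^{\maxval}\nicefrac{\maxval}{(e\critval)}\,\dd\critval=\nicefrac{2\maxval}{e}\). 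Hence the expected regret is at least \(\nicefrac{2\maxval}{e}-\nicefrac{\maxval}{e}=\nicefrac{\maxval}{e}\), and the game value is confirmed.

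For (d), the point mass \(\pointmass{1}\) lies below the support of \(\pricedist\) because \(\nicefrac{\maxval}{e}\geq1\) and the CDF places no atom at \(\nicefrac{\maxval}{e}\). When \(\maxval>e\), every price exceeds \(1\) almost surely, so the revenue is \(0\), the minimum revenue is \(0\), and the ratio is \(0\). In the edge case \(\maxval=e\), I would instead take point masses \(\pointmass{\val}\) with \(\val\downarrow1\): the revenue \(\val-1\) divided by \(\val\) tends to \(0\), so the infimum is still \(0\). The maximum regret \(\nicefrac{\maxval}{e}\) follows from (b) and (c).
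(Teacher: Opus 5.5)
Your proposal is correct and follows the paper's four-part zero-sum-game argument essentially verbatim: the same equalizing density $\nicefrac{1}{\critval}$ on $[\nicefrac{\maxval}{e},\maxval]$ and the same truncated equal-revenue strategy for nature, with expected welfare $\nicefrac{2\maxval}{e}$. The differences are cosmetic. You explicitly extend (b) from point masses to general $\dist$ via monotonicity of the revenue in $\val$, whereas the paper relies on its earlier remark that nature may be restricted to mixtures over point masses, which holds since $\SB(\dist)\leq\expect[\val\sim\dist]{\val}$ and revenue is linear in $\dist$. Your separate $\maxval=e$ case in (d) is unnecessary: $\pricedist$ has no atom at $1$, so the revenue against $\pointmass{1}$ is already $0$.
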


\begin{proof}
(a) In this game between the mechanism designer and nature, the payoff is
  the regret.  The value of the game is
  \(\regretvalue=\nicefrac{\maxval}{e}\).  We verify this value by giving strategies for
  the mechanism designer and nature that guarantee a payoff of
  \(\regretvalue\).

(b) The designer's pricing rule has density \(\nicefrac{1}{\critval}\) on
\(\critval \in [\regretvalue,\maxval]\).  Against a point mass at
\(\val<\regretvalue\), its revenue is zero.  For
\(\val\ge\regretvalue\), its revenue is
\(\int_{\regretvalue}^{\val}\critval\,(\nicefrac{1}{\critval})\,\dd\critval
=\val-\regretvalue\).  The payoff is therefore
\(\val-\positivepart{\val-\regretvalue}=\min\{\val,\regretvalue\}\), which is
at most \(\regretvalue\) for every \(\val\).

(c) Nature plays a distribution over \(\pointmass{\val}\), with \(\val\)
drawn from the truncated equal-revenue distribution \(Q\) on
$[\regretvalue,\maxval]$, with CDF \(Q(\val)=1-\regretvalue/\val\)
for \(\val \in [\regretvalue,\maxval)\) and a point mass at $\maxval$ of
\(\regretvalue/\maxval\).  Every price
\(\critval\in[\regretvalue,\maxval]\) earns revenue
\(\critval\,\prob[\val\sim Q]{\val\ge\critval}
=\critval\, \regretvalue / \critval=\regretvalue\), while every price
below \(\regretvalue\) earns less than \(\regretvalue\).  The payoff at
each point mass is the welfare minus the revenue.  The expected
welfare is \(\expect[\val\sim Q]{\val}
=\regretvalue+\int_{\regretvalue}^{\maxval}(\regretvalue/\val)\dd\val
=2\regretvalue\); thus, nature guarantees a payoff of
\(2\regretvalue-\regretvalue=\regretvalue\).

(d) The designer's pricing distribution earns $0$ against point masses
at or below \(\regretvalue\), so its minimum absolute revenue and minimum
ratio are both $0$.
\end{proof}

\begin{proposition}[\citealp{HR-14,HJL-25}]
\label{prop:max-min-ratio-monopoly-pricing}
In the robust monopoly-pricing example, the randomized posted-pricing rule
selected by max-min ratio has CDF
\[
  \pricedist(\critval)=\frac{1+\ln \critval}{1+\ln \maxval},
  \qquad \critval\in[1,\maxval].
\]
Its minimum revenue and minimum ratio are both
\(1/(1+\ln \maxval)\), and its maximum regret is
\(\maxval-\maxval/(1+\ln \maxval)\).
\end{proposition}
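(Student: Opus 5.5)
I will follow the same four-step template as the proofs of \Cref{prop:max-min-absolute-monopoly-pricing} and \Cref{prop:min-max-regret-monopoly-pricing}, viewing the problem as a zero-sum game whose payoff is the ratio \(\mech(\dist)/\SB(\dist)\). (a) The claimed value of the game is \(\apxratio = 1/(1+\ln\maxval)\). (b) For the designer, I first describe \(\pricedist\): it has an atom of mass \(\apxratio\) at price \(1\) and density \(\apxratio/\critval\) on \((1,\maxval]\). This is consistent with the stated CDF, since \(\pricedist(1)=\apxratio\) and \(\pricedist(\maxval)=1\). Against a point mass \(\pointmass{\val}\) with \(\val\in[1,\maxval]\), the revenue is
\[
\apxratio\cdot 1+\int_1^{\val}\critval\cdot\frac{\apxratio}{\critval}\,\dd\critval=\apxratio\,\val,
\]
so the ratio is exactly \(\apxratio\) for every point mass.

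To extend the designer's guarantee to general \(\dist\), I argue as follows. For a fixed price distribution, revenue is linear in \(\dist\), so \(\RPP_{\pricedist}(\dist)=\expect[\val\sim\dist]{\apxratio\,\val}=\apxratio\,\expect{\val}\). Moreover \(\SB(\dist)=\max_{\critval}\critval\,\prob{\val\ge\critval}\le\expect{\val}\). Together these give ratio at least \(\apxratio\) for every \(\dist\in\Dists\). This also yields that the minimum revenue is at least \(\apxratio\cdot 1\), since \(\expect{\val}\ge 1\), and the point mass \(\pointmass{1}\) attains it. So the minimum revenue equals \(\apxratio\) as well.

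(c) For nature, I would play the equal-revenue distribution \(E\) on \([1,\maxval]\). It has CDF \(1-1/\val\) on \([1,\maxval)\) and an atom of mass \(1/\maxval\) at \(\maxval\). Every price in \([1,\maxval]\) earns revenue exactly \(1\), so every randomized posted price earns revenue \(1\) under \(E\). The second-best revenue is \(\SB(E)=1\), which makes it tempting to use \(E\) itself, but that gives ratio \(1\). The fix is to use the point-mass formulation that the paper highlights. Nature mixes over point masses \(\pointmass{\val}\), and the payoff must be averaged as ratios, not as a ratio of averages. So nature's mixture should have density proportional to \(1/\val\) times the equal-revenue weights, chosen so that \(\expect{\text{revenue}(\critval,\val)/\val}\) is constant in \(\critval\).

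I would choose the weights \(\mu\) to make \(\critval\,\mu([\critval,\maxval])\) constant. Concretely, \(\mu\) has density \(c/\val^2\) on \([1,\maxval)\) and an atom \(c/\maxval\) at \(\maxval\). This makes the ratio payoff of price \(\critval\) equal to \(c\,\critval\cdot\) (something). I then compute the normalization: a point-mass ratio payoff of \(\critval\) is \(\critval\,\mathbbm{1}[\val\ge\critval]/\val\), and its \(\mu\)-expectation is \(\critval\int_{\critval}^{\maxval}\dd\mu(\val)/\val\). This is constant in \(\critval\) if \(\mu\) has density \(c/\val\) on \([1,\maxval)\) and an atom \(c\) at \(\maxval\). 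Total mass \(1\) then gives \(c(1+\ln\maxval)=1\), and every price earns ratio exactly \(c=\apxratio\). This matching computation is the main obstacle: nature's strategy must equalize the designer's ratio payoff across prices.

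(d) For the regret, against \(\pointmass{\val}\) the regret is \(\val-\apxratio\val=(1-\apxratio)\val\). This is maximized at \(\val=\maxval\), giving \(\maxval-\maxval/(1+\ln\maxval)\). For general \(\dist\), regret is at most \(\expect{\val}-\apxratio\,\expect{\val}\le(1-\apxratio)\maxval\), so point masses are the worst case.
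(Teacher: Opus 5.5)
Your proposal is correct and follows the paper's proof essentially step for step: the same designer distribution (atom $\apxratio$ at $1$, density $\apxratio/\critval$ above it), the same nature mixture over point masses (density $\apxratio/\val$ on $[1,\maxval)$ plus an atom $\apxratio$ at $\maxval$), and the same evaluation of revenue and regret at $\val=1$ and $\val=\maxval$. Your linearity argument $\RPP_{\pricedist}(\dist)=\apxratio\,\expect{\val}\geq\apxratio\,\SB(\dist)$ usefully makes explicit the reduction to point masses that the paper only asserts. You should delete the intermediate sentence in part (c) proposing density $c/\val^2$ with atom $c/\maxval$: that weighting equalizes revenue rather than the ratio payoff, and your next computation correctly replaces it.
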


\begin{proof}
(a) In this game between the mechanism designer and nature, the payoff is
  the ratio.  The value of the game is
\(\apxratio=1/(1+\ln \maxval)\).  We verify this value by giving
strategies for the mechanism designer and nature that guarantee a payoff
of \(\apxratio\).

(b) The designer's pricing rule places probability \(\apxratio\) on price \(1\)
and has density \(\nicefrac {\apxratio}{\critval}\) above \(1\).  Against a point mass at
\(\val\), its revenue is
\(\apxratio+\int_1^{\val}\critval\,(\nicefrac{\apxratio}{\critval})\,\dd\critval
=\apxratio\,\val\).  Thus, the payoff is \(\apxratio\) for every \(\val\).

(c) Nature plays \(\pointmass{\val}\), where \(\val\) is drawn from a
distribution with density \(\nicefrac{\apxratio}{\val}\) on \([1,\maxval)\) and mass
\(\apxratio\) at \(\maxval\).  Against \(\pointmass{\val}\), price
\(\critval\) has payoff \(\nicefrac{\critval}{\val}\) when
\(\val\ge\critval\).  Thus, every price
\(\critval\in[1,\maxval]\) gives expected payoff
\[
\int_{\critval}^{\maxval}
\frac{\critval}{\val}\,\frac{\apxratio}{\val}\,\dd\val
+\apxratio\,\frac{\critval}{\maxval}
=\apxratio\,\critval
\left(\frac{1}{\critval}-\frac{1}{\maxval}\right)
+\apxratio\,\frac{\critval}{\maxval}=\apxratio.
\]
Thus, nature guarantees a payoff of \(\apxratio\).

(d) Evaluating the designer's price distribution according to minimum
absolute revenue and maximum regret gives \(\apxratio\), attained at
\(\val=1\), and \((1-\apxratio)\,\maxval\), attained at \(\val=\maxval\),
respectively.
\end{proof}

\subsection{Interpretations}

The three paradigms differ in which distributions are worst cases.
Max-min absolute performance has its worst case where there is not much to
gain, i.e., distribution $\pointmass{\val}$ with $\val = 1$.  Min-max regret obtains
its worst performance where there is the most to lose, i.e.,
distribution $\pointmass{\val}$ with $\val=\maxval$.  Max-min ratio
obtains the same worst case everywhere on $[1,\maxval]$.

As the range \([1,\maxval]\) expands, the endpoint normalizations used
by max-min absolute performance and min-max regret become increasingly
concentrated near opposite ends of the range.  The applications highlighted in
the remainder of this survey place no positive lower bound or finite upper
bound on scale.  Scaling values toward zero forces max-min absolute performance
to zero, while scaling them toward infinity makes min-max regret unbounded.
The ratio criterion is therefore the only one of the three paradigms that gives
a nontrivial guarantee.  This is the sense in which the
max-min ratio paradigm is robust to scale \citep{HJL-25}.

A standard critique of performance ratios in economics is that
game-theoretic behavior is invariant to adding a constant to a
player's payoffs, whereas a ratio depends on the location of zero \citep[cf.][]{GS-24}.  In
the applications considered here, zero has a canonical interpretation
as the status quo: no allocation in the welfare mazimization, no revenue
in revenue maximization, and no trade in bilateral exchange.  This status quo
anchors the ratio's normalization.  

The next three sections apply this paradigm to three
objectives.  \Cref{s:robust-analysis-auction-equilibria} develops
robust welfare analysis for auction equilibria.
\Cref{s:robust-auction-design} studies robust revenue guarantees
without knowledge of the value distribution, with a spotlight on the sale of a
single item to one of two agents.
\Cref{s:robust-exchange} studies robust gains from trade in bilateral
exchange.
 
\section{Robust Analysis of Auction Equilibria}
\label{s:robust-analysis-auction-equilibria}

The first welfare theorem applies to markets without aggregate
uncertainty and with no unpriced externalities and says that
equilibrium prices support efficient outcomes \citep[see the
  textbook,][]{MWG-95}.  Auctions, on the other hand, often have both
aggregate uncertainty and unpriced externalities and both of these
lead to inefficiencies.

We present a simple analysis of robust welfare in
winner-pays-bid mechanisms, like the classic first-price auction, that
decomposes potential losses into ones of aggregate uncertainty and
ones of unpriced externalities.  For example, in Bayes--Nash equilibria
in auctions, there is generally aggregate uncertainty.  For example,
the winner-pays-bid highest-bids-win mechanisms in matching markets
do not have unpriced externalities while they do in combinatorial
auctions.  We will see that winner-determination algorithms, here
highest-bids-win, can impact the severity of unpriced externalities.
The main treatment will follow \citet{HHT-14,HHT-23}; however, many of
the ideas follow from common themes in the computer science literature on the
price of anarchy of auctions (further discussion is in
\Cref{s:robust-analysis-auction-equilibria-notes}).

As a starting motivation, notice that Bayes--Nash equilibria in the
first-price auction can be quite complex, even with asymmetric
uniform distributions (see \Cref{fig:asymmetric-first-price}).  The
robust approach can be used to quantify how bad equilibria can get in
various formats of auctions and understand what features enable good
performance.

\begin{figure}
  \centering
  \begin{subfigure}[t]{0.46\linewidth}
  \centering
  \begin{tikzpicture}[x=.80in,y=.80in]
    \draw[->] (0,0) -- (2.125,0) node[right] {$\val$};
    \draw[->] (0,0) -- (0,1.125) node[above] {$\bid$};
    \draw[dashed] (0,{2/3}) -- (2,{2/3});
    \draw[line width=1pt,domain=.02:1,samples=100,smooth]
      plot (\x,{(2/(3*\x))*(2-sqrt(4-3*\x*\x))});
    \draw[line width=1pt,domain=.02:2,samples=100,smooth]
      plot (\x,{(2/(3*\x))*(-2+sqrt(4+3*\x*\x))});
    \node at (.55,.80) {$\strati[1](\val)$};
    \node at (1.70,.43) {$\strati[2](\val)$};
    \foreach \x in {1,2} {
      \draw (\x,0) ++(0,-2.5pt) -- ++(0,5pt);
      \node[below=3pt] at (\x,0) {$\x$};
    }
    \draw (0,1) ++(-2.5pt,0) -- ++(5pt,0);
    \node[left=3pt] at (0,1) {$1$};
    \draw (0,{2/3}) ++(-2.5pt,0) -- ++(5pt,0);
    \node[left=3pt] at (0,{2/3}) {$\nicefrac23$};
    \node[below=3pt] at (0,0) {$0$};
  \end{tikzpicture}
  \caption{Equilibrium bids.}
  \label{fig:asymmetric-first-price-bids}
\end{subfigure}
\hfill
\begin{subfigure}[t]{0.46\linewidth}
  \centering
  \begin{tikzpicture}[x=.80in,y=.80in]
    \fill[gray!30,domain=.02:1,samples=100,smooth]
      plot ({1/sqrt(1/(\x*\x)-.75)},\x) -- (1,1) -- cycle;
    \draw[line width=1pt,domain=.02:1,samples=100,smooth]
      plot ({1/sqrt(1/(\x*\x)-.75)},\x);
    \draw[->] (0,0) -- (2.125,0) node[right] {$\vali[2]$};
    \draw[->] (0,0) -- (0,1.125) node[above] {$\vali[1]$};
    \draw[dashed] (0,0)--(1,1);
    \node at (.30,.70) {1 wins};
    \node at (1.35,.55) {2 wins};
    \foreach \x in {1,2} {
      \draw (\x,0) ++(0,-2.5pt) -- ++(0,5pt);
      \node[below=3pt] at (\x,0) {$\x$};
    }
    \draw (0,1) ++(-2.5pt,0) -- ++(5pt,0);
    \node[left=3pt] at (0,1) {$1$};
    \node[below=3pt] at (0,0) {$0$};
  \end{tikzpicture}
  \caption{Equilibrium allocation.}
  \label{fig:asymmetric-first-price-allocation}
\end{subfigure}
   \caption{First-price Bayes--Nash equilibrium with
    \(\vali[1] \sim \disti[1]=U[0,1]\) and \(\vali[2] \sim \disti[2]=U[0,2]\) from \citet{GLS-67}.
    Panel~\subref{fig:asymmetric-first-price-bids} gives the equilibrium bid
    strategies.  In panel~\subref{fig:asymmetric-first-price-allocation}, the
    gray region between the equilibrium boundary and the diagonal is
    allocated inefficiently.  Adapted from \citet{har-13}.}
  \label{fig:asymmetric-first-price}
\end{figure}

\subsection{Competitive Efficiency: Quantifying Inefficiencies from Unpriced Externalities}

In this section we will give a worst-case analysis of pure Nash
equilibrium welfare of the first-price auction and the winner-pays-bid
highest-bids-win combinatorial auction.  Focusing on deterministic
equilibria will showcase the competitive efficiency
definition as capturing the inefficiencies caused by unpriced
externalities.

Let \(\feasibles\) be the set of feasible allocations over $n$ agents and write
\(\epballocs(\bids)\) for the ex post allocation at bids \(\bids\).
For any nonnegative weight profile \(\zees\), the optimal feasible
objective is
\[
  \OPT(\zees)
  =\max_{\allocs\in\feasibles}\sum\nolimits_i \zeei\,\alloci.
  \]
  In a first-price auction, $\feasibles = \{\allocs \in [0,1]^n : \sum_i \alloci \leq 1 \}$ and $\OPT(\zees) = \max_i \zeei$.
  
Let \(\SB_{\star}\) denote the welfare-optimal Vickrey--Clarke--Groves
mechanism \citep{vic-61,cla-71, gro-73}.  It does not depend
on \(\dist\) and implements the pointwise welfare optimum, so second-best and first-best coincide:
\[
  \SB_{\star}(\vals)=\FB(\vals)=\OPT(\vals).
\]
  Three quantities that will be important in the subsequent analysis are
a bidder \(i\)'s critical bid, the realized revenue, and realized welfare:
\begin{align*}
  \critbidi(\bidsmi)
  &=\inf\{\bidi' : \epballoci(\bidi',\bidsmi)=1\},\\
  \Rev{\bids}
  &=\sum\nolimits_i\bidi\epballoci(\bids),\\
  \Wel(\vals,\bids)
  &=\sum\nolimits_i\vali\epballoci(\bids).
\end{align*}

The competition that a bidder $i$ faces, given the bids of others
$\bidsmi$ and the bid-allocation rule of the auction $\epballocs$, can
be summarized by their critical bid $\critbidi(\bidsmi)$.

In a first-price auction the critical bid is the highest other bid.
Consider the first-price auction with value profile $\vals =
(10,20,30)$, which we will use below as a running example.  If the bid
profile is $\bids = (5,10,15)$, the critical bid profile is $\critbids
= (15,15,10)$.

Competitive efficiency (\Cref{def:competitive-efficiency}) describes
the extent to which the aggregate competition drives revenue.  It is a
property of the mechanism, rather than of an equilibrium or a value
distribution.

\begin{definition}[Competitive efficiency]
\label{def:competitive-efficiency}
A winner-pays-bid mechanism is \(\CompEff \in [0,1]\) competitively efficient if,
for every bid profile \(\bids\),
\[
  \Rev{\bids}\geq\CompEff\,\OPT(\critbids).
\]
\end{definition}

In the example, the revenue on $\bids =
(5,10,15)$ is $15$ and the aggregate competition is $\OPT(\critbids) =
15$; thus, $\CompEff = 1$.  In fact, $\CompEff = 1$ in general for the
first-price auction.

\begin{proposition}[\citealp{HHT-23}]
\label{prop:single-item-competitive-efficiency}
The first-price auction is \(1\)-competitively efficient.
\end{proposition}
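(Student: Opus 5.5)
The plan is to verify the inequality of \Cref{def:competitive-efficiency} with \(\CompEff=1\) directly, for an arbitrary bid profile \(\bids\). The approach is to compute both sides explicitly for the first-price auction. Throughout, \(\feasibles=\{\allocs\in[0,1]^n:\sum_i\alloci\le 1\}\) and \(\OPT(\zees)=\max_i\zeei\).

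The first step identifies each critical bid. Under highest-bid-wins, bidder \(i\) is allocated whenever its bid exceeds the others' bids. It is not allocated when its bid is below the highest other bid (up to tie-breaking), so \(\critbidi(\bidsmi)=\max_{j\neq i}\bidi[j]\).

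The second step computes the right-hand side. Let \(i^\star\) be a highest bidder, with bid \(\bidi[i^\star]=\max_j\bidi[j]\).
\begin{itemize}
\item For every \(i\neq i^\star\), \(\critbidi\le \bidi[i^\star]\), since \(\bidi[i^\star]\) is among the other bids faced by \(i\).
\item For \(i^\star\) itself, \(\critbidi[i^\star]\) is the second-highest bid, which is at most \(\bidi[i^\star]\).
\end{itemize}
Hence \(\OPT(\critbids)=\max_i\critbidi\le\max_j\bidi[j]\). In fact this holds with equality whenever \(n\ge2\), as the running example illustrates.

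The third step computes the left-hand side. The first-price auction allocates the item to a highest bidder, and that winner pays its bid. Therefore \(\Rev{\bids}=\sum_i\bidi\epballoci(\bids)=\max_j\bidi[j]\). Combining the second and third steps gives \(\Rev{\bids}\ge\OPT(\critbids)\), which is the claim.

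The main obstacle is bookkeeping at the edges. Ties and tie-breaking must still award the item to some highest bidder. If ties are broken by randomization, then \(\epballocs\) is fractional but still supported on highest bidders, so the revenue is still \(\max_j\bidi[j]\). The case of all-zero bids, or the possibility that the item is not sold, needs a convention so that the revenue still equals the highest bid. The definition of \(\critbidi\) as an infimum handles ties automatically.
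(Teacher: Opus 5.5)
Your proposal is correct and is essentially the paper's argument: critical bids are $\critbidi=\max_{j\neq i}\bidi[j]$, each is at most the highest bid so $\OPT(\critbids)\le\max_j\bidi[j]$, and the highest bid is exactly $\Rev{\bids}$. The paper does the middle step in one line over the feasible set: it replaces every $\max_{j\neq i}\bidi[j]$ by $\max_j\bidi[j]$ and factors it out using $\sum_i\alloci\le 1$. Your case split on $i^\star$ and your equality remark are therefore correct but not needed.
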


\begin{proof}
 The argument is the following sequence:
  \begin{alignat*}{3}
\OPT(\critbids)
&= \max\nolimits_{\allocs \in \feasibles}
   \sum\nolimits_i \critbidi\alloci
&&= \max\nolimits_{\allocs \in \feasibles}
   \sum\nolimits_i
   [\max\nolimits_{j\neq i}\bidi[j]]\,\alloci \\
&\leq \max\nolimits_{\allocs \in \feasibles}
   \sum\nolimits_i
   [\max\nolimits_j\bidi[j]]\,\alloci
&&= [\max\nolimits_j\bidi[j]]
   \max\nolimits_{\allocs \in \feasibles}
   \sum\nolimits_i\alloci
 &= \Rev{\bids} \cdot 1.
\end{alignat*}
Start with the definition of optimal welfare according to critical
bids.  The second step plugs in the definition of critical bids for
the first-price auction. The inequality replaces this bidder-specific
maximum with the highest bid overall.  The fourth step factors this
common highest bid out of the sum.  Conclude by observing that a
feasible allocation assigns at most one item, and the highest bid is
the first-price auction's revenue.
\end{proof}

\Cref{prop:single-item-competitive-efficiency}
reflects the absence of unpriced externalities: serving one bidder
excludes at most one other bidder, and that competition appears
one-for-one in the payment.

In a pure Nash equilibrium there is no aggregate uncertainty.  Bidders
bid with full knowledge of the outcome they will receive.  Individual
efficiency quantifies the fraction of a bidder's value that is
accounted for by the sum of the bidder's utility and their critical bid.
If their critical bid is high, then their utility can be low.  With no
aggregate uncertainty, winner-pays-bid mechanisms are individually
efficient.

\begin{definition}[Individual efficiency with deterministic bids]
\label{def:deterministic-individual-efficiency}
A deterministic bidding outcome is \(\IndEff\)-individually efficient
if every bidder satisfies
\[
  \utili+\critbidi\geq\IndEff\,\vali.
\]
\end{definition}

Returning to the example of $\vals = (10,20,30)$ in the first-price
auction, bids $\bids =(10,20,20)$ are a pure Nash equilibrium (if ties
are broken for bidder 3).  Critical bids are $\critbids = (20,20,20)$
and the utilities are $\utils = (0,0,10)$.  Bidder 3 satisfies
$\utili[3] + \critbidi[3] = 10 + 20 = 30 = \vali[3]$.  Bidders 1 and
2 also satisfy their respective inequalities; thus, $\IndEff = 1$.

Pure Nash equilibria in first-price auctions need not exist for arbitrary
deterministic tie-breaking, so
\Cref{prop:deterministic-individual-efficiency} is stated for
approximate pure Nash equilibria.  A pure \((1-\epsilon)\)-Nash
equilibrium is a bid profile at which each bidder's utility is at
least a \(1-\epsilon\) fraction of the utility from her best response.

\begin{proposition}[\citealp{HHT-23}]
\label{prop:deterministic-individual-efficiency}
A pure \((1-\epsilon)\)-Nash equilibrium of a winner-pays-bid mechanism
is \((1-\epsilon)\)-individually efficient.
\end{proposition}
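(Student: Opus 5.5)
The plan is a deviation argument: for each bidder \(i\), compare her equilibrium utility with the utility of a single well-chosen deviation. Fix a pure \((1-\epsilon)\)-Nash equilibrium \(\bids\) of a winner-pays-bid mechanism. Fix also a bidder \(i\) with critical bid \(\critbidi=\critbidi(\bidsmi)\). I will show \(\utili+\critbidi\ge(1-\epsilon)\vali\), splitting on whether \(\critbidi\ge\vali\).

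\emph{Case \(\critbidi\geq\vali\).} Utilities in an approximate equilibrium are nonnegative, since bidding zero (or not winning) guarantees utility zero. Hence
\[
\utili+\critbidi\geq\critbidi\geq\vali\geq(1-\epsilon)\vali .
\]

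\emph{Case \(\critbidi<\vali\).} Consider deviations to bids \(\bidi'\) slightly above the critical bid.

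\begin{itemize}
\item By the definition of \(\critbidi\) as an infimum, and assuming the allocation is monotone in the own bid (the standard assumption for these mechanisms), every \(\bidi'>\critbidi\) wins.
\item Because the mechanism is winner-pays-bid, such a deviation pays \(\bidi'\) and yields utility \(\vali-\bidi'\).
\item The best-response utility is therefore at least \(\sup_{\bidi'>\critbidi}(\vali-\bidi')=\vali-\critbidi\).
\item The \((1-\epsilon)\)-Nash condition then gives \(\utili\geq(1-\epsilon)(\vali-\critbidi)\).
\item Rearranging,
\[
\utili+\critbidi\geq(1-\epsilon)\vali+\epsilon\,\critbidi\geq(1-\epsilon)\vali,
\]
using \(\critbidi\ge0\).
\end{itemize}

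The main subtlety is the limiting step. The best response may not be attained because of tie-breaking at exactly \(\critbidi\). So the \((1-\epsilon)\) comparison must be made against the supremum of deviation utilities, with bids \(\bidi'\downarrow\critbidi\). If the definition demands comparison with an attained best response, I would apply the inequality to each \(\bidi'>\critbidi\) and take the limit.

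Monotonicity of the allocation in the own bid must also be noted, or replaced by the observation that some bid just above the infimum wins. The latter already suffices, since only a sequence of winning bids converging to \(\critbidi\) is needed. I expect this technical step to be the only obstacle; the rest is immediate.
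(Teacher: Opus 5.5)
Your proof is correct and takes the same route as the paper. Both split on whether the critical bid exceeds the value. When it does, nonnegative utility gives the bound; when it does not, the threshold structure of winner-pays-bid (winning bids at or just above the critical bid pay their bid) gives $\utili \geq (1-\epsilon)(\vali-\critbidi)$, hence $\utili+\critbidi\geq(1-\epsilon)\vali$. The paper only works out the exact-best-response case with ties favoring the bidder and defers the approximate case to ``some more algebra,'' whereas your comparison against the supremum of deviation utilities over winning bids decreasing to the critical bid supplies exactly that omitted slack and tie-breaking step.
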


\begin{proof}
To avoid algebra, consider an exact best response with ties favoring the
bidder.

Since we are considering pure strategies, bidder $i$ faces a
deterministic threshold at $\critbidi$.  Bids $\bidi \geq \critbidi$
win; bids $\bidi < \critbidi$ lose.  One optimal bid is \(\bidi =
\min(\vali,\critbidi)\).  If \(\vali<\critbidi\), the bidder loses,
and \(\utili=0\); so \(\utili+\critbidi = \critbidi>\vali\).  If
\(\vali\geq\critbidi\), the bidder wins and
\(\utili+\critbidi=(\vali-\critbidi)+\critbidi=\vali\).

The same cases, with approximate-best-response slack and some more algebra, give the result for a pure \((1-\epsilon)\)-Nash equilibrium
under any deterministic tie-breaking rule.
\end{proof}

Competitive and individual efficiency multiply to give robust welfare guarantees.

\begin{theorem}[\citealp{HHT-23}]
\label{thm:deterministic-welfare-guarantee}
If a mechanism is \(\CompEff\)-competitively efficient
and a deterministic bidding outcome is \(\IndEff\)-individually
efficient, then the outcome obtains at least a
\(\CompEff\IndEff\) fraction of optimal welfare.
\end{theorem}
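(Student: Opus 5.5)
The plan is a smoothness-style argument: sum the individual-efficiency inequality over bidders, weighted by the optimal allocation, and then bound the resulting critical-bid term by revenue using competitive efficiency. Fix values \(\vals\) and a deterministic bid profile \(\bids\) that is \(\IndEff\)-individually efficient. Let \(\allocs\opt\in\feasibles\) be a welfare-optimal allocation for \(\vals\), so that \(\OPT(\vals)=\sum_i \vali\alloci\opt\). Since \(\alloci\opt\in[0,1]\) and \(\utili\geq 0\) can be assumed (a bidder can always secure nonnegative utility by bidding zero or losing, so an individually rational outcome has \(\utili \ge 0\)), the individual-efficiency inequality \(\utili+\critbidi\geq\IndEff\vali\) gives
\[
\utili+\critbidi\,\alloci\opt\;\geq\;\alloci\opt\bigl(\utili+\critbidi\bigr)\;\geq\;\IndEff\,\vali\,\alloci\opt .
\]

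Summing over \(i\) yields \(\sum_i\utili+\sum_i\critbidi\alloci\opt\geq\IndEff\,\OPT(\vals)\). Next, the critical-bid term is at most \(\OPT(\critbids)\), because \(\allocs\opt\) is feasible and \(\OPT(\critbids)\) maximizes over \(\feasibles\). By \Cref{def:competitive-efficiency}, \(\OPT(\critbids)\leq\Rev{\bids}/\CompEff\). Therefore
\[
\sum\nolimits_i\utili+\tfrac{1}{\CompEff}\Rev{\bids}\;\geq\;\IndEff\,\OPT(\vals).
\]

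To finish, use the identity \(\Wel(\vals,\bids)=\sum_i\utili+\Rev{\bids}\), which holds in a winner-pays-bid mechanism since \(\utili=(\vali-\bidi)\epballoci(\bids)\). Because \(\CompEff\le 1\) and utilities are nonnegative, \(\sum_i\utili\geq\CompEff\sum_i\utili\). Multiplying the previous display by \(\CompEff\) then gives
\[
\Wel(\vals,\bids)\;\geq\;\CompEff\sum\nolimits_i\utili+\Rev{\bids}\;\geq\;\CompEff\IndEff\,\OPT(\vals)=\CompEff\IndEff\,\FB(\vals).
\]

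The main obstacle is the first step, where the individual-efficiency inequality is weighted by \(\alloci\opt\). It requires \(\utili\geq 0\), so that \(\utili\geq\alloci\opt\utili\). I would justify this from the fact that the outcome is an (approximate) equilibrium, or more simply from voluntary participation, since a bid of zero never yields negative utility. If nonnegativity were unavailable, I would instead assume \(\allocs\opt\) is integral, which holds for the first-price auction and for combinatorial auctions, in which case the weighting is trivial.
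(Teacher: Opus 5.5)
Your proof is correct and follows the paper's argument step for step: weight individual efficiency by \(\alloci\opt\) using \(\utili\ge 0\), sum over bidders, bound \(\sum_i\critbidi\alloci\opt\le\OPT(\critbids)\le\Rev{\bids}/\CompEff\), and finish by writing welfare as total utility plus revenue and using \(\CompEff\le 1\). One caution: your fallback of taking \(\allocs\opt\) integral would not remove the need for \(\utili\ge 0\), because bidders with \(\alloci\opt=0\) still contribute \(\utili\) to the sum and the last step needs \(\sum_i\utili\ge 0\) (the conclusion can in fact fail when a winner overbids); nonnegativity, which the paper also assumes and which you justify correctly via equilibrium or voluntary participation, is genuinely required.
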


\begin{proof}
Fix \(\vals\) and an allocation \(\allocs\opt\) attaining
\(\OPT(\vals)\).  Individual efficiency for agent $i$ gives
\[
  \utili+\critbidi\geq\IndEff\,\vali.
\]
Multiplying by \(\alloci\opt\in[0,1]\) and using \(\utili\geq 0\) gives
\[
  \utili+\critbidi\alloci\opt
  \geq\IndEff\,\vali\alloci\opt.
\]
Summing over bidders gives
\[
  \sum\nolimits_i\utili
  +\sum\nolimits_i\critbidi\alloci\opt
  \geq
  \IndEff\sum\nolimits_i\vali\alloci\opt.
\]
Because \(\allocs\opt\) is feasible, its critical-bid objective $\sum\nolimits_i\critbidi\alloci\opt$ is at
most \(\OPT(\critbids)\).  Competitive efficiency then bounds this
term by revenue, giving
\[
  \sum\nolimits_i\utili
  +\nicefrac{1}{\CompEff}\Rev{\bids}
  \geq\IndEff\,\OPT(\vals).
\]
Since total utility plus revenue is welfare and
\(\CompEff\leq1\), increasing the left-hand side gives
\[
  \nicefrac{1}{\CompEff}\Wel(\vals,\bids)
  \geq\IndEff\,\OPT(\vals). \qedhere
\]
\end{proof}

\begin{corollary}
\label{cor:first-price-deterministic-welfare}
Every pure \((1-\epsilon)\)-Nash equilibrium of the first-price auction
obtains at least a \(1-\epsilon\) fraction of optimal welfare.
\end{corollary}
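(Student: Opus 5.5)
The corollary follows by instantiating \Cref{thm:deterministic-welfare-guarantee} with the two parameters supplied by the preceding propositions. Fix a value profile \(\vals\) and a pure \((1-\epsilon)\)-Nash equilibrium \(\bids\) of the first-price auction. The first-price auction is a winner-pays-bid mechanism. By \Cref{prop:single-item-competitive-efficiency} it is \(\CompEff\)-competitively efficient with \(\CompEff=1\), and this holds for every bid profile, in particular for \(\bids\).

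Next I would invoke \Cref{prop:deterministic-individual-efficiency}. It states that any pure \((1-\epsilon)\)-Nash equilibrium of a winner-pays-bid mechanism is \((1-\epsilon)\)-individually efficient. So every bidder satisfies \(\utili+\critbidi\geq(1-\epsilon)\vali\), where \(\critbidi=\max_{j\neq i}\bidi[j]\) is the highest other bid. The bidding outcome is deterministic because it is a pure profile, so \Cref{def:deterministic-individual-efficiency} applies directly.

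Plugging \(\CompEff=1\) and \(\IndEff=1-\epsilon\) into \Cref{thm:deterministic-welfare-guarantee} gives \(\Wel(\vals,\bids)\geq(1-\epsilon)\,\OPT(\vals)\). Since \(\OPT(\vals)=\max_i\vali\) is the optimal welfare, this is the claim.

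The only point needing care is consistency of tie-breaking. The critical bid \(\critbidi\), the allocation \(\epballocs(\bids)\), and the revenue \(\Rev{\bids}\) must all refer to the same deterministic tie-breaking rule under which \(\bids\) is an approximate equilibrium. \Cref{prop:deterministic-individual-efficiency} is stated for arbitrary deterministic tie-breaking. The competitive-efficiency bound in \Cref{prop:single-item-competitive-efficiency} is also unaffected by tie-breaking, because revenue equals the highest bid whenever someone wins, and this is at least \(\max_i\critbidi\). If every bid is zero, both sides are trivially fine. Beyond this check, the proof is a direct composition of the stated results, with no new obstacle.
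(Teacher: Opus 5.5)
Your proposal is correct and is the same argument the paper intends. The paper states the corollary without a separate proof, as an immediate instantiation of \Cref{thm:deterministic-welfare-guarantee} with $\CompEff=1$ from \Cref{prop:single-item-competitive-efficiency} and $\IndEff=1-\epsilon$ from \Cref{prop:deterministic-individual-efficiency}, and your tie-breaking consistency check is a sound extra detail that the paper leaves unstated.
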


For the first-price auction, competitive efficiency contributes no
loss.  With deterministic bids, individual efficiency contributes only
(possibly) the loss from (possibly) approximate best response.
Indeed, \Cref{cor:first-price-deterministic-welfare} is also a
consequence of the first welfare theorem.

The preceding analysis separates losses due to the response problem
(individual efficiency) from losses due to the allocation rule
(competitive efficiency).  We will now illustrate the point that
competitive efficiency captures losses from unpriced externalities
using the example of single-minded combinatorial auctions
\citep[cf.][]{LOS-02}.

\begin{definition}
  \label{def:single-minded-combinatorial-auction}
In a single-minded combinatorial auction, there are \(m\) items, and
bidder \(i\) has value \(\vali\) for receiving every item in a known
bundle \(\demandseti\) of items $\{1, \ldots, m\}$.  A feasible
allocation serves bidders with nonoverlapping bundles, i.e.,
$\feasibles = \{ \allocs \in \{0,1\}^n : \alloci = \alloci[j] = 1
\Rightarrow \demandseti \cap \demandseti[j] = \emptyset \text{ for all
} i\ne j\}$.
\end{definition}

The natural generalization of the first-price auction to single-minded
combinatorial auctions is to solicit bids, select as winners the
feasible set of bidders with highest total bid, and charge all winners
their bids.  We call this the winner-pays-bid highest-bids-win auction.

Consider \(m+2\) bidders.  For each \(i\in\{1,\ldots,m\}\), bidder
\(i\) values the singleton bundle \(\demandseti=\{i\}\) at one.  Each
of bidders \(m+1\) and \(m+2\) values the grand bundle
\(\{1,\ldots,m\}\) at one.  Suppose that the two grand-bundle bidders
bid one and the singleton bidders bid zero.  One of the grand-bundle
bidders wins.  Every bidder has utility zero and no bidder can obtain
positive utility by deviating, so these bids form a pure Nash
equilibrium.  The welfare of this Nash equilibrium is one, while the
optimal welfare is $m$ by assigning each item $i$ to bidder $i$.  The
ratio of equilibrium welfare to optimal welfare is $\nicefrac1m$.

The competitive efficiency matches this inefficiency.  The revenue is
$\Rev{\bids} = 1$.  On the other hand, every singleton bidder has a
critical bid of one; thus, $\OPT(\critbids)=m$ and the competitive
efficiency is $\CompEff = \nicefrac1m$.  Omitting the details, a
matching lower bound on the competitive efficiency holds for every
profile of demanded bundles and bids; thus, the mechanism's
competitive efficiency is $\nicefrac1m$.  By
\Cref{thm:deterministic-welfare-guarantee}, $\nicefrac1m$ is the
fraction of optimal welfare guaranteed in any pure Nash equilibrium of
the single-minded combinatorial auction.

In this example there are unpriced externalities.  The two
grand-bundle bidders competed with each other on price, but this price
does not reflect the externality from not serving any of the
single-item bidders.  Competitive efficiency quantifies exactly
the loss due to this unpriced externality.

\subsection{Individual Efficiency: Quantifying Inefficiency from Aggregate Uncertainty}

We will now consider environments with aggregate uncertainty due to both
mixed equilibria and incomplete information.  The first step is to
lift competitive-efficiency bounds from deterministic to randomized
environments, i.e., randomized bids.  This lifting follows from a
closure property: competitive efficiency is closed under randomization.
In other words, the worst-case
competitive efficiency is deterministic.  Thus, to analyze the
randomized competitive efficiency of a mechanism, it suffices to
analyze the competitive efficiency on deterministic bids.

Define a bidder $i$'s expected critical bid as
\(\cbidi=\expect{\smash{\critbidi}}\).

\begin{definition}[Competitive efficiency with random bids]
A winner-pays-bid mechanism is \(\CompEff\)-competitively efficient
with random bids if, for every distribution over bid profiles,
\[
  \expect[\bids]{\Rev{\bids}}
  \geq\CompEff\,\OPT(\cbids).
\]
\end{definition}

\begin{proposition}[\citealp{HHT-23}]
\label{prop:competitive-efficiency-closure}
A mechanism is \(\CompEff\)-competitively efficient for deterministic
bids if and only if it is \(\CompEff\)-competitively efficient for random
bids.
\end{proposition}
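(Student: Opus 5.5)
The plan is to prove the two directions separately; the reverse direction is immediate and the forward direction rests on convexity of \(\OPT\). If the mechanism is \(\CompEff\)-competitively efficient for random bids, then taking a point-mass distribution on a deterministic profile \(\bids\) gives \(\cbids=\critbids\) and \(\expect[\bids]{\Rev{\bids}}=\Rev{\bids}\). The random-bid inequality therefore specializes directly to the deterministic one.

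For the forward direction, fix an arbitrary distribution over bid profiles. First, I will note that \(\OPT(\zees)=\max_{\allocs\in\feasibles}\sum_i\zeei\alloci\) is a pointwise maximum of linear functions of \(\zees\). Hence it is convex, and Jensen's inequality gives
\[
  \OPT(\cbids)=\OPT\bigl(\expect[\bids]{\critbids(\bids)}\bigr)\leq\expect[\bids]{\OPT(\critbids(\bids))}.
\]
Here \(\critbids(\bids)\) denotes the profile of critical bids \(\critbidi(\bidsmi)\) computed at the realized profile, and \(\cbidi=\expect{\smash{\critbidi}}\) by definition.

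Second, I will apply deterministic competitive efficiency pointwise. For each realized \(\bids\) we have \(\Rev{\bids}\geq\CompEff\,\OPT(\critbids(\bids))\), and taking expectations gives \(\expect[\bids]{\Rev{\bids}}\geq\CompEff\,\expect[\bids]{\OPT(\critbids(\bids))}\). Chaining this with the first step yields \(\expect[\bids]{\Rev{\bids}}\geq\CompEff\,\OPT(\cbids)\), as required.

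An elementary alternative to citing Jensen is to fix \(\allocs\opt\) attaining \(\OPT(\cbids)\) and use linearity of expectation: \(\sum_i\cbidi\alloci\opt=\expect{\sum_i\critbidi\alloci\opt}\leq\expect{\OPT(\critbids)}\), since \(\allocs\opt\) is feasible at every realization. The only point needing care is this convexity step, which is precisely why the order "expectation inside \(\OPT\)" in the definition is the favorable one. Measurability and integrability are routine; bids are nonnegative and I will assume expected critical bids are finite.
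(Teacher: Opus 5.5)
Your proposal is correct and follows essentially the same route as the paper. Both handle the reverse direction with a degenerate distribution. Both handle the forward direction by applying deterministic competitive efficiency at each realized profile and then using that the expectation of a maximum is at least the maximum of the expectations, which is your Jensen/convexity step or, equivalently, your fixed-optimal-allocation linearity argument.
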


\begin{proof}
The deterministic-to-randomized direction is the following sequence:
\begin{align*}
\expect{\Rev{\bids}}
&\geq \expect{\CompEff\,\OPT(\critbids)}
= \expect{\CompEff\,
   \max\nolimits_{\allocs\in\feasibles}
   \sum\nolimits_i\alloci\critbidi}
\geq \CompEff
   \max\nolimits_{\allocs\in\feasibles}
   \sum\nolimits_i\alloci\expect{\smash{\critbidi}}
= \CompEff\,\OPT(\cbids).
\end{align*}
The first step applies deterministic competitive
efficiency to every realized bid profile and then takes expectations.
The second step expands the optimal critical-bid
objective.  The third step uses the fact that the
expectation of a maximum is at least the maximum of the expectations.
The final step uses the definition of expected
critical bids and $\OPT(\cdot)$.

The reverse direction follows because a deterministic
bid profile is a degenerate distribution over bid profiles.
\end{proof}

More generally, \citet{HHT-23} show that competitive efficiency is
also closed under randomization of the mechanism, agent selection, and
correlated signaling.
Essentially, deterministic full information is the worst case.

When bids are randomized, critical bids are randomized, and bidders
face aggregate uncertainty in their best-response problem.  As a
bidder's bid varies, their probability of winning traces an
interim bid-allocation rule.  The geometry of this best-response
problem is depicted in \Cref{fig:individual-efficiency-geometry},
where the utility from a bid is the dark rectangle below the bid-allocation rule,
while the expected critical bid is the light area above it.  We will
extend the definition of individual efficiency to randomized bids by
requiring that it hold in expectation, i.e., by replacing $\critbidi$
with $\cbidi$.

\begin{figure}
  \centering
  \begin{subfigure}[t]{0.47\textwidth}
    \centering
    \resizebox{\linewidth}{!}{\begin{tikzpicture}[x=4.4cm,y=2.13cm]
  \fill[gray!22]
    (0,1) -- plot[smooth] coordinates
      {(0,0) (.08,.30) (.20,.46) (.50,.55) (.80,.62) (1.02,.82) (1.08,1)}
    -- cycle;
  \fill[gray!58] (.20,0) rectangle (1,.46);
  \draw[line width=.9pt]
    plot[smooth] coordinates
      {(0,0) (.08,.30) (.20,.46) (.50,.55) (.80,.62) (1.02,.82) (1.08,1)};
  \draw[dashed,thick] (0,0) rectangle (1,1);
  \draw[->] (0,1.05) -- (0,0) -- (1.16,0);
  \node[below] at (1,0) {$\val$};
  \node[below] at (.20,0) {$\bid$};
  \node[left] at (0,.46) {$\balloc(\bid)$};
  \node[left] at (0,1) {$1$};
  \node[anchor=west] at (1.02,.70) {$\balloc(\cdot)$};
  \node at (.60,.23) {$\butil(\val,\bid)$};
  \node at (.39,.78) {$\cbid=\expect{\smash{\critbid}}$};
\end{tikzpicture}
 }
    \caption{Best-response geometry.}
    \label{fig:individual-efficiency-geometry}
  \end{subfigure}
  \hfill
  \begin{subfigure}[t]{0.47\textwidth}
    \centering
    \resizebox{\linewidth}{!}{\begin{tikzpicture}[x=4.4cm,y=2.13cm]
  \coordinate (M) at (.50,.55);
  \fill[gray!22] (0,.55) rectangle (.50,1);
  \fill[gray!58] (.50,0) rectangle (1,.55);
  \draw[line width=.9pt]
    plot[smooth] coordinates
      {(0,0) (.08,.30) (.20,.46) (.50,.55) (.80,.62) (1.02,.82) (1.08,1)};
  \draw[dashed,thick] (0,0) rectangle (1,1);
  \draw[dashed] (.50,0) -- (.50,.55) -- (0,.55);
  \draw[->] (0,1.05) -- (0,0) -- (1.16,0);
  \fill (M) circle (.015);
  \node[below] at (.50,0) {$\nicefrac{\val}{2}$};
  \node[below] at (1,0) {$\val$};
  \node[left] at (0,.55) {$\balloc(\nicefrac{\val}{2})$};
  \node[left] at (0,1) {$1$};
  \node at (.72,.83) {$\balloc(\cdot)$};
  \node at (.75,.27) {$\butil(\val,\nicefrac{\val}{2})$};
  \node at (.25,.78) {part of $\cbid$};
\end{tikzpicture}
 }
    \caption{The half-value deviation.}
    \label{fig:individual-efficiency-half}
  \end{subfigure}
  \caption{Individual-efficiency geometry in a winner-pays-bid mechanism.
    Panel~\subref{fig:individual-efficiency-geometry} depicts the utility from
    bid \(\bid\) as the dark rectangle, the
    expected critical bid as the light region, and value \(\val\) as the area
    of the dashed rectangle.  Panel~\subref{fig:individual-efficiency-half}
    gives the
    \(\nicefrac12\)-individual-efficiency guarantee from bidding \(\nicefrac{\val}{2}\): the dark
    utility rectangle and the light rectangle contained in the
    expected-critical-bid region have total area \(\nicefrac{\val}{2}\).  Adapted from
    \citet{har-13}.}
  \label{fig:individual-efficiency}
\end{figure}

\begin{definition}[Individual efficiency with random bids]
\label{def:individual-efficiency}
A bidding outcome is \(\IndEff\)-individually efficient if every bidder $i$
satisfies
\[
  \utili+\cbidi\geq\IndEff\,\vali.
\]
\end{definition}

\begin{proposition}[\citealp{ST-13,HHT-23}]
\label{prop:first-price-individual-efficiency}
Every best response in a winner-pays-bid bidding problem is
\((1-\nicefrac1e)\)-individually efficient.
\end{proposition}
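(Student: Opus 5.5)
The plan is to use the geometry of \Cref{fig:individual-efficiency-geometry}. Fix bidder $i$ with value $\val$. The competing bids induce a random critical bid $\critbid$ with CDF $\balloc(\cdot)$, so $\balloc(\bid)=\prob{\critbid\leq \bid}$ is the interim bid-allocation rule. The expected critical bid is the area above this curve:
\[
  \cbid=\expect{\critbid}\;\geq\;\int_0^{\val}\bigl(1-\balloc(\bid)\bigr)\,\dd\bid .
\]
The inequality discards the part of the area at bids above $\val$, which is nonnegative. Let $\util$ be the best-response utility. Then for every bid $\bid$,
\[
  \util\geq(\val-\bid)\,\balloc(\bid),
\]
since the utility of bidding $\bid$ is the dark rectangle. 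Rearranging gives $\balloc(\bid)\leq \util/(\val-\bid)$ for $\bid<\val$.

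The key step bounds the area under the allocation curve using only this family of rectangle constraints. I would plug the bound into the integral, capping it at $1$:
\[
  \cbid\geq\int_0^{\val}\Bigl(1-\min\bigl\{1,\tfrac{\util}{\val-\bid}\bigr\}\Bigr)\dd\bid
  =\int_0^{\val-\util}\Bigl(1-\tfrac{\util}{\val-\bid}\Bigr)\dd\bid .
\]
The cap binds exactly when $\bid\geq\val-\util$, which is why the upper limit changes. Evaluating the integral gives
\[
  \cbid\geq(\val-\util)-\util\ln(\val/\util).
\]
It follows that
\[
  \util+\cbid\geq\val-\util\ln(\val/\util).
\]

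Finally, minimize the right-hand side over $\util\in[0,\val]$. Write $\util=t\val$; the expression becomes $\val\,(1+t\ln t)$. The function $t\ln t$ is minimized at $t=\nicefrac1e$, where it equals $-\nicefrac1e$. This yields
\[
  \util+\cbid\geq(1-\nicefrac1e)\,\val,
\]
which is the claimed $(1-\nicefrac1e)$-individual efficiency. The edge cases are straightforward: $\util=0$ gives $\cbid\geq\val$ directly, and $\util=\val$ gives $\val$.

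I expect the main obstacle to be bookkeeping rather than a deep step. Two details need care. First, the identity $\expect{\critbid}=\int_0^\infty(1-\balloc(\bid))\,\dd\bid$ requires $\balloc$ to be the CDF of $\critbid$; this is delicate at ties and where the curve is discontinuous, and I would handle it by taking right limits, assuming ties favor the bidder. Second, the argument uses only the rectangle inequality $\util\geq(\val-\bid)\balloc(\bid)$, not optimality. So it really shows that any bidder attaining utility $\util$ at her best response satisfies the bound. This makes the extension to approximate best responses immediate by replacing $\util$ with the best-response utility.
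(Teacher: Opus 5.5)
Your proof is correct, but it takes a different route from the one the paper points to. The paper only sketches the bound as the value of a zero-sum game between the bidder and nature, in the template of \Cref{s:robustness-paradigms}. The bidder's side of that game is the randomized deviation of \citet{ST-13}: bid with density $1/(v-b)$ on $[0,(1-\nicefrac1e)v]$. Its expected utility is exactly $\int_0^{(1-1/e)v}x(b)\,\mathrm{d}b$ for every curve $x(\cdot)$. Adding the part of the expected critical bid over the same interval gives $(1-\nicefrac1e)v$ in one line. Nature's side of the game is the capped hyperbola $\min\{1,(v/e)/(v-b)\}$, which shows the bound is tight.

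You instead solve nature's side directly, which is in essence the geometric proof of \citet{HHT-14} mentioned in \Cref{s:robust-analysis-auction-equilibria-notes}. Fixing the best-response utility $u$ confines the curve under the equal-utility hyperbola $u/(v-b)$, and you then minimize over $u$.

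What each route buys:
\begin{itemize}
\item Your route gives the sharper, utility-dependent bound $u+\hat B\geq v-u\ln(v/u)$. It also exhibits the extremal curve, so tightness at $u=v/e$ is immediate.
\item The randomized-deviation route is linear in $x(\cdot)$. Its deviation depends only on $v$ and not on the competing bids, which is the form the smoothness framework needs for extensions such as composition.
\end{itemize}

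One small correction to your closing remark. The family of rectangle inequalities is exactly the best-response condition, not something weaker. For a $(1-\epsilon)$-approximate best response with utility $u'\geq(1-\epsilon)u$, you therefore get $u'+\hat B\geq(1-\epsilon)(1-\nicefrac1e)v$, rather than the bound verbatim.
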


To prove \Cref{prop:first-price-individual-efficiency}, we need to
show that the white area in the dashed rectangle is small relative to
the rectangle's area (which is $\val$; see
\Cref{fig:individual-efficiency-geometry}).  This proof can be
obtained by solving a zero-sum game, like the ones in
\Cref{s:robustness-paradigms}, whose value is $\IndEff =
1-\nicefrac1e$.  A simpler $\IndEff \geq \nicefrac12$ bound can be
shown geometrically by lower-bounding the utility by the utility from
the deviation $\bid = \nicefrac{\val}{2}$ (see
\Cref{fig:individual-efficiency-half}).

The deterministic welfare proof, that combines competitive efficiency
$\CompEff$ with individual efficiency $\IndEff$ to give robust welfare
bounds, extends to random bids by replacing each critical bid with its
expectation and taking expectations when summing over bidders (details
omitted).  When values are random as in Bayes--Nash equilibrium, the preceding quantities \(\utili\)
and \(\cbidi\) are interpreted as interim quantities conditional on
bidder \(i\)'s value.

\begin{theorem}[\citealp{HHT-23}]
\label{thm:welfare-guarantee}
If a mechanism is \(\CompEff\)-competitively efficient and every best
response is \(\IndEff\)-individually efficient, then every Bayes--Nash
equilibrium obtains at least a \(\CompEff\IndEff\) fraction of expected
optimal welfare.
\end{theorem}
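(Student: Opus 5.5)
The plan is to mimic the proof of \Cref{thm:deterministic-welfare-guarantee}, replacing each critical bid by its expectation and using interim quantities. The main technical point is conditioning correctly, so that the optimal allocation for the realized value profile can be paired with bidder $i$'s interim individual-efficiency inequality.

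Fix a Bayes--Nash equilibrium with strategies $\strati$. For each realized value profile $\vals$, let $\allocs\opt(\vals)$ be a welfare-optimal allocation. For bidder $i$ with value $\vali$, the equilibrium bid is a best response to the distribution of $\bidsmi$ induced by the other values and strategies. By hypothesis, the interim inequality
\[
\utili(\vali)+\cbidi(\vali)\geq\IndEff\,\vali
\]
holds, where $\cbidi(\vali)=\expect{\critbidi(\bidsmi)\given\vali}$. Under the standard independent private values assumption, $\critbidi(\bidsmi)$ does not depend on $\vali$, so $\cbidi$ is a constant $\cbidi=\expect{\critbidi}$.

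Multiply the interim inequality by $\alloci\opt(\vals)\in[0,1]$ and use $\utili\geq0$ to get
\[
\utili(\vali)+\cbidi\,\alloci\opt(\vals)\geq\IndEff\,\vali\,\alloci\opt(\vals)
\]
pointwise in $\vals$. Take expectations over $\vals$ and sum over $i$. The right-hand side becomes $\IndEff\,\expect{\OPT(\vals)}$. On the left, the first term sums to expected equilibrium utility. The second term is
\[
\sum\nolimits_i\cbidi\,\expect{\alloci\opt(\vals)} = \sum\nolimits_i \cbidi\,\bar x_i ,
\]
where $\bar{\allocs}=\expect{\allocs\opt(\vals)}$. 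The step where independence is used is this one: since $\cbidi$ is constant, the expectation of $\cbidi\,\alloci\opt$ factors into $\cbidi\,\bar x_i$.

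Next I bound the critical-bid term by revenue. Convexity of $\feasibles$ gives $\bar{\allocs}\in\feasibles$; if $\feasibles$ is not convex, replace it by its convex hull, on which the linear maximum $\OPT$ is unchanged. Hence
\[
\sum\nolimits_i \cbidi\,\bar x_i\leq\OPT(\cbids).
\]
The equilibrium bid profile $\bids=\strats(\vals)$ is a distribution over bid profiles, so \Cref{prop:competitive-efficiency-closure} yields
\[
\OPT(\cbids)\leq\nicefrac{1}{\CompEff}\,\expect{\Rev{\bids}}.
\]

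Combining these bounds,
\[
\expect{\textstyle\sum_i\utili}+\nicefrac{1}{\CompEff}\,\expect{\Rev{\bids}}\geq\IndEff\,\expect{\OPT(\vals)}.
\]
Since $\CompEff\leq1$ and utility plus revenue equals welfare, the left-hand side is at most $\nicefrac{1}{\CompEff}\,\expect{\Wel(\vals,\bids)}$. This gives expected welfare at least $\CompEff\IndEff$ times expected optimal welfare.

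The main obstacle is the coupling step above: pairing bidder $i$'s interim expected critical bid with $\alloci\opt(\vals)$, which depends on everyone's values. Under independence this is resolved by the factorization $\expect{\cbidi\,\alloci\opt}=\cbidi\,\bar x_i$. With correlated values, $\cbidi(\vali)$ varies with $\vali$, and I would instead index critical bids by type. One option is to use the extended closure result of \citet{HHT-23} for correlated signaling. Another is to view the instance as a randomized mechanism over agent-type pairs, so that a competitive-efficiency bound over a larger type-indexed feasible set still applies.
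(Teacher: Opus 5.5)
Your proof is correct and follows the paper's own (only sketched) argument: you replace each critical bid in the proof of \Cref{thm:deterministic-welfare-guarantee} by its interim expectation, use independence of values so that $\cbidi(\vali)$ equals the unconditional $\cbidi$ and \Cref{prop:competitive-efficiency-closure} applies to the equilibrium bid distribution, and then finish exactly as in the deterministic case. The convex-hull step is unnecessary, since $\sum_i \cbidi\,\alloci\opt(\vals)\le\OPT(\cbids)$ already holds for each $\vals$, and your closing remarks on correlated values go beyond what is needed (your argument, and the paper's interim-quantity sketch, rely on independence), so they can simply be dropped.
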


\begin{corollary}
\label{cor:first-price-bayes-welfare}
Every Bayes--Nash equilibrium of the first-price auction obtains at
least a \(1-\nicefrac1e\) fraction of expected optimal welfare.
\end{corollary}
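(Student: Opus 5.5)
This corollary follows by combining \Cref{thm:welfare-guarantee} with two earlier results. The first is \Cref{prop:single-item-competitive-efficiency}, which shows that the first-price auction is \(1\)-competitively efficient for deterministic bids. \Cref{prop:competitive-efficiency-closure} then upgrades this to \(1\)-competitive efficiency with random bids, which is the form needed when bids are random in a Bayes--Nash equilibrium. The second is \Cref{prop:first-price-individual-efficiency}, which shows that every best response in a winner-pays-bid bidding problem is \((1-\nicefrac1e)\)-individually efficient. The first-price auction is a winner-pays-bid mechanism, so this applies directly.

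The plan is to check that a Bayes--Nash equilibrium meets the hypotheses of \Cref{thm:welfare-guarantee}. Fix bidder \(i\) with value \(\vali\). In equilibrium, the others' bids \(\bidsmi\) are random, because the values \(\valsmi\) are random and strategies may be mixed. Hence the critical bid \(\critbidi(\bidsmi)=\max_{j\neq i}\bidi[j]\) is a random variable. Its induced interim bid-allocation rule is \(\balloci(\bid)=\prob{\critbidi\leq\bid}\), up to ties.

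Bidder \(i\)'s equilibrium strategy is a best response to this rule. \Cref{prop:first-price-individual-efficiency} then gives the interim inequality \(\utili+\cbidi\geq(1-\nicefrac1e)\vali\), with \(\cbidi=\expect{\critbidi\given\vali}\). Here \(\cbidi\) is independent of \(\vali\) when values are independent; under correlation we simply condition on \(\vali\), as the paper's interim interpretation permits. Setting \(\CompEff=1\) and \(\IndEff=1-\nicefrac1e\), the theorem yields expected welfare at least \((1-\nicefrac1e)\expect{\OPT(\vals)}\).

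The main obstacle, if any, is the step the paper omits: passing from interim inequalities to the welfare bound in expectation. I would redo the argument of \Cref{thm:deterministic-welfare-guarantee}. Take \(\allocs\opt(\vals)\) to be the optimal allocation for the realized values. Multiply each bidder's interim inequality by \(\alloci\opt(\vals)\in[0,1]\) and take expectations over \(\valsmi\).

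A subtlety arises because \(\alloci\opt\) depends on \(\valsmi\), and hence on \(\cbidi\) under correlation. I would handle this by using the pointwise inequality \(\critbidi\alloci\opt\) and summing to get \(\sum_i\critbidi\alloci\opt\leq\OPT(\critbids)\leq\Rev{\bids}\) realization by realization. For the first-price auction this is immediate, since \(\sum_i\alloci\opt\leq1\) and \(\critbidi\leq\max_j\bidi[j]\).

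Summing over bidders and taking outer expectations then gives
\[
\expect{\textstyle\sum_i\utili}+\expect{\Rev{\bids}}\geq(1-\nicefrac1e)\,\expect{\OPT(\vals)}.
\]
The left-hand side equals expected equilibrium welfare. For independent values this is exactly the omitted extension.
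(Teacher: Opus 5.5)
Your top-level reduction is exactly the paper's: the corollary is \Cref{thm:welfare-guarantee} with $\mu=1$ (\Cref{prop:single-item-competitive-efficiency}, lifted to random bids by \Cref{prop:competitive-efficiency-closure}) and $\eta=1-1/e$ (\Cref{prop:first-price-individual-efficiency}). The error is in your attempt to supply the omitted step.

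Individual efficiency is an interim statement, $u_i(v_i)+B_i\ge\eta v_i$ with $B_i=\mathbf{E}[\hat b_i\mid v_i]$. It does not hold realization by realization. For example, in the tight $1-1/e$ instance the critical bid is $0$ with probability $1/e$ while $u_i=v_i/e$. Multiplying the interim inequality by $x_i^\star(v)$ and averaging over $v_{-i}$ therefore produces $\mathbf{E}[\hat b_i\mid v_i]\,\mathbf{E}[x_i^\star(v)\mid v_i]$, not $\mathbf{E}[\hat b_i\,x_i^\star(v)\mid v_i]$. Your realization-by-realization bound $\sum_i\hat b_i x_i^\star\le\mathrm{Rev}$ thus controls the wrong quantity.

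The two quantities differ even with independent values, because conditional on $v_i$ both $x_i^\star$ and $\hat b_i$ are functions of $v_{-i}$. They are also typically negatively correlated: bidder $i$ is the efficient winner exactly when the others' values, and hence bids, are low. So $\mathbf{E}[\hat b_i x_i^\star]$ is usually the smaller term, which is the wrong direction for your substitution. For two i.i.d.\ $U[0,1]$ bidders bidding $v/2$, $\mathbf{E}[B_1 x_1^\star]=1/8$ but $\mathbf{E}[\hat b_1 x_1^\star]=1/12$. The subtlety is therefore not a correlated-values issue, and for independent values your argument is not the omitted extension.

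The repair is the route the paper indicates, which you set up in your first paragraph but never used. With independent values, $B_i$ is a constant, so $\sum_i\mathbf{E}[B_i x_i^\star(v)]=\sum_i B_i\bar x_i$ with $\bar x_i=\mathbf{E}[x_i^\star(v)]$. Since $\sum_i\bar x_i\le 1$, this is at most $\OPT(B_1,\ldots,B_n)=\max_i B_i\le\mathbf{E}[\max_j b_j]=\mathbf{E}[\mathrm{Rev}]$. That is precisely $1$-competitive efficiency with random bids. The ``expectation of a max is at least the max of expectations'' step in \Cref{prop:competitive-efficiency-closure} is what absorbs the correlation between $x^\star$ and the critical bids.

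Alternatively, your realized-critical-bid route can be rescued, but only with a stronger lemma than \Cref{prop:first-price-individual-efficiency}: $u_i(v_i)+\mathbf{E}[\hat b_i x_i^\star\mid v_i]\ge\eta\,v_i\,\mathbf{E}[x_i^\star\mid v_i]$. It follows by rerunning the $1-1/e$ argument against the sub-probability CDF $H(b)=\mathbf{E}[x_i^\star\mathbf{1}\{\hat b_i<b\}\mid v_i]\le\Pr[\hat b_i<b\mid v_i]$, for which $u_i(v_i)\ge(v_i-b)H(b)$ still holds, and then rescaling by $\mathbf{E}[x_i^\star\mid v_i]$. That version would even cover correlated values, but it has to be proved, not assumed.
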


Note that the same individual efficiency guarantee applies to any
winner-pays-bid mechanism.  For example, it holds equally for the
first-price auction and the winner-pays-bid highest-bids-win mechanism
for single-minded combinatorial auctions.  To show a new
winner-pays-bid mechanism is robustly efficient, it suffices only to
analyze its (deterministic) competitive efficiency.  The fraction of
welfare lost from aggregate uncertainty in any winner-pays-bid
mechanism is at most $\nicefrac1e$.

\subsection{Competitive Efficiency of Greedy Mechanisms: Reducing Unpriced Externalities}

Return to the single-minded combinatorial auction.  The
highest-bids-win rule has competitive efficiency \(\nicefrac1m\), but
there is a better winner-determination rule that reduces the loss from
unpriced externalities.  A greedy rule orders bidders by a priority
function of their bids and allocates to each bidder in turn whenever
their bundle remains feasible.  The corresponding winner-pays-bid
mechanism charges each winner their bid.

For \(\apx\in[0,1]\), an allocation rule is an \(\apx\)-approximation if
it obtains at least an \(\apx\) fraction of the maximum feasible sum of
bids for every bid profile.
For single-item multi-unit feasibility, greedy by bid is optimal and the associated
winner-pays-bid mechanism is \(1\)-competitively efficient.  For
single-minded combinatorial auctions, the following classical
approximation gives the relevant rule.

\begin{theorem}[\citealp{LOS-02}]
\label{thm:greedy-single-minded-approximation}
For a single-minded combinatorial auction with \(m\) items, the greedy
priority rule \(\bidi/\sqrt{|\demandseti|}\) obtains an
\(\apx=\nicefrac1{\sqrt m}\) fraction of the maximum feasible sum of bids.
\end{theorem}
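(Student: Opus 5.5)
We will use the standard charging argument of \citet{LOS-02}. Fix a bid profile \(\bids\). Let \(W\) be the set of winners chosen by the greedy rule with priority \(\bidi/\sqrt{|\demandseti|}\), and let \(O\) be a feasible set of bidders maximizing \(\sum_{i\in O}\bidi\). The goal is
\[
\sum\nolimits_{i\in O}\bidi \le \sqrt m \sum\nolimits_{j\in W}\bidi[j].
\]
The plan is to charge each \(i\in O\) to a greedy winner that blocks it, and then bound the total charge on each winner. Bidders in \(O\cap W\) charge themselves, which only helps.

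First, define the blocking map. For \(i\in O\setminus W\), bidder \(i\) was rejected because its bundle intersected some earlier winner \(j\in W\). That winner came earlier in the greedy order, so
\[
\frac{\bidi[j]}{\sqrt{|\demandseti[j]|}}\ge \frac{\bidi}{\sqrt{|\demandseti|}}.
\]
Assign \(i\) to the first such \(j\), and write \(O_j\) for the set of bidders of \(O\) charged to \(j\) (including \(j\) itself if \(j\in O\)). Every \(i\in O_j\) has a bundle meeting \(\demandseti[j]\), and the bundles in \(O\) are pairwise disjoint. Each item of \(\demandseti[j]\) therefore lies in at most one such bundle, which gives \(|O_j|\le|\demandseti[j]|\).

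Second, bound the charge to each \(j\). Using the priority inequality and then Cauchy--Schwarz,
\[
\sum\nolimits_{i\in O_j}\bidi \le \frac{\bidi[j]}{\sqrt{|\demandseti[j]|}}\sum\nolimits_{i\in O_j}\sqrt{|\demandseti|}
\le \frac{\bidi[j]}{\sqrt{|\demandseti[j]|}}\sqrt{|O_j|}\,\sqrt{\sum\nolimits_{i\in O_j}|\demandseti|}.
\]
The bundles of \(O_j\) are disjoint subsets of the \(m\) items, so \(\sum_{i\in O_j}|\demandseti|\le m\). Combined with \(|O_j|\le|\demandseti[j]|\), the right-hand side is at most \(\bidi[j]\sqrt m\). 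Summing over \(j\in W\) gives the claimed bound.

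The main obstacle is the Cauchy--Schwarz step. There we must simultaneously use two facts: the number of charged bidders is at most \(|\demandseti[j]|\), which comes from intersection with \(\demandseti[j]\), and their total bundle size is at most \(m\), which comes from disjointness within \(O\). The \(\sqrt{\cdot}\) priority is exactly what balances these two bounds. Ties in the greedy order and bidders of \(O\cap W\) need only routine handling, since the priority inequality holds with equality for a bidder charged to itself.
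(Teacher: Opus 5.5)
Your proof is correct: charging each optimal bidder to the first earlier winner that blocks it, bounding $|O_j|$ by the size of the winner's bundle via disjointness of the optimal bundles, and applying Cauchy--Schwarz together with $\sum_{i\in O_j}|S_i|\le m$ gives exactly the claimed $\sqrt m$ bound. The paper gives no proof of its own and simply cites Lehmann, O'Callaghan, and Shoham, whose argument is essentially this same charging-plus-Cauchy--Schwarz argument.
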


This guarantee transfers to competitive efficiency
through coalitional non-bossiness.

\begin{definition}[Coalitional non-bossiness]
An allocation rule $\epballocs$ is coalitionally-non-bossy if any coalition of
losers can change their bids while remaining below their critical bids
without changing the allocation.
\end{definition}

Notice that highest-bids-win for the single-minded combinatorial
auction is not coalitionally-non-bossy.  The singleton bidders in the
previous example could jointly increase their bids to $\nicefrac1m+\epsilon$
each, for sufficiently small $\epsilon>0$, after which they are
allocated instead of one of the grand-bundle bidders.

\begin{lemma}[\citealp{HHT-23}]
\label{lem:greedy-coalitional-nonbossiness}
Every greedy-by-priority allocation rule is coalitionally-non-bossy.
\end{lemma}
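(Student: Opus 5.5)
The plan is to prove the lemma by induction over the greedy processing order, with the processing order itself fixed by the bids of the winners. Fix a bid profile \(\bids\) and let \(W\) be the set of winners under the greedy rule. The rule is given by priority functions that are increasing in the bid, e.g.\ \(\bidi/\sqrt{|\demandseti|}\). Let \(L\) be any coalition of losers, and let \(\bids'\) agree with \(\bids\) off \(L\), with \(\bidi'<\critbidi(\bidsmi)\) for each \(i\in L\). The goal is to show that greedy run on \(\bids'\) selects exactly \(W\).

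First, characterize each loser's critical bid in the greedy rule on the original profile. A loser \(i\) is rejected because, when it is considered, some earlier-processed winner's bundle overlaps \(\demandseti\). Raising \(\bidi\) moves \(i\) earlier in the order, and \(i\) wins exactly once its priority exceeds that of every winner \(j\in W\) with \(\demandseti[j]\cap\demandseti\neq\emptyset\). Here one must check that moving \(i\) earlier does not change which bidders processed before \(i\) win; this holds because the prefix of the processing order ahead of \(i\)'s new position is unchanged. So \(\critbidi\) is the bid at which \(i\)'s priority equals the highest priority among the winners that conflict with \(i\), up to tie-breaking. Consequently, \(\bidi'<\critbidi\) means \(i\)'s new priority is still below that of some conflicting winner, and \emph{that winner's bid is unchanged}, since winners are not in \(L\).

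Second, run greedy on \(\bids'\) and prove by induction along the new processing order that every bidder's accept/reject decision matches the original. The invariant is that the set of accepted bidders so far equals \(W\) restricted to the bidders processed so far. A winner \(j\in W\) is still accepted: every bidder accepted before \(j\) is, by the invariant, in \(W\), and \(W\) is pairwise disjoint. A non-coalition loser is still rejected: its original blocking winner had higher priority, still does, and has already been accepted by the invariant. A coalition loser \(i\in L\) is likewise rejected, because by the first step some conflicting winner \(j\) has priority above \(i\)'s new priority, so \(j\) is processed before \(i\) and accepted by the invariant.

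The main obstacle is the first step. The critical bid must be taken with respect to \(\bidsmi\) on the \emph{original} profile, and the other coalition members are changing their bids simultaneously. The point to stress is that the blocking witness for each loser is always a winner, whose bid is fixed. The induction therefore never relies on any other loser's behavior, so simultaneous deviations are harmless. Ties in priority at exactly the critical bid are handled by the strict inequality \(\bidi'<\critbidi\) together with a fixed tie-breaking order.
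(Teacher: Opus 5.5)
Your proof is correct, but it is organized differently from the paper's. The paper handles one loser at a time. Greedy decides each bidder only by feasibility against the set already selected, so a loser kept below her critical bid is never selected. Every other bidder then sees the same selected set and makes the same decision. The paper extends this to an arbitrary coalition by induction over its members.

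You instead let the whole coalition deviate simultaneously and run a single induction along the modified processing order. Your invariant is that the accepted set equals $W$ restricted to the bidders processed so far. Your key lemma is that every loser's rejection is witnessed by a higher-priority conflicting winner, whose bid is fixed.

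Your route makes explicit a point that the paper's closing ``by induction'' leaves implicit. After one loser's bid changes, the remaining losers' critical bids are evaluated at an updated profile, and the induction needs them to be unchanged. That is exactly your observation that a loser's critical bid is determined by the winning set and the winners' unchanged bids.

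The paper's route is shorter. Its single-loser step needs no formula for the critical bid, and no continuity or strict monotonicity of the priority function. It uses only that a bid below the infimum of winning bids loses. Your step one could be phrased the same way: $\bidi'<\critbidi$ already means $i$ loses at $(\bidi',\bidsmi)$, so some conflicting winner precedes $i$ in the order.
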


\begin{proof}
  The greedy algorithm only makes decisions based on feasibility
  relative to the set of selected bidders.  Consider changing any
  loser's bid while keeping it below their critical bid.  With this
  change, the loser never enters the winning set; thus, the same
  decisions are made for all other agents.  By induction, the same
  conclusion holds for any set of losers.
\end{proof}

\begin{proposition}[\citealp{HHT-23}]
\label{prop:greedy-combinatorial-competitive-efficiency}
A coalitionally-non-bossy winner-pays-bid mechanism whose allocation
rule is an \(\apx\)-approximation is
\(\CompEff = \apx\) competitively efficient.
\end{proposition}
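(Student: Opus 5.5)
We fix a bid profile $\bids$ and must show $\Rev{\bids}\geq\apx\,\OPT(\critbids)$. The plan is to build a modified bid profile $\bids'$ on which the allocation rule's $\apx$-approximation guarantee can be applied directly, and then translate back using coalitional non-bossiness. Let $W$ be the set of winners at $\bids$ and $L$ the set of losers. Since the mechanism is winner-pays-bid, $\Rev{\bids}=\sum_{i\in W}\bidi$.

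\textbf{Step 1: raise the losers' bids.} Construct $\bids'$ by keeping $\bidi'=\bidi$ for $i\in W$ and, for each loser $i\in L$, raising the bid to $\bidi'=\critbidi(\bidsmi)-\epsilon$, just below the critical bid (or to $0$ if the critical bid is $0$). By \Cref{lem:greedy-coalitional-nonbossiness}, or directly by the coalitional non-bossiness hypothesis, the coalition $L$ can make this change without altering the allocation. Hence $\epballocs(\bids')=\epballocs(\bids)$, so the winners and their bids are unchanged. The total winning bid at $\bids'$ is therefore still $\Rev{\bids}$.

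\textbf{Step 2: apply the approximation guarantee.} Because the allocation rule is an $\apx$-approximation on $\bids'$,
\[
\Rev{\bids}=\sum\nolimits_{i\in W}\bidi'\geq \apx\max\nolimits_{\allocs\in\feasibles}\sum\nolimits_i \bidi'\alloci .
\]
We then lower-bound $\bidi'$ coordinatewise by $\critbidi-\epsilon$. For losers this holds by construction. For winners we need $\bidi\geq\critbidi$, which follows from the definition of the critical bid as the infimum of winning bids, given that $i$ wins at $\bidi$. Since $\OPT$ is monotone in the weights, $\OPT(\bids')\geq\OPT(\critbids)-n\epsilon$. Letting $\epsilon\to0$ gives $\Rev{\bids}\geq\apx\,\OPT(\critbids)$, which is the claim $\CompEff=\apx$.

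\textbf{Main obstacle.} The delicate point is Step 1. Coalitional non-bossiness must be applied with each loser's new bid below the critical bid computed at the \emph{original} profile $\bids$, while the other losers are moving simultaneously; the definition as stated is exactly what licenses this. A secondary subtlety is boundary behavior at the infimum (ties, or a critical bid that is not attained), which the $\epsilon$-slack handles.

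If the statement is also meant to assert tightness, meaning that $\CompEff$ is no larger than $\apx$, the plan is to take a bid profile where the approximation is tight. One would then set the losers' bids just below their critical bids, so that $\OPT(\critbids)$ is approximately $\OPT(\bids)$ while revenue equals $\apx\,\OPT(\bids)$.
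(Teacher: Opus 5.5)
Your proof is correct and is essentially the paper's argument: raise the losers' bids toward their critical bids, use coalitional non-bossiness so the allocation (and hence revenue) is unchanged, apply the $\apx$-approximation guarantee to $\bids'$, and finish with $\bids'\geq\critbids$ coordinatewise and monotonicity of $\OPT(\cdot)$. Your $\epsilon$-slack is only a slightly more careful version of the paper's step of setting losers at their critical bids ``while retaining the losing outcome.'' The tightness paragraph is not needed, since the proposition only asserts the lower bound $\Rev{\bids}\geq\apx\,\OPT(\critbids)$.
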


\begin{proof}
Fix bids \(\bids\) and their critical bids \(\critbids\).  Let
\(\bids'\) leave the bids of winners unchanged and raise the bids of
losers to their critical bids while retaining the losing outcome.  Observe,
\begin{align*}
\Rev{\bids}
= \sum\nolimits_i\bidi\epballoci(\bids)
= \sum\nolimits_i\bidi'\epballoci(\bids')
\geq {\apx}\OPT(\bids')
\geq {\apx}\OPT(\critbids).
\end{align*}
Start with the definition of revenue.  The second step holds because only losing bids
change and coalitional non-bossiness preserves the allocation.  The
third step applies the allocation rule's approximation guarantee to
\(\bids'\).  For winners, the unchanged bid is at least the critical
bid; for losers, the modified bid is the critical bid so $\bids' \geq
\critbids$ coordinatewise.  Monotonicity of the $\OPT(\cdot)$ gives
the final step.
\end{proof}

Combining
\Cref{thm:greedy-single-minded-approximation,prop:greedy-combinatorial-competitive-efficiency}
shows that greedy by bid per square root of bundle size has competitive
efficiency \(\nicefrac1{\sqrt m}\).  Combining this guarantee with the
individual-efficiency guarantee \(1-\nicefrac1e\) gives a robust-efficiency guarantee of
\((1-\nicefrac1e)\,\nicefrac1{\sqrt m}\).  The square-root loss is due to the many-for-one
unpriced externalities in the combinatorial allocation problem; the
factor \(1-\nicefrac1e\) is the separate loss from aggregate uncertainty.  The
greedy rule reduces the loss from unpriced externalities
relative to the highest-bids-win rule.

\subsection{Notes}
\label{s:robust-analysis-auction-equilibria-notes}

Computer science conventionally defines the price of anarchy as the
worst-case ratio of optimal welfare to equilibrium welfare, the
reciprocal of the performance ratio used in this survey.  The price of
anarchy was introduced by \citet{KP-99} and applied to selfish routing
by \citet{RT-02}.  There is a long literature studying the price of
anarchy in many different families of games.  \citet{rou-09} observes
that many price-of-anarchy analyses follow a common template and
formalizes this template through a parameterized definition of
smoothness.

Early price-of-anarchy analyses of auctions include
combinatorial auctions \citep{CKS-08, BL-10} and simultaneous
first-price auctions \citep{CKST-13}.  \citet{rou-12} and
\citet{ST-13} extended the smoothness framework to games of incomplete
information and to auctions, respectively.  For auctions, a smoothness
proof combines a deviation for each bidder with an accounting argument
comparing utilities plus revenue with optimal welfare.

\citet{HHT-14} separates the inefficiencies quantified in the
smoothness framework into two parts.  Individual efficiency quantifies
how efficiently a bidder can solve her best-response problem, while
competitive efficiency quantifies how the mechanism converts
competition into revenue.  The former measures aggregate uncertainty;
the latter measures the unpriced externalities in the mechanism.  The
terminology and economic interpretation presented here follow the expanded treatment of \citet{HHT-23}.
They also study the all-pay payment format, the revenue
objective, and many other environments.

The analysis of the $\IndEff = 1-\nicefrac1e$ individual efficiency follows
from a randomized-deviation proof of \citet{ST-13}; \citet{HHT-14}
give an alternative geometric proof.  The smooth-mechanism analysis of
\citet{ST-13} also allows multi-dimensional preferences and
simultaneous and sequential composition.  The connection between
competitive efficiency and greedy approximation recasts the
greedy-auction analysis of \citet{BL-10} in the
individual-efficiency-competitive-efficiency framework.

For single-minded combinatorial auctions, \citet{DK-15} show that the
competitive efficiency of $\CompEff = \nicefrac1{\sqrt{m}}$ is optimal
up to a constant factor for winner-pays-bid mechanisms.  In
particular, the best competitive efficiency achievable by any
winner-pays-bid mechanism for the single-minded combinatorial auction
environment is $O(\nicefrac1{\sqrt m})$.
 
\section{Robust Auction Design}
\label{s:robust-auction-design}

Welfare and revenue are distinct objectives.  For welfare
maximization, there is a single mechanism that maximizes welfare
pointwise on all profiles of values, i.e., the so-called
Vickrey--Clarke--Groves mechanism \citep{vic-61,cla-71, gro-73}.  In
contrast, for revenue maximization, the revenue-optimal mechanism in
the classical Bayesian setting of \citet{har-67} and \citet{mye-81}
depends on the prior distribution from which the agents' preferences
are drawn.  In fact, welfare is unique among objectives in that it is
optimized pointwise.  In most settings of mechanism design, the
optimal mechanism depends on the distribution of agent preferences.

Naturally, the prior-dependent optimal mechanism scales with the
distribution of preferences.  For example, the revenue-optimal
mechanism for selling a single item to agents with i.i.d.\ and uniformly
distributed values on $[0,1]$ is to run a second-price auction with a
reserve of $\nicefrac12$.  If the value distribution is scaled up to be uniform
on $[0,100]$, the optimal mechanism scales as well to use a reserve
price of $\nicefrac12 \times 100 = 50$.

The prior-robust auction design problem aims to develop a VCG-like
result but for revenue-like objectives.  A single mechanism is sought
that works well on a large family of distributions.  Most analyses
consider a family of distributions closed under scaling.\footnote{Note that the example of \Cref{s:robustness-paradigms} was
not closed under scaling.  This was needed to ensure that the other
robustness paradigms had non-trivial answers.}

The first papers in this area showed that it is possible to develop
nearly optimal auctions with large supply and large demand in the
limit; essentially the mechanism can do market analysis on the fly
\citep{GHW-01, seg-03, BV-03}.  When the optimal mechanism sells to
many agents, the market-analysis-on-the-fly approach is broadly
successful.  The recent literature has shifted to ask what is possible
with small supply and small markets; the canonical setting is one
with a single item and agents with values drawn i.i.d.\ from an unknown
but regular distribution (see \Cref{def:regular-mhr}, below).

We will see that the theorem of \citet{BK-96}, which shows that
recruiting one more agent improves revenue more than setting the
optimal reserve, implies a robust ratio guarantee.  Specifically, with
$n$ i.i.d.\ regularly distributed agents and a single item, the
second-price auction obtains at least a $1-\nicefrac1n$ fraction of the
optimal revenue.  Specifically, for $n=2$ the auction obtains at least half the
optimal revenue.  This $n=2$ result can also be seen by an elegant
geometric proof due to \citet{DRY-15}.  When the supply $k$ is
increased from $1$ to $n$, the general bound degrades as $1-\nicefrac{k}{n}$, which
is trivial for $k=n$.  Several simple mechanisms recover the $\nicefrac12$
guarantee for general $k$ and $n$, for example, limiting supply to
approximately half the number of bidders and running a welfare-maximizing
mechanism \citep{RTY-12}.
The final topic of this section is the question of whether the bound
of $\nicefrac12$ for the second-price auction with $k=1$ item and $n=2$ agents
is optimal or can be improved.  In fact, it can be improved.

\subsection{The Prior-robust Framework}

The prior-robust auction design framework chooses an auction before the value
distribution is known and evaluates its expected revenue against the Bayesian
optimal auction for that distribution.  This section develops the
single-unit and $k$-unit guarantees obtained from competition, gives two
$k$-unit mechanisms whose guarantees do not deteriorate with supply, and
summarizes the current optimal-design results for two bidders.

Consider $n$ unit-demand bidders whose values are drawn independently
and identically from a common distribution $\dist$.  Write
$\mech(\dist)$ for the expected revenue of mechanism $\mech$ and
$\SB(\dist)=\SB_{\dist}(\dist)$ for the expected revenue of the
Bayesian-optimal auction for $\dist$.  Throughout this section,
$\Mechs$ denotes the family of Bayesian-incentive-compatible
interim-individually-rational mechanisms for the stated environment.

This section restricts attention to i.i.d.\ distributions that satisfy
a natural convexity property known as regularity \citep{mye-81}.  Some
results impose the stronger monotone hazard rate property, which is
satisfied by many common distributions, including uniform, normal, and
exponential distributions.

\begin{definition}[Regular and monotone-hazard-rate distributions]
\label{def:regular-mhr}
For quantile $\quant\in[0,1]$, let $\Val(\quant)=\dist^{-1}(1-\quant)$
denote inverse demand and let $\rev(\quant)=\quant\,\Val(\quant)$
denote the revenue curve (\Cref{fig:single-sample-inverse-demand}).  A
distribution is regular if its revenue curve is concave
(\Cref{fig:single-sample-revenue-curve}).  If $\dist$ has density
$\dens$, it satisfies the monotone hazard rate property if its hazard
rate $\dens(\val)/(1-\dist(\val))$ is nondecreasing in $\val$.
Monotone hazard rate implies regularity.
\end{definition}

\begin{definition}[Prior-robust max-min ratio]
\label{def:prior-independent-max-min-ratio}
Fix an environment, a mechanism family $\Mechs$, and a distribution family
$\Dists$.  The prior-robust max-min ratio is
\[
  \OptPIratio
  =\sup_{\mech\in\Mechs}\inf_{\dist\in\Dists}
    \frac{\mech(\dist)}{\SB(\dist)}.
\]
The inner infimum is a fixed mechanism's prior-robust ratio; a ratio with $\SB(\dist)=0$ is one.
\end{definition}

\subsection{Single-unit Prior-robust Auctions}

The theorem of \citeauthor{BK-96} is central to the development of
prior-robust mechanisms.  It assumes that bidders' values are drawn
i.i.d.\ from a regular distribution (\Cref{def:regular-mhr}).  It shows
that a
welfare-maximizing single-item auction, e.g., the second-price auction, with
one additional bidder obtains at least as much revenue as the
revenue-optimal auction (without the additional bidder).  Its proof, below, is due to \citet{kir-06}.

\begin{theorem}[\citealp{BK-96,kir-06}]
\label{thm:bulow-klemperer-resource-augmentation}
For $n$ i.i.d.\ regular bidders in a single-unit environment, the
welfare-maximizing auction with $n+1$ bidders has expected revenue at least the
Bayesian optimal revenue with the original $n$ bidders.
\end{theorem}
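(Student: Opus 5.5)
The plan is Kirkegaard's argument: view the second-price auction on $n+1$ bidders as the revenue-optimal mechanism among all mechanisms that must sell the item, and then exhibit one must-sell mechanism on $n+1$ bidders whose revenue is at least $\SB(\dist)$ for $n$ bidders.

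First, establish the optimality claim using the Myerson theory the section presupposes. Expected revenue of any BIC, IIR mechanism equals expected virtual surplus, $\expect{\sum_i \virt(\vali)\,\alloci}$, where $\virt$ is the derivative of the revenue curve $\rev(\quant)$ in quantile space. Regularity, meaning $\rev$ is concave, makes $\virt$ monotone in value. Among mechanisms that always allocate the item, virtual surplus is therefore maximized pointwise by giving the item to the bidder with the highest virtual value. Because all $n+1$ bidders are i.i.d., that bidder is the one with the highest value, so the rule is the second-price auction (welfare-maximizing) with no reserve. Hence the second-price auction has the highest revenue among all must-sell mechanisms on $n+1$ bidders.

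Second, construct the competitor mechanism on $n+1$ bidders. Run the Bayesian optimal auction $\SB_{\dist}$ on bidders $1,\ldots,n$. If it sells, use its allocation and payments. If it does not sell, give the item to bidder $n+1$ for free. This mechanism always sells the item, so the first step applies to it. It is BIC and IIR: bidders $1,\ldots,n$ face exactly the same incentives as in $\SB_{\dist}$, and bidder $n+1$ pays nothing and cannot affect the outcome. Its revenue equals $\SB(\dist)$. Chaining the two steps gives the theorem.

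The only delicate step is the first, in particular confirming that the must-sell restriction does not reintroduce a need for ironing or a reserve. The key point is that the restriction just removes the option to withhold the item, so no reserve is needed. Since $\virt$ is monotone under regularity, the pointwise argmax over bidders is monotone in bids and therefore implementable, and no ironing is required. Ties, or flat parts of $\rev$, occur with measure zero, or can be broken in favor of the highest value without changing virtual surplus. Everything else is bookkeeping.
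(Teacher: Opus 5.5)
Your proposal is correct and follows the paper's proof, which is Kirkegaard's argument. Both show that regularity makes the second-price auction on $n+1$ bidders revenue-optimal among always-sell auctions, and both compare it with the always-sell auction that simulates the $n$-bidder optimal auction and gives any unsold item to bidder $n+1$ for free, which earns exactly the optimal $n$-bidder revenue.
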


\begin{proof}
Let $\RM(n)$ denote the expected revenue of the revenue-maximizing auction
with $n$ bidders, and let $\WM(n+1)$ denote the expected revenue of the
welfare-maximizing auction with $n+1$ bidders.

First, consider the revenue-maximizing auction among auctions for
$n+1$ bidders that always sell the item.  Expected revenue equals
expected virtual surplus \citep{mye-81}.  Because the bidders are
identically distributed and the distribution is regular, virtual value
is nondecreasing in value \citep{mye-81}.  Thus, subject to always
selling, expected virtual surplus is maximized by allocating the item
to a highest-valued bidder.  The second-price auction implements this
rule, so $\WM(n+1)$ is optimal among auctions that
always sell.

Now define an auction $B$ for $n+1$ bidders.  Auction $B$ simulates the
revenue-maximizing auction on bidders $1,\ldots,n$.  If that auction does not
sell the item, then $B$ gives the item to bidder $n+1$ for free.  The
allocation and payment of bidder $n+1$ do not depend on her report, while those
of the first $n$ bidders come from the simulation.  Thus, $B$
is incentive compatible, always sells the item, and has expected revenue $\RM(n)$.

Optimality of the second-price auction among auctions that always sell
gives $\WM(n+1)\geq B(n+1)=\RM(n)$, as claimed.
\end{proof}

Although \Cref{thm:bulow-klemperer-resource-augmentation} is stated to
compare markets of different sizes, it can also be used to compare the
welfare-maximizing and revenue-optimal mechanisms in the same market.  This
comparison gives a prior-robust guarantee.

\begin{corollary}[\citealp{DRY-15}]
\label{cor:second-price-worst-case-ratio}
For $n\geq2$ i.i.d.\ regular bidders and the revenue objective, the
$n$-bidder second-price auction has prior-robust ratio at least
$1-\nicefrac{1}{n}$.
\end{corollary}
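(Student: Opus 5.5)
The plan is to combine \Cref{thm:bulow-klemperer-resource-augmentation}, applied with $n-1$ original bidders, with a symmetry argument showing that the optimal revenue grows sublinearly in the number of bidders. Write $\RM(k)$ and $\WM(k)$ for the optimal and second-price expected revenues with $k$ i.i.d.\ regular bidders. Applying \Cref{thm:bulow-klemperer-resource-augmentation} with $n-1\geq1$ bidders gives $\WM(n)\geq\RM(n-1)$. It then suffices to show
\[
\RM(n-1)\geq \tfrac{n-1}{n}\,\RM(n),
\]
since together these give $\WM(n)/\SB(\dist)=\WM(n)/\RM(n)\geq 1-\nicefrac1n$. When $\SB(\dist)=0$ the ratio is one by convention.

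To prove the key inequality, I will build an $(n-1)$-bidder mechanism from the optimal $n$-bidder mechanism $\SB_{\dist}$. First, simulate $\SB_{\dist}$ with one fictitious bidder whose value is drawn from $\dist$ by the mechanism itself. Choose the position of this fictitious bidder uniformly at random among the $n$ slots, or simply note that by symmetry of i.i.d.\ bidders each bidder's expected payment in $\SB_{\dist}$ equals $\RM(n)/n$. Collect the real bidders' payments as computed by the simulation; the fictitious bidder's payment is not collected.

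Next, I will check that this mechanism is Bayesian incentive compatible and interim individually rational for the $n-1$ real bidders. Each real bidder faces the same interim allocation and payment as in $\SB_{\dist}$, because the others' values are still i.i.d.\ from $\dist$. The item is allocated at most once, so feasibility holds; if the simulation gives the item to the fictitious bidder, no one gets it. The resulting revenue is $(n-1)\cdot\RM(n)/n$. Since $\RM(n-1)$ is optimal among such mechanisms, $\RM(n-1)\geq\frac{n-1}{n}\RM(n)$.

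I expect the main subtlety to be the symmetry claim, namely that the optimal mechanism's revenue splits equally across bidders. I plan to handle it without relying on any particular optimal mechanism by randomizing the fictitious bidder's slot uniformly. The expected dropped revenue is then the average of the per-slot revenues, which is $\RM(n)/n$ whether or not $\SB_{\dist}$ is symmetric. Incentives are unaffected, since each real bidder's interim rule is a mixture of interim rules that are each incentive compatible and individually rational.
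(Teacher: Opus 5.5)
Your proposal is correct and follows the paper's proof: apply \Cref{thm:bulow-klemperer-resource-augmentation} with $n-1$ bidders to get $\WM(n)\geq\RM(n-1)$, then simulate the optimal $n$-bidder auction with a dummy bidder drawn from $\dist$ and discard the dummy's payment to get $\RM(n-1)\geq(1-\nicefrac1n)\RM(n)$. Placing the dummy in a uniformly random slot is a small refinement: it justifies the paper's appeal to exchangeability even when the chosen optimal auction is not symmetric across bidders.
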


\begin{proof}
Fix a regular distribution $\dist$, and write $\RM(n)$ and $\WM(n)$ for the
expected revenues of the revenue-maximizing and welfare-maximizing auctions,
respectively, with $n$ bidders.

Construct an auction $B$ for $n-1$ bidders as follows.  Draw an independent
dummy value from $\dist$ and run the revenue-maximizing $n$-bidder auction on
the $n-1$ bidders together with the dummy bidder.  If the dummy bidder wins,
reject everyone; otherwise, implement the outcome of the simulated auction
for the real bidders.

The simulated bidders are exchangeable, so each contributes one $n$th of
the expected revenue of the $n$-bidder revenue-maximizing auction.  Auction
$B$ retains the payments of the $n-1$ real bidders and discards the payment
of the dummy bidder.  Consequently, $B(n-1)=(1-\nicefrac1n)\RM(n)$.  Revenue
optimality with $n-1$ bidders therefore implies
$\RM(n-1)\geq(1-\nicefrac1n)\RM(n)$.

Applying \cref{thm:bulow-klemperer-resource-augmentation} with $n-1$
original bidders gives $\WM(n)\geq\RM(n-1)$.  Combining the two inequalities
gives $\WM(n)\geq(1-\nicefrac1n)\RM(n)$, as claimed.
\end{proof}

The sample-based results below are intermediate steps toward
prior-robust mechanisms that do not have sample access.

Another interpretation of the theorem of \citeauthor{BK-96}, from the
$n=1$ special case, is that a random price (from the distribution)
guarantees half the revenue of the optimal, a.k.a., monopoly, price.
Consider the monopoly-pricing problem for a buyer whose value is drawn
from $\dist$.  Monopoly pricing with a random reserve draws
$\critval\sim\dist$ independently of the buyer's value and offers the
item at price $\critval$.  \citet{DRY-15} give an elegant geometric 
analysis, based on the \citet{BR-89} reinterpretation of virtual values
as marginal revenue,  proving this special case of the theorem. 

\begin{figure}
  \centering
  \begingroup
\begin{subfigure}[t]{0.31\linewidth}
\centering
\begin{tikzpicture}[
  x=2.65cm,
  y=1.43cm,
  >=Stealth,
  every node/.style={font=\scriptsize}
]
  \fill[gray!22] (0,0) rectangle (.72,.28);
  \node at (.36,.14) {$\rev(\exquant)$};
  \draw[->] (0,0) -- (1.08,0) node[below right] {$\quant$};
  \draw[->] (0,0) -- (0,1.08) node[above] {$\Val(\quant)$};
  \draw (0,.025) -- (0,-.025) node[below] {$0$};
  \draw (1,.025) -- (1,-.025) node[below] {$1$};
  \draw (-.025,1) node[left] {$1$} -- (.025,1);
  \draw[very thick] (0,1) -- (1,0);
  \draw[dashed] (.72,0) -- (.72,.28) -- (0,.28);
  \fill (.72,.28) circle (1.4pt);
  \node[below] at (.72,0) {$\exquant$};
  \node[left] at (0,.28) {$\Val(\exquant)$};
\end{tikzpicture}

\caption{Inverse demand.}
\label{fig:single-sample-inverse-demand}
\end{subfigure}
\hfill
\begin{subfigure}[t]{0.31\linewidth}
\centering
\begin{tikzpicture}[
  x=2.65cm,
  y=1.43cm,
  >=Stealth,
  every node/.style={font=\scriptsize}
]
  \draw[->] (0,0) -- (1.08,0) node[below right] {$\quant$};
  \draw[->] (0,0) -- (0,1.08) node[above] {$\rev(\quant)$};
  \draw (0,.025) -- (0,-.025) node[below] {$0$};
  \draw (1,.025) -- (1,-.025) node[below] {$1$};
  \draw (-.025,1) node[left] {$\nicefrac14$} -- (.025,1);
  \draw (0,0) -- (.94,1.0528);
  \node[overlay,inner sep=.5pt] (slope-label) at (.48,1.18)
    {slope $\Val(\exquant)$};
  \draw[->,overlay] (slope-label.south east) -- (.86,.9632);
  \draw[very thick,domain=0:1,smooth]
    plot (\x,{4*\x*(1-\x)});
  \draw[dashed] (.72,0) -- (.72,.8064) -- (0,.8064);
  \fill (.72,.8064) circle (1.4pt);
  \node[below] at (.72,0) {$\exquant$};
  \node[left] at (0,.8064) {$\rev(\exquant)$};
\end{tikzpicture}

\caption{Revenue curve.}
\label{fig:single-sample-revenue-curve}
\end{subfigure}
\hfill
\begin{subfigure}[t]{0.31\linewidth}
\centering
\begin{tikzpicture}[
  x=2.65cm,
  y=1.43cm,
  >=Stealth,
  every node/.style={font=\scriptsize}
]
  \fill[gray!10] (0,0) rectangle (1,1);
  \fill[gray!24]
    plot[domain=0:1,smooth] (\x,{4*\x*(1-\x)})
    -- (1,0) -- (0,0) -- cycle;
  \fill[gray!45] (0,0) -- (.5,1) -- (1,0) -- cycle;
  \draw[dashed] (0,0) rectangle (1,1);
  \draw[very thick,domain=0:1,smooth]
    plot (\x,{4*\x*(1-\x)});
  \draw[->] (0,0) -- (1.08,0) node[below right] {$\quant$};
  \draw[->] (0,0) -- (0,1.08) node[above] {$\rev(\quant)$};
  \draw (0,.025) -- (0,-.025) node[below] {$0$};
  \draw (1,.025) -- (1,-.025) node[below] {$1$};
  \fill (.5,1) circle (1.4pt);
  \node[below] at (.5,0) {$\monoq$};
  \node[left] at (0,1) {$\rev(\monoq)$};
\end{tikzpicture}

\caption{Single-sample revenue.}
\label{fig:single-sample-revenue-geometry}
\end{subfigure}
\endgroup
   \caption{Single-sample monopoly pricing for $U[0,1]$.
    Panel~\subref{fig:single-sample-inverse-demand} plots inverse demand:
    price $\Val(\exquant)$ sells with probability $\exquant$.
    Panel~\subref{fig:single-sample-revenue-curve} plots the revenue curve
    $\rev(\exquant)=\exquant\,\Val(\exquant)$, which is concave for regular
    distributions.  In panel~\subref{fig:single-sample-revenue-geometry},
    a sample has a uniform quantile, so expected random-reserve revenue is
    the area under $\rev$ (medium gray).  Concavity places beneath $\rev$
    a triangle (dark gray) with half the area of the monopoly-revenue
    rectangle (light gray).  Adapted from \citet{har-13}.}
  \label{fig:single-sample-monopoly-pricing}
\end{figure}

\begin{theorem}[\citealp{DRY-15}]
  \label{thm:single-sample-monopoly-pricing}
  For regular distributions and the revenue objective, single-sample
  monopoly pricing has prior-robust ratio \(\nicefrac12\).
\end{theorem}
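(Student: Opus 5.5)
Work in quantile space, following the geometry of \Cref{fig:single-sample-revenue-geometry}. Fix a regular distribution $\dist$ with revenue curve $\rev(\quant)=\quant\,\Val(\quant)$, concave by \Cref{def:regular-mhr}. The first step expresses the single-sample mechanism's revenue as an integral of the revenue curve. A sample $\critval\sim\dist$ has quantile $\exquant=1-\dist(\critval)$ that is uniform on $[0,1]$; for atomless $\dist$ this is exact, and atoms can be handled by the usual randomized tie-breaking or a limiting argument. Posting price $\Val(\exquant)$ sells with probability $\exquant$ and earns $\rev(\exquant)$. The expected revenue is therefore $\int_0^1 \rev(\quant)\,\dd\quant$, the area under the revenue curve.

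The second step identifies the benchmark. For a single buyer, the Bayesian optimal mechanism posts the monopoly price \citep{mye-81}, so $\SB(\dist)=\rev(\monoq)=\max_\quant \rev(\quant)$. This is the area of the rectangle $[0,1]\times[0,\rev(\monoq)]$.

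The third step is the concavity comparison, which is the key point. Because $\rev$ is concave with $\rev(0)\ge 0$ and $\rev(1)\ge 0$ (revenue is nonnegative), its graph lies above the triangle with vertices $(0,0)$, $(\monoq,\rev(\monoq))$ and $(1,0)$. Concretely, for $\quant\le\monoq$ we have $\rev(\quant)\ge (\quant/\monoq)\rev(\monoq)$, and the symmetric bound holds on $[\monoq,1]$. Integrating gives
\[
\int_0^1\rev(\quant)\,\dd\quant\ \ge\ \tfrac12\,\rev(\monoq).
\]
This yields a ratio of at least $\nicefrac12$ for every regular $\dist$. Some care is needed at the boundary: $\rev(0)=\lim_{\quant\to0}\quant\Val(\quant)$ may be positive for bounded support and is $0$ otherwise. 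Either way it is nonnegative, so the chord bound still holds.

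The final step is tightness, to show that the ratio is exactly $\nicefrac12$ and not larger. The plan is to exhibit regular distributions whose revenue curve approaches the triangle itself, with $\monoq\to0$. The natural choice is a truncated equal-revenue family such as $\dist(\val)=1-1/(\val+1)$, truncated at a large value $H$. Its revenue curve is $\rev(\quant)=1-\quant$ on most of $[0,1]$, with a steep initial rise over a quantile interval of width about $1/H$. The area under this curve tends to $\nicefrac12$, while the monopoly revenue tends to $1$. I expect this step to be the main obstacle. One must verify both that this family is regular and that the limiting ratio is exactly $\nicefrac12$. If it fails, the fallback is to use directly the distribution whose revenue curve is the triangle with apex $(\epsilon,1)$ and let $\epsilon\to0$.
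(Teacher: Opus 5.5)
Your proof is correct and is essentially the paper's: the sample's quantile is uniform, so the expected revenue is the area under the concave revenue curve. The inscribed triangle with apex at the monopoly quantile then covers half of the monopoly rectangle. For tightness, the paper uses the untruncated version of your family, $\dist(\val)=\val/(\val+1)$ with $\rev(\quant)=1-\quant$, which gives ratio exactly $\nicefrac12$ against the supremum monopoly revenue $1$, so the truncation step you flagged as the main obstacle is easy (your truncated curve is essentially the triangle itself) or unnecessary. Also, your boundary aside is reversed: $\rev(0)=0$ whenever the support is bounded, and $\rev(0)>0$ is possible only for unbounded support, as in this very example.
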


\begin{proof}
Let $\Val(\quant) = \dist^{-1}(1-\quant)$ denote inverse demand at
quantile $\quant$, and define the revenue curve by
$\rev(\quant)=\quant\,\Val(\quant)$.  Regularity is equivalent to
concavity of $\rev$.  See
\Cref{fig:single-sample-inverse-demand,fig:single-sample-revenue-curve} and \Cref{def:regular-mhr}.

Write the random reserve as $\critval=\Val(\exquant)$.  Because
$\critval\sim\dist$, its quantile $\exquant$ is uniform on $[0,1]$.
The expected revenue for a given $\exquant$ is $\rev(\exquant)$.  Thus,
the expected random-reserve revenue, depicted by the shaded area under
$\rev$ in \cref{fig:single-sample-revenue-geometry}, is
\[
\expect[\exquant]{\rev(\exquant)} = \int_0^1 \rev(\exquant)\,\dd\exquant.
\]

Let $\monoq$ maximize $\rev$.  The optimal monopoly revenue is
$\rev(\monoq)$, which can be interpreted geometrically as the area of
the smallest rectangle containing $\rev$ in
\cref{fig:single-sample-revenue-geometry}.  Concavity and
nonnegativity imply that $\rev$ lies above the inscribed triangle of
$\rev$ shown there, joining $(0,0)$, $(\monoq,\rev(\monoq))$, and
$(1,0)$.  The area of this inscribed triangle is exactly half the area
of the inscribing rectangle.  Consequently,
\[
\expect[\exquant]{\rev(\exquant)}
  \geq \nicefrac{1}{2}\,\rev(\monoq),
\]
which proves the claim.
It is easy to see that this analysis is tight on the distribution with
revenue curve $\rev(\quant)=1-\quant$.
\end{proof}

Viewing the buyer's value and the random reserve as two i.i.d.\ bids,
symmetry implies that the second-price auction earns twice the revenue of
monopoly pricing with a random reserve.  Thus
\cref{thm:single-sample-monopoly-pricing} is equivalently the two-bidder
Bulow--Klemperer guarantee.

\subsection{Multi-unit Prior-robust Auctions}

Suppose that $k$ identical units are available and each bidder demands
at most one unit.  A welfare-maximizing $k$-unit auction serves the $k$
highest-valued bidders and charges every winner the $(k+1)$st-highest
value.  The \citet{kir-06} proof of the theorem of \citet{BK-96} can
easily be extended to show that $k$ additional agents are sufficient
to guarantee that the welfare-maximizing auction obtains the revenue of the
revenue-optimal auction (without the additional bidders).

\begin{theorem}[\citealp{RTY-12}]
\label{thm:multi-unit-bulow-klemperer}
For $n$ i.i.d.\ regular bidders in a $k$-unit environment, the
welfare-maximizing auction with $n+k$ bidders has expected revenue at least the
Bayesian optimal revenue with the original $n$ bidders.
\end{theorem}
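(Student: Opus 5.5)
The plan is to extend the \citet{kir-06} argument from \Cref{thm:bulow-klemperer-resource-augmentation} directly, replacing ``always sells the item'' with ``always sells all $k$ units.'' Let $\RM_k(n)$ be the optimal expected revenue for $n$ bidders and $k$ units, and let $\WM_k(n+k)$ be the expected revenue of the $k$-unit welfare-maximizing auction with $n+k$ bidders. The argument has two steps.

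First, I show that $\WM_k(n+k)$ is optimal among Bayesian incentive compatible, interim individually rational auctions for $n+k$ bidders that always allocate exactly $k$ units (possible since $n+k\geq k$). By \citet{mye-81}, expected revenue equals expected virtual surplus, using the standard normalization that the lowest type pays its utility of zero. Regularity makes the common virtual value nondecreasing in value. So, subject to allocating exactly $k$ units, virtual surplus is maximized pointwise by serving the $k$ highest-valued bidders. This is the welfare-maximizing $k$-unit auction; charging the $(k+1)$st-highest value implements it, with the lowest type paying zero. Ties and ironing are not needed because the distribution is regular and the bidders are symmetric.

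Second, I construct a competitor auction $B$ for $n+k$ bidders. It runs the revenue-optimal $k$-unit auction on bidders $1,\ldots,n$. Suppose that auction allocates $j\leq k$ units. Then $B$ gives the $k-j$ leftover units for free to $k-j$ of the extra bidders $n+1,\ldots,n+k$, for instance the lowest-indexed ones. The outcomes of the extra bidders do not depend on their own reports, so their incentive and participation constraints hold trivially. The first $n$ bidders face exactly the optimal mechanism, so $B$ is BIC and IR. $B$ always allocates exactly $k$ units, and its expected revenue is $\RM_k(n)$. By the first step, $\WM_k(n+k)\geq B=\RM_k(n)$.

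The main obstacle is the first step, specifically the claim that virtual-surplus maximization under an ``exactly $k$ units'' constraint selects the top $k$ values. With $k$ units, serving the highest virtual values may mean serving bidders with negative virtual values. This is allowed here, because the constraint forces exactly $k$ allocations rather than at most $k$, and the pointwise argmax is still the top $k$ by monotonicity. The second step needs $n+k$ extra slots only to guarantee that enough free recipients exist: at most $k$ leftover units arise, and exactly $k$ extra bidders are available.
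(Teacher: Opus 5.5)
Your proposal is correct and follows the paper's approach. The paper gives no separate proof and only says that the \citet{kir-06} argument for \Cref{thm:bulow-klemperer-resource-augmentation} extends easily to $k$ units. Your two steps are that extension: the $k$-unit welfare-maximizing auction is revenue-optimal among auctions that always sell all $k$ units, and the auction $B$ runs the optimal auction on the original $n$ bidders and gives the leftover units to the $k$ extra bidders.
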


\begin{corollary}[\citealp{RTY-12}]
\label{cor:k-unit-vickrey-ratio}
For $1\leq k<n$, i.i.d.\ regular values, and the revenue objective, the
welfare-maximizing auction has prior-robust ratio at least
$1-\nicefrac{k}{n}$.
\end{corollary}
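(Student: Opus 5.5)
The plan mirrors the proof of \Cref{cor:second-price-worst-case-ratio}, with the single dummy bidder replaced by $k$ dummy bidders and \Cref{thm:bulow-klemperer-resource-augmentation} replaced by \Cref{thm:multi-unit-bulow-klemperer}. Fix a regular distribution $\dist$ and write $\RM(n)$ and $\WM(n)$ for the expected revenues of the $k$-unit revenue-maximizing and welfare-maximizing auctions with $n$ bidders. The goal is the chain
\[
\WM(n)\;\geq\;\RM(n-k)\;\geq\;\bigl(1-\nicefrac{k}{n}\bigr)\RM(n),
\]
which gives the claimed ratio. The hypothesis $k<n$ guarantees $n-k\geq1$, so the intermediate market is nonempty.

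The first inequality is \Cref{thm:multi-unit-bulow-klemperer} applied with $n-k$ original bidders. It states that the welfare-maximizing $k$-unit auction with $(n-k)+k=n$ bidders earns at least the optimal revenue with $n-k$ bidders.

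For the second inequality, I construct an auction $B$ for the $n-k$ real bidders. $B$ draws $k$ independent dummy values from $\dist$ and runs the revenue-optimal $n$-bidder $k$-unit auction on the real and dummy bidders together. Any units the simulation assigns to dummies are discarded, and the real bidders receive their simulated allocations and payments. The discarded units only loosen feasibility, and each real bidder's interim allocation and payment, averaged over the dummies, equal those in the simulated BIC and IR mechanism. Hence $B$ is Bayesian incentive compatible and interim individually rational.

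The only point needing care is the revenue accounting for $B$. Because values are i.i.d., one may take the optimal $n$-bidder auction to be symmetric (for instance, by symmetrizing over a random relabeling of bidders, which preserves optimality). Then the real and dummy bidders are exchangeable, and each contributes $\RM(n)/n$ in expectation. $B$ keeps the payments of the $n-k$ real bidders, so $B(n-k)=\frac{n-k}{n}\RM(n)$. Optimality then gives $\RM(n-k)\geq(1-\nicefrac{k}{n})\RM(n)$. Combining the two inequalities yields $\WM(n)/\RM(n)\geq 1-\nicefrac{k}{n}$ for every regular $\dist$, which is the stated prior-robust ratio bound.
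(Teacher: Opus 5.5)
Your proof is correct and matches the argument the paper intends. The paper gives no separate proof of this corollary: it is the $k$-unit analogue of \Cref{cor:second-price-worst-case-ratio}, obtained by applying \Cref{thm:multi-unit-bulow-klemperer} with $n-k$ original bidders and using $k$ dummy bidders in the simulation, so that the retained revenue is $(1-\nicefrac{k}{n})\RM(n)$. Your explicit symmetrization step also justifies the exchangeability claim that the paper's single-unit proof leaves implicit.
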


The guarantee of \Cref{cor:k-unit-vickrey-ratio} is at least one half
when $k\leq \nicefrac n2$, but it approaches zero as $k$ approaches $n$.
\citet{DRY-15} and \citet{RTY-12} give approaches to design mechanisms
that guarantee a large fraction of the optimal revenue: random
reserves and limiting the supply, respectively.

The first construction draws one independent sample from $\dist$ and uses
it as an anonymous reserve in the welfare-maximizing auction.  For i.i.d.\ regular
bidders, the welfare-maximizing auction with the monopoly reserve is Bayesian optimal.
It is enough to compare the sample reserve with this monopoly reserve.

\begin{theorem}[\citealp{DRY-15}]
\label{thm:k-unit-single-sample-reserve}
\label{prop:single-sample-market}
For i.i.d.\ regular values in a $k$-unit environment and the revenue
objective, the welfare-maximizing auction with one independent sample
as an anonymous reserve has prior-robust ratio at least $\nicefrac12$.
\end{theorem}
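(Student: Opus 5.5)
Since the welfare-maximizing $k$-unit auction with the monopoly reserve is Bayesian optimal, it suffices to show that a sample reserve earns at least half the revenue of the monopoly reserve $\Val(\monoq)$. The plan is to decompose revenue by virtual surplus and reduce to a per-bidder comparison. By \citet{mye-81}, the expected revenue of any auction equals its expected virtual surplus. In quantile space, bidder $i$ with quantile $\quanti$ has marginal revenue $\rev'(\quanti)$. Both the optimal auction and the sample-reserve auction serve the $k$ highest bidders among those above the reserve. So both allocation rules have a common form: a bidder is served if and only if she is among the top $k$ overall and her quantile is below the reserve quantile $\exquant$. The monopoly reserve corresponds to $\exquant=\monoq$; the sample reserve has $\exquant$ uniform on $[0,1]$.

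Fix the competition: condition on the quantiles of the other bidders and let $\critquant$ be the quantile threshold bidder $i$ must beat to be among the top $k$. Then bidder $i$ is served if and only if $\quanti\le\min(\critquant,\exquant)$. Her contribution to virtual surplus is
\[
\int_0^{\min(\critquant,\exquant)}\rev'(\quant)\,\dd\quant=\rev(\min(\critquant,\exquant)).
\]
Hence it suffices to prove, for every fixed $\critquant\in[0,1]$,
\[
\expect[\exquant\sim U{[0,1]}]{\rev(\min(\critquant,\exquant))}\;\ge\;\nicefrac12\,\rev(\min(\critquant,\monoq)).
\]
Taking expectations over $\critquant$ and summing over bidders then gives the theorem.

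This one-dimensional inequality is the crux, and I will prove it with the geometric argument of \Cref{thm:single-sample-monopoly-pricing} applied to a truncated curve. Define $g(\quant)=\rev(\min(\critquant,\quant))$. It is concave, being the concave $\rev$ on $[0,\critquant]$ continued as a constant, with nonnegative slope at the kink only if $\critquant\le\monoq$. It is also nonnegative, with $g(0)=0$. Its maximum is $g^\star=\rev(\min(\critquant,\monoq))$, attained at $\quant^\star=\min(\critquant,\monoq)$. The desired inequality then follows by comparing $\int_0^1 g$ with the area of a shape lying beneath it:
\begin{itemize}
\item If $\critquant\ge\monoq$, then $g$ is concave on $[0,\monoq]$ and equals $\rev$ there. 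On $[\monoq,1]$ it is at least $\rev$ when $\quant\le\critquant$ and at least $\rev(\critquant)\ge\rev(\quant)$ beyond. So $\int_0^1 g\ge\int_0^1\rev\ge\nicefrac12\rev(\monoq)$ by the triangle argument.
\item If $\critquant<\monoq$, then $g$ rises concavely to $\rev(\critquant)$ at $\critquant$ and stays flat. It therefore lies above the triangle from $(0,0)$ to $(\critquant,\rev(\critquant))$ followed by a flat segment at height $\rev(\critquant)$. That shape has area $\rev(\critquant)(1-\critquant/2)\ge\nicefrac12\rev(\critquant)$.
\end{itemize}

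The main obstacle I expect is justifying the decomposition carefully: the shared top-$k$ threshold must not depend on which reserve is used. This holds because the event "among the top $k$" depends only on the other bidders' values and not on the reserve. Ties and atoms are handled by the usual quantile conventions, and the independence of the sample from the bidders is what lets $\exquant$ be uniform conditional on $\critquant$. With that in place, the bound is per-bidder and pointwise in the competition, so the ratio $\nicefrac12$ follows.
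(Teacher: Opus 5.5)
Your proposal is correct and follows the same route as the paper's sketch: condition on the other bidders, reduce to the per-bidder comparison of $\expect[\exquant]{\rev(\min\{\exquant,\critquant\})}$ with $\rev(\min\{\monoq,\critquant\})$, and close with an inscribed-triangle argument, which you make explicit through the two cases $\critquant\ge\monoq$ (the truncated curve dominates $\rev$ pointwise) and $\critquant<\monoq$ (triangle plus flat segment). Two small fixes, neither of which affects the argument: the truncated curve is concave only when $\critquant\le\monoq$, and your case analysis correctly never uses concavity otherwise; and it is cleaner to compute bidder $i$'s conditional payment directly as the posted price $\Val(\min\{\exquant,\critquant\})$ times its sale probability $\min\{\exquant,\critquant\}$, which gives $\rev(\min\{\exquant,\critquant\})$ without the virtual-surplus identity's implicit assumption $\rev(0)=0$ (for instance, the equal-revenue distribution is regular but has $\rev(0)=1$).
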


\begin{proof}[Proof sketch.]
Condition on the values $\valsmi$ of the other agents.  The expected
payments from agent $i$ under the sample-reserve mechanism and the
monopoly-reserve mechanism can then be compared using the same
geometric argument as in
\Cref{thm:single-sample-monopoly-pricing}.  Here the critical quantile
faced by the agent is the minimum of the reserve quantile and the
critical quantile induced by the supply constraint and the other
agents' values.  Denote the latter quantile by
$\exquanti^{\WM}$.  The conditional expected payments under the
sample reserve and the monopoly reserve are, respectively,
\[
  \expect[\exquant]{
    \rev\bigl(\min\{\exquant,\exquanti^{\WM}\}\bigr)
  }
  \quad\text{and}\quad
  \rev\bigl(\min\{\monoq,\exquanti^{\WM}\}\bigr).
\]
A similar inscribed triangle argument can be applied to show the former is at least half of the latter.  Illustrations
can be found, e.g., in \citet{har-13}.  The theorem follows by taking expectation over the other agents' values and summing
across agents.
\end{proof}

While \Cref{thm:k-unit-single-sample-reserve} does not give a fully
prior-robust mechanism, it is straightforward to use the same ideas to
obtain a fully prior-robust mechanism.  Details are in \citet{DRY-15} and \citet{har-13}.  Another approach is by limiting
the supply.

\begin{theorem}[\citealp{RTY-12}]
\label{thm:k-unit-supply-limiting}
For an i.i.d.\ regular $k$-unit environment and the revenue objective,
the welfare-maximizing auction with supply limit $\ell = k$ if $k \leq
\floor{\nicefrac n2}$ and $\ell \sim U\{\floor{\nicefrac n2},\ceiling{\nicefrac n2}\}$,
otherwise, has prior-robust ratio at least $\nicefrac12$.
\end{theorem}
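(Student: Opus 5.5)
Two cases, both resting on \Cref{thm:multi-unit-bulow-klemperer}, in the same spirit as \Cref{cor:second-price-worst-case-ratio}.

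\emph{Case $k\le\floor{\nicefrac n2}$.} Here the mechanism is the ordinary $k$-unit welfare-maximizing auction. \Cref{cor:k-unit-vickrey-ratio} already gives ratio at least $1-\nicefrac kn\ge\nicefrac12$, so nothing more is needed.

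\emph{Case $k>\floor{\nicefrac n2}$.} Write $\RM_k(n)$ and $\WM_\ell(n)$ for the optimal $k$-unit revenue and the $\ell$-unit welfare-maximizing revenue with $n$ bidders. The plan has three steps.

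First, reduce supply at a proportional cost. The Bayesian optimal $k$-unit auction for i.i.d.\ regular bidders is the welfare-maximizing auction with the monopoly reserve. Its revenue equals expected virtual surplus, namely the sum of the top $k$ nonnegative virtual values. Restrict this to $\ell$ units by keeping the top $\ell$ of them. The top $\ell$ of $k$ nonnegative numbers carry at least an $\nicefrac{\ell}{k}$ fraction of their sum. Hence $\RM_\ell(n)\ge\frac{\ell}{k}\RM_k(n)$, and since $k\le n$ this is at least $\frac{\ell}{n}\RM_k(n)$.

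Second, reduce bidders. Use the dummy-bidder simulation from \Cref{cor:second-price-worst-case-ratio}. Run the optimal $\ell$-unit auction for $n$ bidders on $n-\ell$ real bidders plus $\ell$ dummies drawn from $\dist$, and discard any units won by dummies. By exchangeability, this retains a $\frac{n-\ell}{n}$ fraction of the revenue, so $\RM_\ell(n-\ell)\ge\frac{n-\ell}{n}\RM_\ell(n)$.

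Third, apply \Cref{thm:multi-unit-bulow-klemperer} with $n-\ell$ original bidders and $\ell$ units. This gives $\WM_\ell(n)\ge\RM_\ell(n-\ell)$. Chaining the three steps,
\[
\WM_\ell(n)\;\ge\;\frac{n-\ell}{n}\cdot\frac{\ell}{n}\,\RM_k(n).
\]

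The main obstacle is that this product falls short of $\nicefrac12$: at $\ell=\nicefrac n2$ it equals only $\nicefrac14$. The two losses have to be combined jointly rather than multiplied. The first fix I would try is to compare $\WM_\ell(n)$ directly to $\RM_k(n)$ in virtual-surplus terms. $\WM_\ell(n)$ is the expected sum of the top $\ell$ virtual values out of $n$, including negative ones, and the benchmark is the sum of the positive ones among the top $k$.

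Pass to quantiles, where the virtual value at quantile $\quant$ is $\rev'(\quant)$. The $j$th order statistic of virtual value then has expectation given by an integral of $\rev'$ against a Beta density. By concavity of $\rev$, these expected order-statistic virtual values decrease in $j$, so their partial sums form a concave sequence in $j$, starting at $0$ for $j=0$ and ending at the expected total virtual value (which is nonnegative) for $j=n$. Concavity of this sequence should then give that the sum at $j=\floor{\nicefrac n2}$, or the average over $j\in\{\floor{\nicefrac n2},\ceiling{\nicefrac n2}\}$, is at least half the maximum partial sum over $j\le k$. That maximum upper-bounds $\RM_k(n)$.

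The randomization between $\floor{\nicefrac n2}$ and $\ceiling{\nicefrac n2}$ handles odd $n$. It makes the averaged index exactly $\nicefrac n2$, where linear interpolation of a concave function between endpoints $0$ and a nonnegative value gives the factor $\nicefrac12$.

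One subtlety remains. The per-bidder revenue benchmark uses the sum of positive virtual values, not a fixed-$j$ partial sum. I expect to bridge this by noting that $\RM_k(n)\le\max_{j\le k}$ of the expected top-$j$ sum only after moving the maximum outside the expectation. That step is invalid as stated, so it will need a Myerson-ironing or sample-reserve argument in the style of \Cref{thm:k-unit-single-sample-reserve}. This is the step I would scrutinize most.
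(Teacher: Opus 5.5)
Your first case matches the paper's. You also correctly diagnose that the chain through \Cref{thm:multi-unit-bulow-klemperer} yields only $\frac{\ell(n-\ell)}{n^2}\le\nicefrac14$. The proposed fix, however, fails at exactly the step you flag. Neither ironing (irrelevant here, since the distribution is regular) nor an unspecified sample-reserve patch supplies what is missing.

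Write $S_j$ for the sum of the top $j$ virtual values. The benchmark is $\RM_k(n)=\expect{\max_{j\le k}S_j}\ge\max_{j\le k}\expect{S_j}$, which is the wrong direction for you. The concavity triangle over the index $j$ already spends the entire factor $\nicefrac12$. So your chain needs $\RM_k(n)\le\max_{j\le k}\expect{S_j}$ with constant one, and that is false. Take $U[0,1]$ with $n=k=2$, so $\ell=1$. Then $\RM_2(2)=\nicefrac12$ while $\max_j\expect{S_j}=\expect{S_1}=\nicefrac13$. Your chain therefore certifies only $\expect{S_1}\ge\nicefrac16$, short of the required $\frac12\RM_2(2)=\nicefrac14$. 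The triangle is being applied to the wrong variable.

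The missing idea is to relax the benchmark to $n$ units and apply the single-sample triangle once, per agent, in quantile space.

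\emph{Benchmark.} Extra supply cannot lower the optimum, so $\RM_k(n)\le\RM_n(n)=n\,\rev(q^\star)$. With $n$ units, each bidder is a separate monopoly-pricing problem.

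\emph{Mechanism revenue per bidder.} In the $\ell$-unit welfare-maximizing auction, condition on the other bidders. A bidder wins with probability $\hat q$, the $\ell$th smallest of the other $n-1$ uniform quantiles, and then pays the value at quantile $\hat q$. Her expected payment is therefore $\expect{\rev(\hat q)}$, which is the per-agent form of your $\expect{S_\ell}$.

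\emph{Distribution of $\hat q$.} With $\ell\sim U\{\floor{\nicefrac n2},\ceiling{\nicefrac n2}\}$, the law of $\hat q$ is $\mathrm{Beta}(a,a)$ with $a\ge1$ when $n\ge2$. It has mean $\nicefrac12$ and is a mean-preserving contraction of $U[0,1]$.

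\emph{Conclusion.} Let $\tilde q\sim U[0,1]$. Concavity of $\rev$ and Jensen's inequality give $\expect{\rev(\hat q)}\ge\expect{\rev(\tilde q)}$. By \Cref{thm:single-sample-monopoly-pricing}, $\expect{\rev(\tilde q)}\ge\frac12\rev(q^\star)$. Summing over the $n$ bidders gives $\WM_\ell(n)\ge\frac12\RM_n(n)\ge\frac12\RM_k(n)$.

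This is the paper's argument. The randomization of $\ell$ serves to center the critical quantile at $\nicefrac12$, not to interpolate a concave sequence in $j$.
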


\begin{proof}
  If $k \leq \floor{\nicefrac n2}$ this result follows from
  \Cref{cor:k-unit-vickrey-ratio}.

  Otherwise, the result follows from comparing the revenue of the
  supply-limited mechanism to the optimal revenue with $n$ agents and
  $n$ units.  This latter revenue is $n$ times the monopoly revenue.
  The single-sample theorem
  (\Cref{thm:single-sample-monopoly-pricing}) shows that a uniform
  random quantile reserve gives at least half the revenue.  The threshold
  quantile from the supply constraint $\ell \sim
  U\{\floor{\nicefrac n2},\ceiling{\nicefrac n2}\}$ is a mean-preserving contraction of
  a uniform quantile.  Since $\rev$ is concave, Jensen's inequality
  implies that its expected revenue is at least the random-reserve
  revenue, and hence at least one half of the monopoly revenue.
\end{proof}

\subsection{Optimal Prior-robust Auctions}

The simple mechanisms of the preceding subsections provide robust
guarantees but do not generally obtain the optimal prior-robust
mechanism.  In this subsection we briefly review the literature aiming to
understand the optimal prior-robust mechanism for a single unit and
two agents.  In the theorem below, an only-highest-can-win auction never
allocates to the agent with strictly lower value.  Details can be found
in the original sources.

\begin{theorem}
\label{thm:two-bidder-prior-independent-status}
For two i.i.d.\ bidders and the revenue objective:
\begin{enumerate}
\item Over monotone-hazard-rate distributions, the second-price
  auction is prior-robust optimal and has prior-robust ratio
  $0.715$ \citep{AB-18}.
\item Over regular distributions, no auction has a higher prior-robust
  ratio than $0.556$ \citep{AB-18}.
\item Over regular distributions, among only-highest-can-win auctions,
  the optimal mechanism mixes the second-price auction with one
  relative-markup auction and has prior-robust ratio $0.524$
  \citep{HJL-25}.
\item Over regular distributions, a non-only-highest-can-win auction
  has prior-robust ratio $0.525$ \citep{anu-26}.
\end{enumerate}
\end{theorem}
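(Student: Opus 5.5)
The statement is a survey of four results, so I would handle each part separately. In every part I would use the zero-sum-game template of \Cref{s:robustness-paradigms}: exhibit a designer strategy and a nature strategy that certify the same value. With two i.i.d.\ bidders, the second-price revenue is $\int_0^1 \rev(\quant)\,2\,\dd\quant$ in quantile space: the lower of two uniform quantiles sets the price, and integrating by parts gives the standard formula. The optimal revenue is the Myerson benchmark, which is second price with the monopoly reserve. Both are functionals of the concave revenue curve $\rev$, so each ratio becomes an optimization over concave curves. By scale invariance I normalize $\rev(\monoq)=1$.

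\textbf{Parts 1 and 2.} I would first write both revenues as linear functionals of the curve. The second-price revenue is linear in $\rev$. The benchmark is linear once $\monoq$ is fixed: it equals the second-price revenue with the lower quantile capped at $\monoq$. Minimizing their ratio over a convex family then reduces to extreme points.
\begin{itemize}
\item For regular distributions, the extreme points are the triangle curves from the proof of \Cref{thm:single-sample-monopoly-pricing}.
\item For monotone hazard rate, they are the exponential-type or truncated curves whose cumulative hazard is piecewise linear.
\end{itemize}
A one-parameter numeric minimization gives $0.715$ for MHR; this is the lower bound for second price. For optimality of second price among all auctions, I would have nature randomize over a worst-case pair of distributions and show that second price is a best response to that mixture. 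The idea is that the Bayesian-optimal auction for the mixture has no reserve, so no auction beats $0.715$.

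The regular upper bound of $0.556$ works the same way. Nature mixes over triangle (equal-revenue-like) distributions. Against the mixture, the designer's best response is computed pointwise via the virtual values of the mixture.

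\textbf{Part 3.} I would parameterize only-highest-can-win auctions by the price charged to the higher bidder, as a function of the lower bid; by scale invariance this is a relative markup. Against triangle distributions, the revenue of each markup auction is computable. The minimax game between mixtures of markup auctions and mixtures of triangles is then solved by verifying that a specific pair is an equilibrium, which yields $0.524$.

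\textbf{Part 4.} Here I would perturb the part 3 optimum by occasionally allocating to the lower bidder, so that the worst-case triangle distributions gain revenue. I would then check numerically that the worst case rises to $0.525$.

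The main obstacle is the global optimality claims: part 1 among all mechanisms, and the tightness of the nature strategies in parts 2 and 3. Each requires exhibiting an exact equilibrium of an infinite game, and I would rely on the original sources' numerical certificates for these.
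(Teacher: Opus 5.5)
The paper gives no proof of this theorem; it cites \citet{AB-18}, \citet{HJL-25}, and \citet{anu-26}. Your plan therefore has to stand on its own as a reconstruction, and it has two genuine gaps.

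\textbf{What is sound.} Your first step works. The second-price sale price is set by the \emph{larger} of the two quantiles, whose density is $2q$; your orientation is reversed, but $2\int_0^1 R(q)\,dq$ is the right formula. The benchmark is $\max_{q}\bigl[2\int_0^{q}R+2(1-q)R(q)\bigr]$; capping the price quantile alone overcounts the no-sale event. The second-price ratio is therefore an infimum of linear-fractional functions of $R$, and it is minimized on extreme rays. MHR is convex in revenue-curve space, because averaging quantile functions preserves concavity of $t\mapsto V(e^{-t})$. Its extreme rays are point masses and exponentials, possibly truncated by an atom at the top. The one-parameter search then gives $\approx 0.715$, so the lower bound in Part 1 holds.

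\textbf{First gap: Parts 3 and 4.} You reuse this extreme-point reduction where it does not apply, because it rests on revenue being linear in $R$. A relative-markup auction with markup $r>1$ earns $2\int_0^1 R\bigl(V^{-1}(r\,V(u))\bigr)\,du$ with $V(u)=R(u)/u$, which is not linear in $R$. Nor is the revenue of a mechanism that sometimes sells to the lower bidder.
\begin{itemize}
\item In Part 3, an equilibrium of your restricted game (markup mixtures versus triangle mixtures) certifies only nature's side: no only-highest-can-win auction beats $0.524$. That side is where the qualifier matters, since \citet{anu-26} beats $0.524$ without it. The designer's side, that the second-price/markup mixture guarantees $0.524$ against \emph{every} regular distribution, needs a separate structural argument that triangles are worst case for scale-invariant only-highest-can-win mechanisms. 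This is not a consequence of your linearity argument.
\item Part 4 inherits the same problem. Showing that the worst-case triangles gain revenue only bounds the perturbed mechanism's ratio from above. Certifying $0.525$ requires showing its worst case is still a triangle, or controlling all regular distributions by some other means.
\end{itemize}

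\textbf{Second gap: the nature certificates in Parts 1 and 2.} These are set up incorrectly. The best response to nature's mixture is not ``the Bayesian-optimal auction for the mixture,'' for three reasons:
\begin{itemize}
\item a mixture of i.i.d.\ priors is a correlated prior;
\item the payoff weights each $F$ by $1/\mathrm{SB}(F)$;
\item the mechanism must be incentive compatible and individually rational under each $F$ separately, not under the mixture.
\end{itemize}
Moreover, ``no reserve'' does not make second price a best response.

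A workable test uses scale invariance, which is without loss by \citet{HJL-25}. For a scale-invariant mechanism, revenue under $F$ equals virtual surplus. This can be written as $\int a_F(r)\,x(r)\,dr$ over the value ratio $r=v_1/v_2$, where $a_F(r)$ aggregates the virtual value of the bidder whose value is $r$ times the other's. Let $a$ be the $1/\mathrm{SB}(F)$-weighted average of $a_F$ over nature's support. Second price is a best response when $a(r)\ge\max\{0,a(1/r)\}$ for all $r>1$.

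Because $0.715$ is tight, complementary slackness puts nature's support on second price's worst cases. Your extreme-ray computation identifies these as scalings of one truncated exponential. So after the scale-invariance reduction, that single distribution is the whole certificate.
\begin{itemize}
\item Its Bayesian optimum posts the atom price, which is a reserve. Second price wins only because scale-invariant mechanisms cannot replicate that reserve.
\item The truncation point lies below the reciprocal hazard rate. Hence every non-atom value has negative virtual value and the atom has positive virtual value.
\item The lower bidder's aggregate is therefore negative, and one checks that the higher bidder's is positive.
\end{itemize}
Part 2 applies the same pointwise test to a mixture of triangles. For an upper bound, monotonicity may be relaxed.
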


The optimal prior-robust mechanism for i.i.d.\ regular bidders has not been
identified.

\subsection{Notes}

Prior-robust mechanism design was initiated for digital goods by
\citet{GHW-01,GHKS+06}.  It was connected to the classical economics
approach to revenue maximization of \citet{mye-81} by \citet{seg-03}
and \citet{BV-03}.  Additional early works, largely from the computer
science literature, are surveyed in \citet{HK-07}.
\citet{HR-08,HR-14} connected the early ``prior-free'' framework
formalized in \citet{GHKS+06} to the prior-robust
(\Cref{def:prior-independent-max-min-ratio}) framework discussed
above.  Subsequently, \citet{HJL-20} showed that the frameworks are
equivalent.

\citet{DRY-10,DRY-15} developed the geometric analysis of prior-robust
revenue presented above.  \citet{FILS-15} first demonstrated that for
the $n=2$ agent setting, the second-price auction is not optimal.
They give a mechanism with ratio $0.512$.  \citet{AB-18} showed that
if we further restrict from regular distributions to
monotone-hazard-rate distributions, then the second-price auction is
optimal with ratio $0.715$, and they give improved upper and lower
bounds on the prior-robust optimal auction for regular distributions
as $[0.519,0.556]$.  \citet{AB-18} derive these bounds after reducing
to scale-invariant auctions under a mild regularity condition on
mechanisms; \citet{HJL-25} subsequently establish that restricting
attention to scale-invariant auctions is without loss.  Imposing the
additional constraint on the mechanism to only possibly sell to the
agent with the highest value, \citet{HJL-20,HJL-25} show that a random
markup mechanism has ratio $0.524$.\footnote{The result was originally
stated in \citet{HJL-20} without the restriction on mechanisms;
however, \citet{anu-26} identified a flaw in the argument.}  Recently,
\citet{anu-26} showed that this restriction to selling only to the
highest-valued agent is with loss, and he gives a mechanism with a
slightly better ratio of $0.525$.  It is unknown what the optimal
max-min ratio is and what the prior-robust optimal auction looks like.

\section{Robust Exchange}
\label{s:robust-exchange}

A second domain for robust mechanism design is bilateral exchange.  A
mechanism mediates the exchange of an item between a single buyer and a
single seller.  The Vickrey--Clarke--Groves mechanism achieves
first-best welfare but requires subsidies.  The buyer is charged the
seller's cost (or value for keeping the good), and the seller is paid
the buyer's value.  Since trade only occurs when the buyer's value
exceeds the seller's cost, a positive subsidy is needed.
\citet{MS-83} distinguish between first-best welfare maximization, which
ignores incentive and no-deficit constraints, and second-best welfare
maximization, which imposes those constraints.  They solve the latter
analogously to revenue maximization \citep{mye-81}.

A central question in the literature has been to understand the gap
between second-best and first-best.  When analyzing the fraction of
first-best welfare obtained by the second-best mechanism, it is critical to
choose the zero correctly.  The right zero is the welfare before the
exchange, in which the seller keeps the good and, thus, the zero-adjusted welfare is
simply the gains from trade.

The random proposer mechanism works as follows \citep{BCWZ-17}.
Either the buyer or seller is selected uniformly at random as the
proposer.  The proposer then makes a take-it-or-leave-it offer to the
other agent.  If the offer is accepted, the good is exchanged at the
offered price.  In perfect Bayesian equilibrium, the proposer uses
their prior information to choose an optimal offer, while the principal
needs no knowledge of the distribution of either the buyer or seller.
(This difference from the setup in
\Cref{s:robust-auction-design} will be discussed further in
\Cref{s:revelation-gap}.)  Such mechanisms are prior-robust for the
principal and we will analyze them in the prior-robust max-min ratio
framework (\Cref{def:prior-independent-max-min-ratio}).

We will give the straightforward argument \citep{BDK-21} that the
random proposer mechanism guarantees half of the second-best gains
from trade, i.e., it guarantees a prior-robust ratio of $\nicefrac12$
\citep{BCWZ-17}, and a geometric proof \citep{HW-25} that it obtains a
quarter of the first-best gains from trade \citep{DMSW-22,fei-22}.  A
more involved analysis guarantees a 0.318 fraction
\citep{fei-22,HW-25}.  It is of interest, generally, to understand the
relationship between second-best and first-best separately from
prior-robustness (which compares the prior-robust mechanism to
second-best).  Of course, the second-best mechanism obtains at least as much as
random proposer, which gives an indirect 0.318 guarantee on the max-min ratio of second-best to first-best gains from trade.
\citet{LQRW-26} improve this guarantee to a tight $\nicefrac12$.

\subsection{Random Proposer}

Write \(\val\) for the buyer's value and \(\cost\) for the seller's cost.
The buyer's value and seller's cost are independent, and utilities are linear
in value and transfers.

For a known prior \(\dist\), first-best gains from trade is
$\FB(\dist)=\expect[(\cost,\val) \sim \dist]{\positivepart{\val-\cost}}$.
Second-best gains from trade, denoted by \(\SB(\dist)\),
is the maximum expected gains from trade of a
Bayesian-incentive-compatible interim-individually-rational and
ex-ante-no-deficit mechanism \citep{MS-83}.

\begin{definition}
  \label{def:random-proposer}
The random proposer mechanism ($\RP$) samples the buyer or seller uniformly at random
as the proposer; the proposer makes a take-it-or-leave-it offer to the other
agent, and the other agent accepts the offer if it is individually rational.
\end{definition}

The random proposer mechanism is exactly budget balanced: whenever trade
occurs, the buyer's payment is transferred to the seller.

Without loss, we will describe the outcome when the buyer is the
proposer.  The buyer's problem is the winner-pays-bid best-response
problem analyzed in \Cref{prop:first-price-individual-efficiency} and
depicted in \Cref{fig:individual-efficiency-geometry}.  A buyer with value
\(\val\) proposes the trade price $\bid\in\argmax_{\bid}
  (\val-\bid)\,\prob{\cost\leq\bid}$.

\subsection{Second-best Ratio, I.e., the Prior-robust Analysis}

\begin{theorem}
\label{thm:random-proposer-second-best-half}
For the gains-from-trade objective with independent buyer value and seller
cost, every perfect
Bayesian equilibrium of random proposer obtains at least one half of
second-best gains from trade \citep{BCWZ-17}.  The factor \(\nicefrac12\) is
tight even if the no-deficit requirement is strengthened to the
budget-balanced requirement on the second-best mechanism \citep{BDK-21}.
\end{theorem}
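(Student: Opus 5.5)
The plan is the argument of \citet{BDK-21}: decompose the second-best gains from trade into a buyer-surplus part and a seller-surplus part, and show that each proposer side of random proposer recovers one of these parts. Fix an optimal Bayesian-incentive-compatible, interim-individually-rational, ex-ante-no-deficit mechanism $\SB_{\dist}$ with ex post allocation and payments. Its expected gains from trade equal buyer utility plus seller utility plus the expected budget surplus. Since the surplus is nonnegative and can be handed to either agent, we have $\SB(\dist)\le U_B+U_S+D$ where $D\ge0$. The convenient reformulation writes $\SB(\dist)$ as $A+B$: $A$ is the expected utility the buyer gets from the mechanism after the seller is paid only her cost on trade, i.e.\ $\expect{\val\,\alloc-\cost\,\alloc - \text{(seller utility)}}$ and similar; more precisely, $A=\expect{(\val-\cost)\alloc}-U_S$ and $B=U_S$. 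I will bound $A$ by the buyer-proposer outcome and $B$ by the seller-proposer outcome, or more robustly bound each of "buyer-side" and "seller-side" by the corresponding half.

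\textbf{Step 1 (seller side).} Consider the seller's utility $U_S$ in $\SB_{\dist}$, plus whatever share of $D$ we assign her. When the seller proposes in $\RP$, she faces a monopoly-pricing problem against the buyer's value distribution with her cost as production cost. I will argue that the gains from trade realized when the seller proposes are at least the seller's optimal profit, which is at least the seller's interim utility in \emph{any} IC-IR mechanism with no subsidy from the buyer. This follows because, conditioned on cost $\cost$, the seller's view of $\SB_{\dist}$ is a randomized mechanism selling to the buyer, and its revenue net of cost is at most the optimal posted-price profit (\citealp{mye-81}, as in the footnote reducing single-agent mechanisms to randomized prices). Buyer IR guarantees that buyer payments are bounded by buyer values, so the seller's extracted profit is at most the monopoly profit. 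Gains from trade are at least the proposer's profit since the responder's utility is nonnegative.

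\textbf{Step 2 (buyer side).} The symmetric argument applies with the buyer as monopsonist facing the seller's cost distribution. It bounds the buyer's utility plus the remaining share of the budget surplus by the buyer's optimal monopsony profit. That profit is at most the gains from trade realized when the buyer proposes.

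\textbf{Step 3 (combine).} To split the whole of $\SB(\dist)$ into a seller-monopoly piece and a buyer-monopsony piece, use ex-ante no deficit. Concretely, write $\SB(\dist)=\expect{(\val-\cost)\alloc}=\expect{(\val\alloc-\pay_B)}+\expect{(\pay_B-\cost\alloc)}$, where $\pay_B$ is the buyer's payment. The first term is buyer utility, and it is at most the buyer-proposer profit: collecting $\pay_B$ from the buyer only helps, and the seller must be compensated at least her cost, an obligation that is lower-bounded by the monopsony problem. The second term is the profit of a fictitious seller who receives all of the buyer's payments, and it is at most the seller-proposer profit by Step 1. Averaging the two proposers gives $\RP\ge \tfrac12\SB(\dist)$, valid in every perfect Bayesian equilibrium because each proposer plays an optimal offer and the responder accepts when individually rational.

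\textbf{Main obstacle.} I expect the hardest part to be making Steps 1--2 rigorous. The interim mechanism seen by one agent involves the other agent's incentive constraints, and one term must be bounded by a single-agent optimum even though the mechanism only satisfies ex-ante no deficit rather than ex post. I would first handle this by conditioning on one agent's type, treating the other agent's interim allocation and payment as a single-agent BIC-IR mechanism, and then invoking the posted-price optimality of \citet{mye-81}.

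\textbf{Tightness.} For the tightness claim I would exhibit the example of \citet{BDK-21}. Take the seller's cost at zero with small probability and otherwise equal-revenue-like, and the buyer symmetric, so that each proposer captures only about half. A budget-balanced posted-price mechanism then approaches the full second-best.
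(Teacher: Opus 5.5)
\textbf{The per-type reduction in Steps~1--2 fails.} Your high-level plan is the paper's: Segal's argument, as recorded by \citet{BDK-21}. Your Step~3 split $\SB=\mathbf{E}[vx-p_B]+\mathbf{E}[p_B-cx]$ is correct: it uses ex-ante no deficit once, to get $\mathbf{E}[vx-p_B]\le\mathbf{E}[vx-p_S]$, and it handles the budget surplus more explicitly than the paper does. The gap is in how you propose to prove the two single-agent bounds. In $\SB_F$ the buyer's incentive and participation constraints hold only on average over $c$. So for a fixed $c$, the pair $x(c,\cdot),p_B(c,\cdot)$ need not be monotone or individually rational, and it is not a single-agent BIC--IR mechanism. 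In fact the per-cost claim in Step~1 is false. Replace $p_B(c,v)$ by $p_B(c,v)+h(c)$ with $\mathbf{E}[h(c)]=0$, leaving $p_S$ unchanged. This preserves every constraint, the ex-ante budget, and the gains from trade. Yet $\mathbf{E}_v[p_B(c,v)-c\,x(c,v)]$ then exceeds the cost-$c$ monopoly profit wherever $h(c)$ is large. Hence conditioning on one agent's type, as in your ``main obstacle'' paragraph, cannot work; the bound holds only after averaging over the proposer's type. What is needed is the ex-ante payment-identity argument. The buyer's interim allocation $X_B(v)=\mathbf{E}_c[x(c,v)]$ is monotone, so interim IR and \citet{mye-81} give $\mathbf{E}[p_B]\le\mathbf{E}[\bar\varphi(v)X_B(v)]=\mathbf{E}[\bar\varphi(v)\,x(c,v)]$, where $\bar\varphi$ is the ironed virtual value. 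Therefore
\[
\mathbf{E}[p_B-c\,x]\le\mathbf{E}\bigl[(\bar\varphi(v)-c)\,x(c,v)\bigr]\le\mathbf{E}_c\Bigl[\mathbf{E}_v\bigl[(\bar\varphi(v)-c)^+\bigr]\Bigr],
\]
and the inner term is exactly the optimal take-it-or-leave-it profit of a seller with cost $c$. The mirror argument, with ironed virtual costs and the seller's interim allocation, bounds $\mathbf{E}[vx-p_S]$ by the buyer's optimal offer utility. This is the content the paper compresses into ``the proposer obtains their optimal expected utility over all mechanisms.''

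\textbf{The tightness example, as sketched, does not go through.} Suppose the buyer is made symmetric to the seller: value $1$ with probability $\epsilon$, otherwise equal-revenue on $[\epsilon,1)$. Then every fixed posted price earns only about $\epsilon^2(1+\ln(1/\epsilon))$. That is roughly half of $\FB\approx2\epsilon^2(1+\ln(1/\epsilon))$, and about what the worst equilibrium of random proposer already earns. So a posted price does not approach second best here. The budget-balanced benchmark that does is the seller-offer mechanism in which the zero-cost seller offers the lowest price. The paper avoids this by making the buyer's value a point mass $v$ and taking $\Pr[c\le b]=u/(v-b)$ on $[0,v-u]$, with an atom $u/v$ at zero. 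Every buyer offer then yields utility $u$, so the buyer may offer $0$, creating gains $u$. The seller offers $v$, which is a posted price that trades efficiently. Thus $\SB=\FB=u(1+\log(v/u))$, and the ratio is $\tfrac12+\tfrac{1}{2(1+\log(v/u))}\to\tfrac12$. Note that in this instance one proposer captures almost nothing and the other captures everything, rather than each capturing about half.
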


The following simple proof and its tightness is from \citet{BDK-21}, where the
simple proof is attributed to Ilya Segal.

\begin{proof}
The proposer obtains their optimal expected utility over all mechanisms
\citep{mye-81,RZ-83}.  Thus, the
proposer's utility exceeds their utility in the second-best mechanism.  The
gains from trade is the sum of the two agents' utilities.  As each agent is
the proposer half the time, the random proposer mechanism guarantees at least
half the gains from trade of the second-best mechanism.

For tightness, fix the buyer's value at \(\val\) and target buyer
utility $0 < \util < \val$.  Choose the seller's cost distribution so that
its acceptance curve at buyer offer $\bid$ is
\[
  \balloc(\bid)=\frac{\util}{\val-\bid}
\]
until it reaches one at $\bid = \val - \util$.  Every offer on this
curve gives the buyer utility \(\util\), so offering $\bid = 0$ is
optimal.  At this offer the seller has zero
utility and the buyer has the target $\util$; thus, the gains from
trade is \(\util\).  When the seller proposes, the seller offers
\(\val\).  This budget-balanced mechanism trades efficiently, and its gains
from trade equal the area under the acceptance curve:
\[
  \FB(\dist)=\SB(\dist)=\util\left(1+\log\nicefrac{\val}{\util}\right).
\]
Thus, random proposer obtains the fraction
\(\nicefrac12+\frac{1}{2(1+\log(\val/\util))}\) of second-best gains from
trade, which approaches \(\nicefrac12\) as
\(\nicefrac{\util}{\val}\) approaches zero.
\end{proof}

\Cref{thm:random-proposer-second-best-half} shows that the simple and
natural random proposer mechanism guarantees a half of the second-best
gains from trade.  The optimal prior-robust ratio and corresponding
mechanism are unknown; however, this question must be asked carefully
to avoid trivial implementation theory solutions (see further discussion
in \Cref{s:revelation-gap}).

\subsection{First-best Ratio}

The random proposer mechanism guarantees a constant fraction of the
first-best gains from trade as well.  \citet{DMSW-22} show that this
fraction is at least $0.12$.  \citet{fei-22} and \citet{BWZ-26} give lower
bounds of about $0.318$ and \(\nicefrac1e\approx 0.37\), respectively.  \citet{HW-25} give a
simple geometric proof of a slightly worse first-best ratio of
\(\nicefrac14 = 0.25\).  The argument
(\Cref{fig:random-proposer-gft-geometry}) uses the robust-auction
geometry developed in \Cref{s:robust-analysis-auction-equilibria} (cf.~\Cref{fig:individual-efficiency-geometry}).

\begin{figure}
  \centering
  \begingroup
\newcommand{\RPGFTcurve}{(0,0) -- (0.05,0.12) -- (0.10,0.22) -- (0.20,0.35) --
  (0.30,0.38) -- (0.45,0.42) -- (0.60,0.50) --
  (0.75,0.72) -- (0.85,0.85) -- (0.93,0.93) -- (1.00,1.00)
}
\newcommand{\RPGFThalfcurve}{(0,0) -- (0.05,0.06) -- (0.10,0.11) -- (0.20,0.175) --
  (0.30,0.19) -- (0.45,0.21) -- (0.60,0.25) --
  (0.75,0.36) -- (0.85,0.425) -- (0.93,0.465) -- (1.00,0.50)
}
\newcommand{\RPGFTbase}{\draw[->] (0,0) -- (1.08,0) node[below right] {$\bid$};
  \draw[->] (0,0) -- (0,1.08) node[left] {$\quant$};
  \draw[very thick] \RPGFTcurve;
  \node[right] at (1,1) {$\balloc(\cdot)$};
}
\newcommand{\RPGFTguides}{\draw[dashed] (.95,.93) -- (.95,0) node[below] {$\val$};
  \draw[dashed] (.56,.465) -- (.56,0) node[below] {$\bid$};
  \draw[dashed] (.95,.93) -- (0,.93) node[left] {$\balloc(\val)$};
  \draw[dashed] (.56,.465) -- (0,.465) node[left] {$\balloc(\val)/2$};
}

\begin{subfigure}[t]{0.31\linewidth}
\centering
\begin{tikzpicture}[x=2.45cm,y=2.25cm,>=Stealth,every node/.style={font=\scriptsize}]
  \begin{scope}
    \clip (0,0) rectangle (.56,1.08);
    \path[fill=gray!16] \RPGFTcurve -- (1,0) -- (0,0) -- cycle;
  \end{scope}
  \begin{scope}
    \clip (.56,0) rectangle (.95,1.08);
    \path[fill=gray!36] \RPGFTcurve -- (1,0) -- (0,0) -- cycle;
  \end{scope}
  \RPGFTbase
  \RPGFTguides
  \node at (.32,.20) {$\sellerarea$};
  \node at (.74,.33) {$\buyerarea$};
\end{tikzpicture}

\caption{First-best split.}
\label{fig:random-proposer-first-best}
\end{subfigure}
\hfill
\begin{subfigure}[t]{0.31\linewidth}
\centering
\begin{tikzpicture}[x=2.45cm,y=2.25cm,>=Stealth,every node/.style={font=\scriptsize}]
  \path[fill=gray!36] (.56,0) rectangle (.95,.465);
  \RPGFTbase
  \RPGFTguides
  \node at (.75,.23) {$\utilB$};
\end{tikzpicture}

\caption{Buyer-proposer utility.}
\label{fig:random-proposer-buyer-utility}
\end{subfigure}
\hfill
\begin{subfigure}[t]{0.31\linewidth}
\centering
\begin{tikzpicture}[x=2.45cm,y=2.25cm,>=Stealth,every node/.style={font=\scriptsize}]
  \begin{scope}
    \clip (0,0) rectangle (1.05,.465);
    \path[fill=gray!16] \RPGFTcurve -- (1,0) -- (0,0) -- cycle;
    \path[fill=white] \RPGFThalfcurve -- (1,0) -- (0,0) -- cycle;
  \end{scope}
  \RPGFTbase
  \draw[densely dashed, thick] \RPGFThalfcurve;
  \node[right] at (1,.5) {$\balloc(\cdot)/2$};
  \RPGFTguides
  \node at (.34,.26) {$\utilS$};
\end{tikzpicture}

\caption{Seller-proposer utility.}
\label{fig:random-proposer-seller-utility}
\end{subfigure}
\endgroup
   \caption{Acceptance-curve geometry for the random-proposer proof.
  Panel~\subref{fig:random-proposer-first-best} splits first-best gains from
  trade into \(\sellerarea\) and \(\buyerarea\);
  panels~\subref{fig:random-proposer-buyer-utility}
  and~\subref{fig:random-proposer-seller-utility} show the buyer and seller
  deviation utilities.  Adapted from \citet{HW-25}.}
  \label{fig:random-proposer-gft-geometry}
\end{figure}

\begin{theorem}[\citealp{DMSW-22,fei-22,HW-25}]
\label{thm:random-proposer-quarter-approx}
For the gains-from-trade objective with independent buyer and seller values,
every perfect Bayesian equilibrium of random proposer
obtains at least \(\nicefrac 1 4\) of first-best gains from trade.
\end{theorem}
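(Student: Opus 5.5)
The plan is to fix the buyer's value $\val$ and work on the buyer's acceptance curve $\balloc(\bid)=\prob{\cost\le\bid}$, the seller's cost CDF, as in \Cref{fig:random-proposer-gft-geometry}. Conditional on $\val$, first-best gains from trade is $\expect[\cost]{\positivepart{\val-\cost}}=\int_0^{\val}\balloc(\bid)\,\dd\bid$, the area under $\balloc$ up to $\val$. I will split this area at the bid $\bid$ where $\balloc$ first reaches $\balloc(\val)/2$. Let $\buyerarea=\int_{\bid}^{\val}\balloc$ be the part to the right and $\sellerarea=\int_0^{\bid}\balloc$ the part to the left. The goal is to show that the buyer-proposer utility is at least $\buyerarea/2$ and that the seller-proposer contribution is at least $\sellerarea/2$, in expectation over $\val$. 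Averaging the two proposer roles then gives $\RP\ge\tfrac12(\buyerarea/2+\sellerarea/2)=\FB/4$ after taking expectations over $\val$. Since gains from trade equals the sum of utilities and each agent proposes half the time, it suffices to lower-bound each proposer's equilibrium utility by the utility of a specific deviation. This works because in every perfect Bayesian equilibrium the proposer best-responds with respect to the other agent's true distribution.

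\emph{Buyer side.} This is the individual-efficiency geometry of \Cref{fig:individual-efficiency-half}. The buyer could offer $\bid$, earning $(\val-\bid)\,\balloc(\bid)\ge(\val-\bid)\,\balloc(\val)/2$. Since $\balloc\le\balloc(\val)$ on $[\bid,\val]$, we have $\buyerarea\le(\val-\bid)\,\balloc(\val)$. Hence $\utilB\ge\buyerarea/2$, pointwise in $\val$.

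\emph{Seller side.} Here the shaded region of \Cref{fig:random-proposer-seller-utility} must be charged to seller utility, and I expect this to be the main obstacle. The difficulty is that the seller does not know $\val$, and her utility is an expectation over $\val$ while $\sellerarea$ is defined per $\val$. I would switch to seller quantiles: write $\cost(q)$ for the cost at quantile $q$ and $h=\balloc(\val)/2$, so that $\sellerarea=\int_0^{h}(\bid-\cost(q))\,\dd q+(\text{area below }\cost\text{ on }[0,\bid])$. The aim is to exhibit, for each cost, a deviation (or, via the fact that the proposer attains her optimal utility over all mechanisms \citep{mye-81,RZ-83}, a seller-designed mechanism) that sells to buyer type $\val$ at a price near $\bid(\val)$ whenever the seller's quantile is at most $h(\val)$. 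The resulting utility should dominate the clipped region between $\balloc$ and $\balloc/2$ below height $\balloc(\val)/2$. The argument would then show that this region has area at least $\sellerarea/2$, using monotonicity of $\balloc$ and the halving of the heights.

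The first thing I would check is this last area comparison. It is not obviously true for arbitrary monotone $\balloc$, so it may need the split point chosen as a function of $\val$ more cleverly, or the seller's deviation chosen as a random price drawn from the buyer's distribution. If needed, I would also try the symmetric argument: place the same construction on the seller's acceptance curve (the buyer value survival function), so that the roles of $\sellerarea$ and $\buyerarea$ swap. I would then average the two decompositions rather than insisting that each side be bounded pointwise.
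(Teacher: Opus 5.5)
Your decomposition and buyer-side bound are the same as the paper's. You split $\int_0^v x(p)\,\mathrm{d}p$ at the price $b$ with $x(b)=x(v)/2$, and the offer $b$ earns the buyer at least $B/2$.

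\textbf{The gap is the seller side, which is the substance of the argument and which you leave open.} The deviation you sketch, trading with buyer type $v$ at a price near $b(v)$ whenever $q\le h(v)$, is not available for two reasons. First, the seller does not know $v$. Second, for a fixed seller type, a buyer-incentive-compatible mechanism that deterministically trades with the upper set $\{v : x(v)\ge 2q\}$ can charge at most that set's threshold $c(2q)$. It cannot charge a price $b(v)$ that rises with $v$. If such pricing were feasible, it would extract all of $S=\int_0^{h}(b-c(q))\,\mathrm{d}q$, which is a sign that it is too good to be true. Incidentally, this formula for $S$ is exact, so the extra ``area below $c$'' term in your expression should be dropped. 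Invoking the proposer's optimality over all mechanisms does not help until a feasible mechanism is exhibited.

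\textbf{The missing idea is a $v$-independent posted price:} seller type $q$ asks $c(\min\{2q,1\})$.
\begin{itemize}
\item Against buyer value $v$, this trades if and only if $2q\le x(v)$, i.e., $q\le h(v)$.
\item When it trades, it earns $c(2q)-c(q)$, which is the horizontal gap between the curves $x/2$ and $x$ at height $q$.
\item Integrating over $q\in[0,h(v)]$ gives exactly the region between $x/2$ and $x$ below height $h(v)$, which is the region you identified as the target.
\end{itemize}
The area comparison you doubted is then an identity, not an inequality needing special structure. For $p\le b$ we have $x(p)\le h(v)$, so the region contains the whole vertical segment from $x(p)/2$ to $x(p)$. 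Its part over $[0,b]$ therefore has area $\int_0^b x(p)/2\,\mathrm{d}p=S/2$.

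Because the deviation does not depend on $v$, independence lets you average over $v$ first and then invoke the seller's best response. This gives $\mathbf{E}[u_S]\ge\tfrac12\,\mathbf{E}[S]$. None of your fallbacks (re-choosing the split, random prices from the buyer's distribution, symmetrizing) is needed.
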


\begin{proof} 
Represent the seller's cost by a random quantile
\(\quant\sim\uniformzeroone\) and a nondecreasing cost function
\(\qcost(\quant)\).  For the geometry, assume that the value and cost
distributions are atomless.\footnote{In fact, this assumption is without loss
due to a construction of \citet{AFHH-13}.}
Fix a buyer value \(\val\), and let
\(\balloc(\bid)=\prob{\qcost(\quant)\leq\bid}=\qcost^{-1}(\bid)\) be the
probability that the seller accepts an offer of \(\bid\).  The conditional
first-best gains from trade is the shaded area under \(\balloc(\cdot)\) up to
\(\val\) in
\Cref{fig:random-proposer-first-best}.  Choose
\(\bid=\qcost(\balloc(\val)/2)\), so that
\(\balloc(\bid)=\balloc(\val)/2\).  This offer divides the first-best area into
the regions \(\sellerarea\) and \(\buyerarea\) shown in
\Cref{fig:random-proposer-first-best}.

When the buyer proposes, offering \(\bid\) gives utility
\((\val-\bid)\balloc(\bid)\), the rectangle in
\Cref{fig:random-proposer-buyer-utility}.  Since
\(\balloc(\bid)=\balloc(\val)/2\), this rectangle covers at least half of
\(\buyerarea\).  The buyer's equilibrium offer is a best response, so
\(\utilB(\val)\geq\nicefrac12\,\buyerarea\).

When the seller proposes, consider the deviation in which type \(\quant\) asks
the price \(\qcost(\min\{2\quant,1\})\).  Since
\(\qcost(\cdot)=\balloc^{-1}(\cdot)\), these offers trace
\(\balloc(\cdot)/2\).  For a fixed \(\quant\), the seller's utility is zero if
this offer exceeds \(\val\); otherwise, it is the horizontal distance between
\(\balloc(\cdot)\) and \(\balloc(\cdot)/2\) in
\Cref{fig:random-proposer-seller-utility}.  Averaging over \(\quant\) gives the
shaded area in that panel.  Its intersection with \(\sellerarea\) is exactly
half of \(\sellerarea\), so the deviation earns at least \(\nicefrac12\,\sellerarea\).
The deviation is independent of the buyer's value \(\val\).
Independence therefore permits the same deviation before
averaging over buyer values, and best response gives
\(\expect{\utilS}\geq\nicefrac12\,\expect{\sellerarea}\).

Each agent proposes with probability \(\nicefrac12\).  The sum of the two agents'
utilities is gains from trade, and the non-proposer's utility is nonnegative.
Consequently,
\[
  \RP(\dist)
    \geq \nicefrac{1}{2}\,\expect{\utilB}+\nicefrac{1}{2}\,\expect{\utilS}
    \geq \nicefrac{1}{4}\,\expect{\buyerarea}
       + \nicefrac{1}{4}\,\expect{\sellerarea}
    =\nicefrac14\,{\FB(\dist)}. \qedhere
\]
\end{proof}

\Cref{thm:random-proposer-quarter-approx} implies that second-best guarantees
a constant fraction of first-best.  \citet{LQRW-26} show that, in fact, the second-best is half of
first-best in worst case.

\begin{theorem}[\citealp{LQRW-26}]
\label{thm:second-best-half-efficient}
For the gains-from-trade objective in bilateral trade, the second-best
mechanism guarantees half the first-best under every independent prior
\(\dist\), i.e., \(\SB(\dist)\geq \nicefrac12\,\FB(\dist)\).
\end{theorem}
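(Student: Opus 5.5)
The plan is to work with the \citet{MS-83} characterization of second best rather than with any particular simple mechanism. By the revelation principle and the \citet{mye-81} payment identity, a monotone interim allocation rule \(x(\val,\cost)\) is implementable by a Bayesian-incentive-compatible, interim-individually-rational, ex-ante-no-deficit mechanism if and only if its expected virtual surplus is nonnegative:
\[
  \expect{\bigl(\virt_B(\val)-\virt_S(\cost)\bigr)\,x(\val,\cost)}\geq 0 ,
\]
where \(\virt_B(\val)=\val-\frac{1-\dist_B(\val)}{\dens_B(\val)}\) and \(\virt_S(\cost)=\cost+\frac{\dist_S(\cost)}{\dens_S(\cost)}\), with ironing if needed. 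Two consequences drive the argument. First, \(\SB(\dist)\) is at least the gains from trade \(\expect{(\val-\cost)\,x}\) of any such feasible \(x\). Second, the feasible set is convex: a lottery over allocation rules is feasible as long as the \emph{average} virtual surplus is nonnegative. So no single rule has to be budget feasible on its own.

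The construction then mixes an efficient, deficit-running rule with surplus-generating rules. Take \(x^{\FB}=\mathbbm{1}\{\val\geq\cost\}\). Its virtual surplus equals \(\FB(\dist)-I_B-I_S\), where \(I_B\) and \(I_S\) are the buyer and seller information rents of the VCG-style implementation. Its deficit is \(D=\max\{0,I_B+I_S-\FB\}\).

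Against it, use the two take-it-or-leave-it rules from \Cref{s:robust-exchange}, with the \emph{principal} as the optimizing party:
\begin{itemize}
\item a posted buyer price, where the seller is paid her critical cost;
\item a posted seller price, where the buyer pays her critical value.
\end{itemize}
More generally, use the broker rule that maximizes virtual surplus \(\expect{(\virt_B-\virt_S)^+}\). These rules have strictly positive virtual surplus \(P\), and their gains from trade are at least \(P\). Mixing \(x^{\FB}\) with weight \(\lambda=P/(P+D)\) and the broker rule with weight \(1-\lambda\) gives a feasible rule. Hence
\[
  \SB(\dist)\geq \frac{P\,\FB+D\,P}{P+D}.
\]
It therefore suffices to show \(P(\FB+D)\geq \tfrac12\FB\,(P+D)\), i.e.\ \(P\geq \tfrac{\FB\,D}{\FB+2D}\). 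This follows if, for instance, \(P\geq D\) or \(P\geq\FB/2\). If the simple mixture falls short, I would replace \(x^{\FB}\) by a truncated efficient rule \(\mathbbm{1}\{\val\geq\cost,\ \val\leq\bar v,\ \cost\geq\underline c\}\). I would then optimize the truncation points, since truncating at the extremes removes most of the information rent at little cost in gains from trade.

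The main obstacle is exactly this inequality comparing the broker's virtual-surplus profit with the deficit (or rents) of efficient trade, for arbitrary independent \(\dist_B,\dist_S\). I would attack it with the acceptance-curve geometry of \Cref{thm:random-proposer-quarter-approx}, fixing \(\val\) and writing everything as areas relative to \(\balloc(\cdot)\) and \(\balloc(\cdot)/2\).
\begin{itemize}
\item The seller's rent under efficient trade is the area to the left of the cost curve below \(\balloc(\val)\).
\item The profit of a principal posting a price is a rectangle inscribed in that region.
\item The analogous statements hold for the buyer side.
\end{itemize}
I expect a direct rectangle-versus-area comparison to lose a logarithmic factor, as in the equal-revenue tightness example of \Cref{thm:random-proposer-second-best-half}. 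So the truncation parameters must be chosen to cut off the equal-revenue tail. The key lemma I would aim for is: for each truncation level, the rent removed is paid for by an inscribed-rectangle profit, integrated over a randomized truncation level. Tightness of \(\nicefrac12\) should come from a near-equal-revenue instance.
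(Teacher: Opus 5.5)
Your outline has a genuine gap, and it sits at the step you flag as the main obstacle. The inequality you reduce to is false, and the two-rule mixture cannot reach one half. (The paper only cites \citet{LQRW-26} for this theorem, so your argument has to stand on its own.)

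\emph{The sufficient condition fails.} The condition $P\ge \FB\cdot D/(\FB+2D)$ already fails for $v,c\sim U[0,1]$. There $\varphi_B(v)=2v-1$ and $\varphi_S(c)=2c$, so $\FB=\nicefrac16$. The efficient rule has virtual surplus $2\cdot\nicefrac16-\nicefrac12$, i.e.\ $D=\nicefrac16$. But $P=2\,\mathbf{E}[(v-c-\nicefrac12)^+]=\nicefrac1{24}<\nicefrac1{18}$, so your bound yields only $\SB\ge\nicefrac1{15}<\nicefrac1{12}$. The loss comes from crediting the broker rule with its virtual surplus $\nicefrac1{24}$ instead of its gains from trade $\nicefrac1{12}$.

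\emph{The mixture itself fails.} Even crediting the broker's true gains from trade $G$, the mixture can be an arbitrarily small fraction of first best. Take $\Pr[v\ge x]=\nicefrac1x$ on $[1,H]$ (atom $\nicefrac1H$ at $H$), and $c=H+1-W$ with $W$ an independent copy of $v$. Then $\varphi_B$ is $0$ below $H$ and $H$ on the atom, and $\varphi_S$ is $H+1$ above $1$ and $1$ on the atom. The broker trades only at $(H,1)$, and a direct computation gives
\[
  \FB=D=\frac{2\ln H}{H+1},\qquad P=G=\frac{H-1}{H^2}.
\]
Your mixture earns $2P\,\FB/(P+\FB)<\FB/\ln H$. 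By contrast, simply posting the price $H$ is budget balanced and earns $(\ln H)/H\ge\nicefrac12\,\FB$. Your truncation fallback points the wrong way on this instance. Essentially all gains from trade come from the extreme atoms $v=H$ and $c=1$. Essentially all of the deficit comes from interior pairs, where $\varphi_B-\varphi_S=-(H+1)$.

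\emph{What is missing.} You need a family of feasible rules rich enough to be essentially optimal on worst-case instances; because the factor $\nicefrac12$ is tight, there is no slack to spend on a restricted family. The second-best optimum is not a randomization between the efficient rule and the broker rule, which are the $\mu=0$ and $\mu\to\infty$ endpoints. It is the Lagrangian rule that trades when $(v-c)+\mu(\varphi_B-\varphi_S)\ge0$, ironed as needed, for a suitable multiplier $\mu\ge0$. A proof has to bound $\FB$ against that program directly (for instance, through a dual certificate), or against other budget-feasible mechanisms such as posted prices. Either way, the argument must cover instances like the one above.

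The rectangle-versus-area comparisons you plan to borrow from the proof of \Cref{thm:random-proposer-quarter-approx} are lossy by design. You state no lemma showing how a randomized truncation level would recover the lost factor.

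A smaller point: the feasibility test uses unironed virtual values, since ironing enters only when optimizing. Distributions with atoms also need the quantile-space form of the payment identity.
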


\subsection{Notes}

\label{s:robust-exchange-notes}

\citet{MS-83} characterize the incentive-compatible and individually
rational bilateral-trade mechanisms, establish that efficient trade generally
requires an outside subsidy, and formulate the second-best mechanism that
maximizes expected gains from trade subject to incentive, participation, and
no-deficit constraints.

\citet{mca-92} developed a prior-robust
mechanism for multi-agent exchange based on trade reduction.  When there are
\(k\) profitable trades, the trade-reduction mechanism makes at least the
\(k-1\) most profitable of them and thus obtains at least a \(1-\nicefrac1k\) fraction
of first-best gains from trade
(cf.\ \Cref{cor:second-price-worst-case-ratio}).

For the original bilateral-exchange problem, \citet{HR-87} show that the only
dominant-strategy incentive-compatible, ex-post individually rational, and
budget-balanced mechanisms are posted-price mechanisms, in which trade occurs
only when both agents accept the price.  Their formal results cover
differentiable mechanisms, deterministic allocation rules, and arbitrarily
fine step-function rules.  \citet{mca-08}
shows that the first-best ratio of the ex-ante-optimal fixed price is at least the smaller of the
seller and buyer acceptance probabilities at that price.  Consequently, if
the seller's median is no greater than the buyer's median, every price between
the two medians obtains at least one half of first-best gains from trade.

Posted prices face two limitations.  Even under \citeauthor{mca-08}'s
median-ordering assumption, obtaining a constant first-best ratio
requires prior information.  Without this assumption,
no price, even one tailored to the distributions, guarantees a
constant first-best ratio.  \citet{BD-16} give a family of
distributions supported between zero and a growing upper endpoint for
which the first-best ratio of every fixed price is at most a constant
divided by that upper endpoint (cf.\ \Cref{s:robustness-paradigms}).
Subsequently, \citet{CGDL+17} parameterize the achievable guarantee by
the probability that trade is efficient.  They obtain a fixed-price
guarantee that deteriorates only logarithmically as efficient trade
becomes rare, and give matching instances showing that this dependence
is asymptotically tight.  

\citet{BCWZ-17} proved that random proposer obtains at least one half
of second-best gains from trade under the ex-ante no-deficit
constraint; random proposer itself is ex post budget balanced.  The
short proof given above is recorded by \citet{BDK-21} and attributed
to Ilya Segal.  Their tight example is budget balanced, so the
\(\nicefrac12\) ratio remains tight even if second best is required to
be budget balanced.

\citet{DMSW-22} study the first-best ratio of random proposer and
obtain the distribution-independent lower bound of \(0.12\).
\citet{fei-22} improves the first-best ratio to about \(0.318\).
\citet{HW-25} give the geometric proof of the \(\nicefrac14\) ratio
presented above and refine the argument to recover the bound of
\citeauthor{fei-22}.  For buyer and seller distributions with a common
bounded support, \citet{jo-26} improves the lower bound to
\(\nicefrac{1}{\pi} \approx 0.318\) and gives an upper bound of about
\(0.46\).  \citet{BWZ-26} removes the bounded-support restriction and improves
the lower bound to \(\nicefrac1e\approx 0.37\).  Thus, the exact bilateral
ratio remains unknown and lies between approximately \(0.37\) and \(0.46\).

\citet{LQRW-26} prove that second-best gains from trade is always at least
one half of first-best gains from trade and that this factor is tight.
Combined with the second-best ratio of one half from \citet{BCWZ-17}, their
result gives another derivation of the first-best ratio of one quarter.  The
proof of \citet{HW-25} instead compares random proposer directly with first
best.

A separate literature studies prior-robust revenue maximization by an
intermediary.  \citet{BV-03} give an adaptive large-market mechanism that learns
the buyer and seller distributions from the traders' reports and asymptotically
attains the profit of the optimal mechanism with known distributions.
\citet{LM-20} show how to perform the market analysis needed for asymptotic
profit maximization under regularity within a double-clock auction.
For the original bilateral-exchange problem, \citet{LN-07,LN-26} show that the
optimal mechanism for an intermediary can be implemented by a transaction fee
on the seller's posted price.  The optimal fee generally depends on both
distributions.  When seller costs follow a reflected generalized Pareto
distribution, however, the optimal fee is linear and prior-robust with respect
to the buyer's distribution.  They further show that optimal fees converge to
this form in thin markets as the overlap between buyer values and seller costs
vanishes.  \citet{NYK-14} subsequently study affine transaction fees and,
under regularity conditions including linear marginal revenue or monotone
hazard rates, obtain a constant fraction of optimal intermediary profit
uniformly over seller distributions.  Their mechanisms are therefore
prior-robust with respect to the seller's distribution.
 
\section{Revelation Gap}
\label{s:revelation-gap}

Most research on mechanism design begins with the simplification provided by
the revelation principle \citep{mye-81}: if there is a mechanism
with a good equilibrium then there is a mechanism with a good
truthtelling equilibrium.  This changes the optimization over all
mechanisms and all equilibria into a constrained optimization problem:
maximize expected performance subject to incentive compatibility
constraints.  The revelation principle has a simple constructive
proof: the revelation mechanism can simulate the equilibrium in the
non-revelation mechanism and implement the outcome of this simulation.

The revelation principle does not hold for prior-robust mechanism
design.  Here we imagine an ignorant principal who is designing and
running a mechanism for informed agents.  In other words, the aim is
to design one mechanism that performs well in every equilibrium of
every environment.  Given such a scenario, the principal is not free
to simulate the agent strategies as these generally depend on the
prior distribution, and the result of such a simulation would not be
prior robust.

The online marketplace eBay gives an example of this setup.  Their
platform mechanism mediates the exchange between buyers and sellers
for heterogeneous items.  They have a fixed mechanism in which sellers
list items, either by auction or with a posted price, and buyers either
bid or choose whether to accept the posted price.  The eBay mechanism
is not tailored to any specific item for sale.  Instead, eBay
delegates to the seller to configure the mechanism for the buyer and
simply charges a percentage fee for the transaction
\citep[cf.][]{LN-26}.

\citet{FH-18} proposed to study the extent to which non-revelation
mechanisms are more prior-robust than revelation mechanisms.  They
define this deficiency in the prior-robust max-min ratio framework as
the {\em revelation gap}.  The revelation gap is the fraction of the
optimal prior-robust guarantee over all mechanisms that is lost when
attention is restricted to revelation mechanisms.

This section revisits bilateral exchange from \Cref{s:robust-exchange}.
The random proposer mechanism from that section is a non-revelation
mechanism.  In contrast, as we will observe here, a no-deficit
revelation mechanism cannot guarantee any fraction of the second-best
(or first-best) gains from trade.  Thus, the revelation gap is one
(total).  In practice, one-side-proposes mechanisms are common,
e.g., eBay and real estate, while revelation mechanisms are not common
for such exchanges.

We next review the results of \citet{FH-18} on welfare maximization for
agents with budgets and of \citet{FHL-21} on pricing with a single sample.
These results are quite technical, and
so they will be covered only at a high level.

A main challenge in the study of the revelation gap is that some care
is needed in framing the question of prior-robust
non-revelation mechanism design.  Specifically, when considering
prior-robust revelation mechanisms, there is an inherent symmetry:
neither the principal nor the agents need know the prior.  When moving
to non-revelation mechanisms, the asymmetry between a principal seeking
robustness to the prior and agents who know the prior raises the
possibility that the principal could ask directly for the information
it lacks.  This approach is considered in the nonparametric implementation-theory literature
\citep[see][for a survey]{jac-01}.  For example, with unrestricted
message spaces, the mechanism can ask agents to report the prior,
cross-check their reports, and condition its outcome on the reported
distribution. Relevant results include: \citet{CK-99} give a nonparametric undominated Bayesian
implementation for efficient public-good provision using prior
reports, and \citet{CR-05} implement the revenue-maximizing auction
for an uninformed seller.  At the level of our max-min supremum, a
virtual nonparametric implementation corresponds to a prior-robust
max-min ratio of one (i.e., the implementation matches the second-best
performance).  This review article views these positive implementation-theory
results as negative results for the model.

The three examples discussed in this section address this modeling
challenge in different ways.  For bilateral exchange, there is no
non-trivial guarantee for prior-robust revelation mechanisms, so the
details of the prior-robust non-revelation mechanism are not
important.  For welfare maximization with financially constrained
buyers, a natural prior-robust mechanism achieves the second-best
performance, so there is no improvement possible via implementation
theory.  For single-sample pricing, the single-agent restriction limits
report-the-distribution tricks \citep[cf.][]{TTW-26}.

\subsection{Definition}

Fix a performance measure and a family \(\Dists\) of distributions.
Let \(\allM\) be the family of all admissible mechanisms and let
\(\revM\subseteq\allM\) be the family of revelation mechanisms.  In a
non-revelation mechanism, equilibrium strategies may depend on the
prior.  A revelation mechanism instead has agents report only their
payoff types, with truthful reporting forming an equilibrium for every
prior.  All guarantees in this section hold in every equilibrium under
the solution concept stated for the mechanism; accordingly, for a
non-revelation mechanism \(\mech\), \(\mech(\dist)\) denotes its worst
expected performance over those equilibria under \(\dist\).

Recall from \Cref{def:prior-independent-max-min-ratio} that the optimal
prior-robust max-min ratio with mechanism family $\Mechs$ and
distribution family $\Dists$ is
\[
  \OptPIratio(\Mechs,\Dists)
  =
  \sup_{\mech\in\Mechs}\inf_{\dist\in\Dists}
  \frac{\mech(\dist)}{\SB_{\dist}(\dist)}.
\]

\begin{definition}[Revelation gap]
\label{def:revelation-gap}
The revelation gap is the fraction of the
optimal general-mechanism guarantee lost under revelation:
\[
  \RevGap{\Dists}\in[0,1].
\]
If \(\OptPIratio(\allM,\Dists)=0\), the revelation gap is
zero.
\end{definition}

The numerator and denominator use the same objective, distribution
family, and prior-dependent benchmark.  A revelation gap of zero means
that restricting the design space to revelation mechanisms causes no
loss.  A revelation gap of one, a.k.a.~total, means that general
mechanisms have a positive guarantee while revelation mechanisms have
none.

\subsection{Bilateral Exchange, Revisited}

Consider the bilateral-exchange model of \Cref{s:robust-exchange}.
Recall that the seller has cost $\cost$ and the buyer has value $\val$,
drawn independently as \((\cost,\val)\sim\dist\).  Let $\SB_{\dist}$
denote the optimal Bayesian-incentive-compatible
interim-individually-rational ex-ante-no-deficit mechanism.  Recall from
\Cref{thm:random-proposer-second-best-half} that there is a
prior-robust non-revelation mechanism (the random proposer mechanism, \Cref{def:random-proposer})
that guarantees a ratio of $\nicefrac12$ against $\SB_{\dist}(\dist)$.

\begin{corollary}
For no-deficit bilateral exchange and the gains-from-trade objective,
the optimal prior-robust ratio over all mechanisms is
$\OptPIratio(\allM,\Dists) \geq \nicefrac12.$
\end{corollary}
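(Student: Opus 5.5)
The plan is to exhibit a single admissible mechanism in $\allM$ whose worst-case ratio is at least $\nicefrac12$. The candidate is the random proposer mechanism of \Cref{def:random-proposer}, and the main input is \Cref{thm:random-proposer-second-best-half}. Since $\OptPIratio(\allM,\Dists)$ is a supremum over mechanisms, it suffices to show that
\[
  \inf_{\dist\in\Dists}\frac{\RP(\dist)}{\SB_{\dist}(\dist)}\geq\nicefrac12,
\]
where $\RP(\dist)$ is the worst expected gains from trade over perfect Bayesian equilibria under $\dist$, as in the convention of the revelation-gap definition.

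\textbf{Admissibility.} First, I would check that random proposer belongs to $\allM$. It is prior-robust from the principal's side: the mechanism itself only flips a fair coin and lets the selected agent make a take-it-or-leave-it offer, so it requires no knowledge of $\dist$. It is exactly budget balanced, hence satisfies no-deficit, because whenever trade occurs the buyer's payment goes to the seller. Participation is voluntary in equilibrium. The proposer can always offer a price guaranteeing zero utility; for example, the buyer offers $0$, which is rejected by any positive-cost seller, or yields nonnegative utility if accepted. The responder accepts only when the offer is individually rational. Any dependence on $\dist$ enters only through the agents' equilibrium strategies, which non-revelation mechanisms are permitted to have.

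\textbf{The bound.} Next, I would fix an arbitrary independent prior $\dist$ and any perfect Bayesian equilibrium. Applying \Cref{thm:random-proposer-second-best-half} to that equilibrium gives expected gains from trade at least $\nicefrac12\,\SB(\dist)$. Because the bound holds in every equilibrium, it holds for the worst one, so $\RP(\dist)\geq\nicefrac12\,\SB_{\dist}(\dist)$. If $\SB(\dist)=0$, the ratio is defined to be one, which is also at least $\nicefrac12$. Taking the infimum over $\dist$ and then the supremum over mechanisms yields $\OptPIratio(\allM,\Dists)\geq\nicefrac12$.

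\textbf{Main obstacle.} The only real subtlety is the modeling one: confirming that the benchmark in \Cref{thm:random-proposer-second-best-half} is the same $\SB_{\dist}$ used here, namely the Bayesian-incentive-compatible, interim-individually-rational, ex-ante-no-deficit optimum. It is, so the corollary follows immediately.
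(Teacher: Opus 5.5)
Your proposal is correct and follows the paper's route: the paper obtains this corollary directly from \Cref{thm:random-proposer-second-best-half}. It exhibits the random proposer mechanism as a member of $\allM$ whose gains from trade in every perfect Bayesian equilibrium are at least half of $\SB_{\dist}(\dist)$ under every independent prior, so the supremum defining $\OptPIratio(\allM,\Dists)$ is at least $\nicefrac12$. Your extra checks (budget balance implying no deficit, the $\SB(\dist)=0$ convention, and matching the ex-ante-no-deficit benchmark) are consistent with the paper's setup and just make explicit what it leaves implicit.
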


We now show that no revelation mechanism obtains any non-trivial
prior-robust guarantee.  The following proposition follows immediately
by evaluating each constraint on asymmetric point-mass distributions.\footnote{Much of prior-independent mechanism design
instead restricts attention to i.i.d.\ priors.  In that setting, the analogous
reduction requires a more involved argument; see \citet{FH-18}.}

\begin{proposition}
\label{prop:all-prior-ex-post}
A fixed mechanism that satisfies any combination of (a) Bayesian incentive
compatibility, (b) interim individual rationality, and (c) ex-ante no deficit
for every prior $\dist$ in a family $\Dists$ that contains all asymmetric
point-mass distributions also satisfies the corresponding combination of
(a) dominant-strategy incentive compatibility, (b) ex post individual
rationality, and (c) ex post no deficit.
\end{proposition}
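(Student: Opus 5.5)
The plan is to fix an arbitrary type profile and specialize each Bayesian constraint to the point-mass prior concentrated on that profile. Under a point mass, the interim quantities collapse to ex post quantities at the realized profile, so each Bayesian condition becomes the corresponding ex post condition there. Concretely, fix a mechanism with ex post allocation and payment rules, and a profile of types (here $(\cost,\val)$ for seller and buyer, though the argument is for general $n$ agents). The phrase "asymmetric point-mass distributions" means product priors in which each agent $i$'s type is a point mass at its own, possibly different, type $\typei$. With independent agents, such priors are available for every profile $\types$.

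For (b), I would evaluate interim individual rationality for agent $i$ of type $\typei$ under the point-mass prior at $\types$. Agent $i$'s interim utility is the expectation over the others' types, which are deterministic and equal to $\typesmi$. So interim utility equals ex post utility at $\types$, and interim IR gives ex post IR at $\types$. For (c), ex-ante expected revenue under this prior is just the realized revenue at $\types$, so ex-ante no deficit yields ex post no deficit at $\types$. Since $\types$ was arbitrary, both conditions hold everywhere.

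For (a), fix agent $i$, a true type $\typei$, a misreport $\typei'$, and others' reports $\typesmi$. Bayesian incentive compatibility under the point-mass prior at $(\typei,\typesmi)$ says that truthful reporting maximizes $i$'s interim utility when the others report truthfully. Because the others' types are the point mass $\typesmi$, their truthful reports equal $\typesmi$ with probability one. The BIC inequality is therefore exactly
\[
\util_i(\typei;(\typei,\typesmi)) \ge \util_i(\typei;(\typei',\typesmi)),
\]
which is the dominant-strategy inequality for this choice of $(\typei,\typei',\typesmi)$, and these quantities were arbitrary. One subtlety needs checking: BIC is usually imposed only for types in the support of the prior. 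Type $\typei$ is in the support, and misreports $\typei'$ range over the full report space, so the constraint is available. Arbitrary $\typesmi$ are also available, because every profile has its own point-mass prior.

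Since each implication holds separately, any combination of (a)--(c) transfers. I expect no real obstacle. The only care needed is the convention issue above, namely that the Bayesian constraints hold for every type in the support and that priors with point-mass marginals are in $\Dists$. This is why the family must contain all (asymmetric) point masses rather than only i.i.d.\ ones, which is the case the paper's footnote defers to \citet{FH-18}.
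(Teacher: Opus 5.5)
Your proposal is correct and follows the paper's proof: you fix an arbitrary type profile, take the point-mass prior on it, and observe that each interim or ex-ante constraint reduces to the corresponding ex post constraint at that profile. Your added detail is a faithful elaboration of the paper's one-line argument, not a different route. It covers the BIC support convention: the true type lies in the support, and misreports range over the whole type space. It also explains why asymmetric, rather than only i.i.d., point masses are needed.
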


\begin{proof}
Fix a type profile and consider the point-mass prior on that profile.  Under
this prior, each interim or ex-ante constraint is the corresponding ex-post
constraint at that profile.  Since the profile was arbitrary, the result
follows.
\end{proof}

The revelation-mechanism impossibility follows by (implicitly) applying
the payment identity of \citet{mye-81}: incentive compatibility and
individual rationality lower bound both agents' utilities, while no
deficit upper bounds their sum.

\begin{proposition}
\label{prop:exchange-no-deficit-zero-ratio}
For a family $\Dists$ that contains all asymmetric point-mass
distributions and the gains-from-trade objective, every
Bayesian-incentive-compatible interim-individually-rational no-deficit
mechanism has prior-robust ratio zero, i.e., $\OptPIratio(\revM,\Dists) = 0$.
\end{proposition}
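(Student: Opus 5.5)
First apply \Cref{prop:all-prior-ex-post}. A fixed revelation mechanism that is Bayesian incentive compatible, interim individually rational, and ex-ante no deficit for every prior in $\Dists$ must then be dominant-strategy incentive compatible, ex post individually rational, and ex post no deficit. So it suffices to show that every such ex-post mechanism has ratio zero against $\SB_{\dist}(\dist)$ on some point-mass priors. For a point mass on $(\cost,\val)$ with $\val>\cost$, the second-best mechanism tailored to that prior posts a price between $\cost$ and $\val$. It obtains $\SB=\val-\cost>0$, so the benchmark is positive and the ratio is well defined.

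Next, use the payment identity of \citet{mye-81} for each agent separately to bound utilities from below. Fix the seller's report $\cost$ and let $\alloc(\val,\cost)$ be the trade probability, which dominant-strategy incentive compatibility makes nondecreasing in $\val$ and nonincreasing in $\cost$. Ex post individual rationality and the payment identity give buyer utility
\[
u_B(\val,\cost)\ \ge\ \int_0^{\val}\alloc(t,\cost)\,\dd t .
\]
Similarly, seller utility satisfies $u_S(\val,\cost)\ge\int_{\cost}^{\infty}\alloc(\val,s)\,\dd s$. To avoid infinite integrals, restrict attention to costs and values in a bounded interval, say $[0,\maxval]$, and take the upper limit to be $\maxval$. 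Ex post no deficit says the buyer's payment is at least the seller's receipt. Hence $u_B+u_S\le(\val-\cost)\alloc(\val,\cost)$, the realized gains from trade.

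Then derive a contradiction from a positive guarantee. Suppose the mechanism guarantees ratio $\apxratio>0$, so $\alloc(\val,\cost)\ge\apxratio$ whenever $\val>\cost$, since otherwise the point mass at $(\cost,\val)$ breaks the guarantee. Take $\cost=0$ and $\val=1$. The buyer bound gives $u_B\ge\int_0^1\alloc(t,0)\,\dd t\ge\apxratio$. The seller bound gives $u_S\ge\int_0^1\alloc(1,s)\,\dd s\ge\apxratio$. Realized gains from trade are $\alloc(1,0)\le1$, so this is not yet contradictory. To get a contradiction, use a scale-separated grid instead, as in the logarithmic constructions of \Cref{s:robustness-paradigms}. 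Take $\cost=\epsilon$ and $\val=1$. The buyer's lower bound integrates $\alloc(t,\epsilon)\ge\apxratio$ over $t\in(\epsilon,1]$. The seller's lower bound integrates $\alloc(1,s)$ over $s\in[\epsilon,1)$. These are each about $\apxratio$, against gains $(1-\epsilon)\alloc$, so again this is not contradictory. The right strengthening is to note that for point masses the ratio requires $\alloc(\val,\cost)\,(\val-\cost)/(\val-\cost)\ge\apxratio$. Sharpening then comes from considering the expected performance under a nearly-degenerate prior: the mechanism's gains on pairs with small surplus $\val-\cost$, versus the utility lower bounds accumulated over the whole range of reports.

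The main obstacle, where I would spend effort, is making the accounting tight. The plan is to sum the no-deficit inequality over a geometric ladder of point masses $(\cost_k,\val_k)$ with $\val_k-\cost_k$ tiny but the information rents large. Concretely, I would take $\cost=1-\delta$ and $\val=1$. The lower bounds $\int_0^{1}\alloc(t,1-\delta)\,\dd t$ and $\int_{1-\delta}^{\maxval}\alloc(1,s)\,\dd s$ are, by monotonicity, at least the rent contributions from trades at other profiles with large surplus, such as $(0,1)$ and $(1-\delta,\maxval)$, where $\alloc\ge\apxratio$. Each such rent is order $\apxratio$, while the gains are at most $\delta$. Monotonicity then transfers $\alloc(t,1-\delta)\ge\alloc(t,\cost')$ only in the wrong direction, so I would try instead to bound $\alloc$ on the strip directly: no deficit forces $\alloc(1,1-\delta)\,\delta\ge u_B+u_S$. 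I expect that choosing $\delta\to0$ against the guaranteed rents, together with careful use of monotonicity, yields $\apxratio\le O(\delta)$ and hence ratio zero. This is the step I am least sure of.
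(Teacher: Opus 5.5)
Your reduction to dominant-strategy incentive compatibility, ex post individual rationality, and ex post no deficit is correct, and so are your payment-identity lower bounds on $u_B$ and $u_S$. The gap is that the argument never reaches a contradiction, and the route you propose for the last step cannot work. At $(\cost,\val)=(1-\delta,1)$ the guarantee controls the integrands only for $t\in(1-\delta,1]$ and $s\in[1-\delta,1)$. For $t<1-\delta$ or $s>1$ the profile has no gains from trade, so the guarantee imposes no lower bound on $\alloc(t,1-\delta)$ or $\alloc(1,s)$ there, and setting them to zero is consistent with monotonicity. The guaranteed rents are therefore only $\apxratio\delta$ each, not of order $\apxratio$. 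No deficit then gives $\alloc(1,1-\delta)\,\delta\ge2\apxratio\delta$, and $\delta$ cancels. These inequalities are homogeneous in scale, so neither $\delta\to0$ nor a geometric ladder of point masses can yield $\apxratio\le O(\delta)$.

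The missing idea is that the inequality you already derived improves on itself. At any profile with $\val>\cost$, restrict your integrals to $t\in(\cost,\val]$ and $s\in[\cost,\val)$. This gives $u_B\ge\apxratio(\val-\cost)$ and $u_S\ge\apxratio(\val-\cost)$, so
\[
  2\apxratio(\val-\cost)\le u_B+u_S\le\alloc(\val,\cost)\,(\val-\cost),
  \qquad\text{i.e.,}\qquad \alloc(\val,\cost)\ge2\apxratio .
\]
Your own $(0,1)$ computation already showed $\alloc(1,0)\ge2\apxratio$; comparing it with the trivial bound $\alloc(1,0)\le1$ threw exactly this away. So a uniform lower bound $\apxratio$ on the trade probability over $\{\val>\cost\}$ implies the uniform bound $2\apxratio$. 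Taking $\apxratio$ to be the infimum of $\alloc$ over that set gives $\apxratio\ge2\apxratio$, hence $\apxratio=0$; equivalently, iterating gives $\alloc\ge2^k\apxratio>1$. This is the paper's proof. It picks a (near-)minimizing profile and gets each rent bound from a single deviation plus ex post individual rationality: the buyer reports $\cost+\epsilon$ and the seller reports $\val-\epsilon$. That route needs neither the full payment identity nor a bounded support.
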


\begin{proof}
For expositional simplicity, assume that the minimum below is
attained.  The same argument handles the general case by choosing an
approximate minimizer.  Let $\eptalloc$ be the ex post allocation rule, let the minimum gains from trade be:
\[
  \apxratio=\min_{\cost<\val}\eptalloc(\cost,\val),
\]
and let \((\cost,\val)\) be a minimizing profile.  Suppose, for contradiction,
that \(\apxratio>0\).

By the choice of $\apxratio$, every profile with positive gains from
trade has trade probability at least \(\apxratio\).  Dominant-strategy
incentive compatibility and ex post individual rationality require that,
for any \(0<\epsilon<\val-\cost\), by reporting \(\cost+\epsilon\), the
buyer can obtain utility at least
\[
  \apxratio(\val-\cost-\epsilon).
\]
Letting \(\epsilon\) decrease to zero and applying the same argument to
the seller, the buyer's and seller's utilities are bounded as
\[
  u_B(\cost,\val)\ge \apxratio(\val-\cost),
  \qquad
  u_S(\cost,\val)\ge \apxratio(\val-\cost).
\]
Consequently,
\[
  u_B(\cost,\val)+u_S(\cost,\val)\ge 2\apxratio(\val-\cost).
\]

By \Cref{prop:all-prior-ex-post}, ex-ante no deficit for every prior implies
ex post no deficit.  Thus, the agents' combined utility cannot
exceed the gains from trade generated by the mechanism:
\[
2\apxratio(\val-\cost) \leq 
  u_B(\cost,\val)+u_S(\cost,\val)
  \le \eptalloc(\cost,\val)(\val-\cost)
  =\apxratio(\val-\cost).
\]
This is a contradiction since \(\apxratio\) and \(\val-\cost\) are
strictly positive.
\end{proof}

We conclude with the revelation gap result.  The
simple and practical random proposer mechanism obtains a constant
fraction of the optimal gains from trade while no revelation mechanism
obtains any constant fraction.  The following corollary is immediate.

\begin{corollary}
\label{cor:exchange-revelation-gap}
For the independent bilateral-exchange family above, with second-best gains
from trade as the prior-dependent benchmark, the revelation gap is the total
gap of one.
\end{corollary}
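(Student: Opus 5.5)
The plan is to read off the two quantities in \Cref{def:revelation-gap} and substitute them. The family $\Dists$ here is the independent bilateral-exchange family with second-best gains from trade as the benchmark. It contains all product distributions of buyer value and seller cost, and in particular all point-mass distributions $\pointmass{(\cost,\val)}$, including the asymmetric ones needed by \Cref{prop:all-prior-ex-post}.

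First, I bound the all-mechanism guarantee from below. The corollary preceding \Cref{prop:all-prior-ex-post}, which follows from \Cref{thm:random-proposer-second-best-half}, gives $\OptPIratio(\allM,\Dists)\geq\nicefrac12>0$. This uses the random proposer mechanism evaluated in its worst perfect Bayesian equilibrium. Because this quantity is strictly positive, the zero-denominator convention in \Cref{def:revelation-gap} does not apply, and the gap is given by the displayed formula $1-\OptPIratio(\revM,\Dists)/\OptPIratio(\allM,\Dists)$.

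Second, I show the revelation-mechanism guarantee is zero. \Cref{prop:exchange-no-deficit-zero-ratio} gives $\OptPIratio(\revM,\Dists)=0$. One check is needed. That proposition bounds the infimum over profiles of the trade probability $\eptalloc(\cost,\val)$ with $\cost<\val$. This coincides with the prior-robust ratio against point masses, because for $\pointmass{(\cost,\val)}$ with $\cost<\val$ we have $\SB=\val-\cost$: posting any price in $[\cost,\val]$ is feasible and trades with certainty. So the mechanism's ratio on that point mass is exactly $\eptalloc(\cost,\val)$. Point masses with $\cost\geq\val$ have benchmark zero and ratio one by convention, so they do not affect the infimum. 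Hence the infimum over $\Dists$ is at most the infimum over point masses, which is zero, and the supremum over $\revM$ of these infima is zero.

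Finally, substituting gives a revelation gap of $1-0/\OptPIratio(\allM,\Dists)=1$, which is the total gap. The only real obstacle is the bookkeeping in the second step. One must confirm that the per-profile quantity bounded in \Cref{prop:exchange-no-deficit-zero-ratio} really is the ratio to $\SB_{\dist}(\dist)$ at point masses. One must also handle the case where the minimum is not attained; there, approximate minimizers give an infimum of zero rather than a contradiction at a single profile, and this still yields $\OptPIratio(\revM,\Dists)=0$.
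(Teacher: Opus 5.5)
Your proposal is correct and follows the paper's argument, which it calls immediate: combine $\OptPIratio(\allM,\Dists)\geq\nicefrac12$ from random proposer with $\OptPIratio(\revM,\Dists)=0$ from \Cref{prop:exchange-no-deficit-zero-ratio}, and substitute into \Cref{def:revelation-gap}. Your extra check is sound bookkeeping that the paper leaves implicit: at a point mass with $\cost<\val$, the ratio to $\SB$ is exactly the trade probability $\eptalloc(\cost,\val)$, and point masses with $\cost\geq\val$ have ratio one.
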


In the remainder of the section, we discuss two other mechanism design
environments with nonzero revelation gaps.  For these environments, we
will give the known bounds and describe the proof ideas that
distinguish the prior-robust revelation and all-mechanism design
problems.

\subsection{Public-Budget Welfare}

Consider the provision of an item to financially constrained buyers so
as to maximize welfare.  There is one item and $n$ agents.  Agent $i$
has value $\vali\geq 0$, and all agents have the same public budget
$\budget$.  The agents have linear utility up to their budget.  The
good must be allocated with no positive transfers.  Thus, a feasible outcome
has allocation indicators $\alloci\in\{0,1\}$ satisfying
$\sum_i\alloci\leq1$ and payments $\payi\in[0,\budget]$; agent $i$'s
utility is $\vali\,\alloci-\payi$.

Consider welfare maximization without positive transfers for i.i.d.\
agents whose values are drawn from an unknown public-budget regular
distribution \citep{mas-00}.  For welfare maximization, a distribution is
public-budget regular if its cumulative distribution function is concave.
For this problem, the
Bayesian optimal mechanism $\SB_{\dist}$ can be described as ironing
high-valued agents and efficiently allocating among low-valued
agents.  The ironed interval depends on the distribution and number of
agents.  The all-pay auction solicits bids, chooses the highest bidder
as the winner, and charges all bidders their bids.  The Bayes--Nash
equilibrium of the all-pay auction selects the optimal ironed interval
endogenously.

\begin{theorem}[\citealp{mas-00}]
  \label{thm:all-pay}
For i.i.d.\ public-budget regular agents, the all-pay auction in unique
Bayes--Nash equilibrium obtains the highest welfare among mechanisms
with no positive transfers.
\end{theorem}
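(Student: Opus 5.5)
Work in quantile space. Write an agent's interim allocation as a function of quantile, \(\qalloc(\quant)\), with \(\quant\) uniform and high values at low quantiles, and let \(\Val(\quant)=\dist^{-1}(1-\quant)\). The plan is to show three things: (i) the set of achievable interim allocation rules is characterized; (ii) the welfare-optimal rule subject to the budget is identified; (iii) the all-pay auction implements exactly that rule in its unique equilibrium.

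For step (i), use the payment identity of \citet{mye-81}. With no positive transfers, agents' utilities are \(\vali\alloci-\payi\) with \(\payi\ge0\). The interim payment of the highest type is then at least the area formula \(\Val(0)\qalloc(0)-\int\qalloc\,\dd\Val\) evaluated appropriately. The only binding constraint is that every type's interim payment is at most \(\budget\), and by monotonicity this is just the constraint at the top type. Feasibility of \(\qalloc\) is the usual Border condition for a single item among \(n\) i.i.d.\ agents: \(\qalloc\) is majorized by \(\quant\mapsto(1-\quant)^{n-1}\).

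For step (ii), welfare is \(n\int_0^1\Val(\quant)\qalloc(\quant)\,\dd\quant\). Integrating by parts, the top type's payment is \(\int_0^1 \quant\,\qalloc'(\quant)\,\dd\Val\)-type expression, which gives a Lagrangian with a linear budget term. Concavity of \(\dist\) (public-budget regularity) is what makes the pointwise optimizer, after ironing, have a single ironed interval \([0,\hat\quant]\) at the top. Below this interval the rule is efficient, \((1-\quant)^{n-1}\); on the interval the allocation is pooled at its average. The cutoff \(\hat\quant\) is chosen so that the budget binds exactly, or \(\hat\quant=0\) if it is slack. I expect this step to be the main obstacle. It requires verifying that concavity of the CDF rules out ironing anywhere but at the top, i.e.\ that the ``budget virtual value'' obtained from the multiplier is monotone below the pooled region. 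For this I would try to show that the relevant weight, roughly \(\Val(\quant)-\lambda\,\dens(\Val(\quant))^{-1}\cdots\), inherits monotonicity from \(\dist\) concave.

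For step (iii), construct the all-pay equilibrium directly. Types below the pooled region bid according to the standard symmetric all-pay strategy \(\bid(\val)=\int_0^{\val}t\,\dd[\dist(t)^{n-1}]\), which is efficient. All types in \([0,\hat\quant]\) bid exactly \(\budget\) and tie, with ties broken uniformly at random. This yields the pooled allocation. The indifference of the boundary type between bidding \(\budget\) and its efficient bid pins down \(\hat\quant\), and this is the same condition as the binding budget in step (ii). Uniqueness follows from standard all-pay arguments: there are no atoms except at \(\budget\), strategies are monotone, and the differential equation has a unique solution. These arguments, with monotonicity from step (i), give equality with the optimum.
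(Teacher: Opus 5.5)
Your three-step architecture is the right one, and it matches how the paper summarizes Maskin's result (the paper cites it without proof). You relax to interim rules whose top-type payment is at most $B$, show that the optimum pools the top quantiles and is efficient below, and check that the all-pay threshold indifference reproduces the binding budget. Steps (i) and (iii) are fine as sketches. In step (i) you should say explicitly that replacing the ex post constraint $p_i\le B$ by its interim version is a relaxation. The all-pay auction makes that relaxation tight, because each type's ex post payment equals its interim payment; this is the real reason the format is optimal. The genuine gap is step (ii). You leave it open, and the Lagrangian you sketch would not close it.

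With $V(1)=0$, the top type pays $p(0)=V(0)\,y(0)-\int_0^1 y(q)\,(-V'(q))\,dq$. Pricing the budget with multiplier $\lambda$ therefore gives the interior weight $V(q)+\lambda/f(V(q))$, with a plus sign. It also gives a point penalty $-\lambda V(0)\,y(0)$ at $q=0$.

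That boundary term is what forces pooling at the top. In ironing language, it is a negative atom at $q=0$ in the cumulative weight, whose concave hull is linear on $[0,\hat q]$. No non-monotonicity of the interior weight is involved. Take $F$ uniform on $[0,1]$: any interior weight of your form is linear and decreasing in $q$, so your scheme never pools. Yet pooling is needed whenever $B<1-\nicefrac1n$.

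Concavity of $F$, equivalently convexity of $V$ in $q$, is what makes the correct interior weight a sum of two nonincreasing functions. With your sign, $V-\lambda/f(V)$ need not be monotone; a truncated exponential is a counterexample.

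A clean way to finish is to fix $c=y(0)$ and argue as follows.
1. Welfare $\int_0^1 Y(q)(-V'(q))\,dq$ is pointwise nondecreasing in $Y$.
2. The budget requires $\int_0^1 y\,(-V')\,dq \ge cV(0)-B$. Integrating by parts gives $\int_0^1 y\,(-V')\,dq=Y(1)(-V'(1))+\int_0^1 Y\,dV'$; the boundary term at $0$ vanishes for bounded support. When $V$ is convex, this is also pointwise nondecreasing in $Y$.
3. Hence the pointwise-largest feasible rule, $Y(q)=\min\{cq,\int_0^q(1-r)^{n-1}dr\}$, is optimal for each $c$. It is concave, so it corresponds to a valid nonincreasing $y$.
4. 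Its top payment is nondecreasing in $c$, so the optimum takes the largest $c$ that meets the budget.

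The result is exactly your pooled-top, efficient-below rule, with $\hat q$ set by a binding budget.
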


A consequence of \Cref{thm:all-pay} is that, for the public-budget
welfare objective, the optimal prior-robust ratio over all mechanisms
(including non-revelation mechanisms) is one.  \citet{FH-18} combine
this result with an upper bound on the prior-robust ratio of revelation
mechanisms to establish a non-trivial revelation gap.

\begin{theorem}[\citealp{FH-18}]
\label{thm:public-budget-revelation-gap-summary}
For the public-budget welfare objective, the optimal prior-robust ratio
over revelation mechanisms satisfies
\[
  \nicefrac1e
  \le \OptPIratio(\revM,\Dists)
  \le 0.99.
\]
Consequently, the public-budget welfare revelation gap lies between \(0.01\)
and \(1-\nicefrac1e\).
\end{theorem}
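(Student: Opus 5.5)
The plan is to prove the two bounds on $\OptPIratio(\revM,\Dists)$ separately, and then read off the revelation-gap bounds from \Cref{def:revelation-gap}. Since \Cref{thm:all-pay} gives $\OptPIratio(\allM,\Dists)=1$, the gap equals $1-\OptPIratio(\revM,\Dists)$. The two bounds $\nicefrac1e\le\OptPIratio(\revM,\Dists)\le 0.99$ therefore translate directly into a gap in $[0.01,1-\nicefrac1e]$. The first step, as in \Cref{prop:all-prior-ex-post}, is to pin down what a prior-robust revelation mechanism must satisfy. Point masses are not available, because the family is restricted to i.i.d.\ public-budget regular distributions. Instead I would follow the more involved reduction the paper attributes to \citet{FH-18}. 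The aim is to show that Bayesian incentive compatibility and the budget constraint for every such prior force an essentially dominant-strategy incentive-compatible mechanism with ex post payments in $[0,\budget]$ and no positive transfers.

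For the lower bound $\nicefrac1e$, I would exhibit a simple truthful mechanism and compare it against the structure of $\SB_{\dist}$, which irons the high-valued agents and allocates efficiently among the low-valued ones. A natural candidate is a randomized threshold mechanism. It draws a threshold $t$ from a scale-free distribution of multiples of $\budget$, charges the budget-feasible price implied by $t$, allocates efficiently among agents below $t$, and lotteries uniformly among agents above $t$. Payments are capped at $\budget$ by construction, so the rule is truthful for every prior. The benchmark has the same form, with the ironed interval endpoint $\hat t$ chosen optimally for $\dist$. So it suffices to show that, for every $\hat t$, the expected welfare under a random $t$ is at least $\nicefrac1e$ of the welfare at $\hat t$. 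I would set this up as a zero-sum game exactly as in \Cref{prop:max-min-ratio-monopoly-pricing}: concavity of the CDF gives monotone welfare-per-threshold curves, and a density proportional to $1/t$ equalizes the ratio. The constant $1-\nicefrac1e$, or $\nicefrac1e$, emerges as it does in \Cref{prop:first-price-individual-efficiency}. If this exact mechanism fails, I would fall back on the mechanism of \citet{FH-18} and verify only the inscribed-curve inequality.

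For the upper bound $0.99$, I would construct two (or a small family of) public-budget regular distributions whose optimal ironed intervals differ substantially. A uniform-type distribution would favor full efficiency, and one with a heavy mass just above the budget-binding value would favor full ironing. Using the ex post characterization from the first step, any single dominant-strategy mechanism has an allocation rule that is fixed independently of the prior. I would then show that the payment cap of $\budget$, together with monotonicity, prevents any one allocation rule from being within $1\%$ of optimal on both instances. The route is to write welfare as a linear functional of the interim allocation rule and bound the maximum over feasible rules of the minimum of the two ratios, which is a small linear program.

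I expect the main obstacle to be the first step: the i.i.d.\ reduction from prior-robust Bayesian constraints to ex post constraints. Point masses are unavailable, so I would try to approximate asymmetric point masses with i.i.d.\ two-point regular distributions and argue by a limiting argument that any violation of the ex post constraints at some profile would produce a Bayesian violation under a distribution concentrated near that profile's values. The explicit constant $0.99$ in the upper bound is secondary and only requires a numerical LP verification.
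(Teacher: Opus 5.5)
The gap arithmetic via \Cref{thm:all-pay} is right, but the lower bound has a genuine gap. The benchmark's ironing threshold $\hat t$ is set by the budget binding on the top type's \emph{interim} payment. A Bayesian mechanism such as the all-pay auction can charge that payment regardless of the others' values. A dominant-strategy mechanism instead has its ex post payments pinned down profile by profile by the payment identity. In your threshold rule, a bidder above $t$ facing an opponent just below $t$ wins outright and pays nearly $t$. So ex post budget feasibility caps usable thresholds near $B$; with a random threshold it forces $\Pr[t>w]\le B/w$ for every $w$. By contrast, $\hat t$ is not tied to $B$: for two bidders uniform on $[0,5.5]$ with $B=1$ it equals $2$, and it grows with $n$. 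So ``payments are capped at $B$ by construction'' is exactly the step that fails, and comparing a random $t$ against every $\hat t$ compares against thresholds your mechanism cannot use.

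The analogy with \Cref{prop:max-min-ratio-monopoly-pricing} also breaks. Welfare is pointwise nondecreasing in the threshold, since less pooling is always more efficient, so there is no tradeoff for a $1/t$ density to equalize. The only obstacle is budget feasibility, and that is distribution-free. The paper's sketch instead uses the clinching auction, which lets high-valued bidders acquire lottery shares at per-unit prices above $B$ while keeping every ex post payment at most $B$. It compares that mechanism to an ex-ante relaxation of $\SB_{\dist}(\dist)$. Your fallback defers to \citet{FH-18} without identifying either ingredient.

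The reduction to dominant-strategy constraints is needed only for the upper bound, because any DSIC, ex post individually rational, budget-feasible mechanism is already a valid revelation mechanism for the lower bound. It is also needed only for two agents, because the hard instance has two; the paper relies on the two-agent result of \citet{FH-18}, which uses the incentive constraints for every i.i.d.\ prior. Your limiting argument would not deliver a general-$n$ version, for two reasons. A concave CDF has nonincreasing density, so public-budget regular priors cannot concentrate near any positive value; there are no ``two-point regular'' distributions. And for $n\ge3$, an i.i.d.\ prior's interim constraint is a polynomial in the mixing weights whose nonnegativity does not force nonnegativity at each mixed profile.

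Once DSIC and ex post individual rationality are in hand, a single prior suffices, and your diagnosis of the gap's source is off. It is not that different priors want different ironing intervals. It is that for one prior (two agents, $B=1$, values $U[0,5.5]$) even the best DSIC mechanism, a middle-ironed clinching auction, attains at most $0.99$ of the Bayesian optimum. Your program over interim allocation rules uses the wrong variables: the binding constraint is the cap on each ex post payment, a condition on $x_i(\cdot,v_{-i})$ for every $v_{-i}$. The paper sidesteps that infinite-dimensional problem by identifying the optimal DSIC rule for the instance. Your second instance, with heavy mass just above the budget-binding value, is also not public-budget regular.
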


For the lower bound on revelation mechanisms, the clinching auction
obtains at least a \(\nicefrac1e\) fraction of the Bayesian optimum.  The
proof compares a clinching-with-lotteries allocation to an ex-ante
relaxation.  For the upper bound on revelation mechanisms,
\citet{FH-18} consider two agents with budget one and values uniform
on \([0,5.5]\).  For two agents, \citet{FH-18} show that Bayesian incentive
compatibility and interim individual rationality for every i.i.d.\ prior imply
dominant-strategy incentive compatibility and ex post individual rationality
(cf.\ \Cref{prop:all-prior-ex-post}).
For this distribution, the welfare-optimal
dominant-strategy incentive-compatible mechanism is a middle-ironed clinching
auction.  Its welfare ratio relative to the Bayesian optimum is at most
\(0.99\).

\subsection{Pricing from a Single Sample}

Consider the revenue-maximizing monopolist's pricing problem when the
seller has access to a single sample from the buyer's value
distribution, cf.\ \Cref{thm:single-sample-monopoly-pricing}.  There
is one seller, one buyer, and one item.  The seller first commits to a
mechanism.  The sample \(\sample\) and buyer value \(\val\) are then
drawn independently from the same unknown distribution
\(\dist\in\Dists\).  Before choosing her action, the buyer knows
\(\dist\) and observes her value, but not the realized sample.  The
seller then implements the committed mechanism using the buyer's
action and the sample.  Note that for the second-best mechanism, i.e.,
when the distribution is known, the sample conveys no extra
information and the mechanism simply posts the monopoly price for the
distribution.

Without loss, a revelation mechanism maps the sample into a
distribution over posted prices \citep{ABB-22}.  \citet{ABB-22} derive
the revelation-mechanism bounds below by restricting attention to
scale-invariant mechanisms.  This restriction is without loss for the
robust optimization \citep{HJL-25}.  \citet{FHL-21} construct a
non-revelation mechanism that performs better.  For
monotone-hazard-rate (MHR) distributions (\Cref{def:regular-mhr}),
their mechanism can be described using a binary action space.  For a
multiplier \(\samplemult>0\), the buyer chooses whether to accept or
reject the price $\samplemult\,\sample$ before observing the sample \(\sample\).  If the buyer rejects,
there is no allocation or payment.  If the buyer accepts, she receives
the item and pays \(\samplemult\sample\), which is unknown at the time
of the decision.  These results give the following revelation gap.

\begin{theorem}
\label{thm:mhr-pricing-revelation-gap-summary}
In single-buyer, single-sample pricing for \(\MHR\) distributions and
the revenue objective, the optimal revelation max-min ratio lies in
\citep{ABB-22}:
\[
  [0.644,0.648].
\]
The optimal max-min ratio over all sample-based mechanisms (including
non-revelation mechanisms) lies in \citep{FHL-21}:
\[
  [0.772,0.931].
\]
Consequently, the revelation gap lies in
\[
  \left[1-\nicefrac{0.648}{0.772},
        1-\nicefrac{0.644}{0.931}\right]
  \approx \left[0.16,0.31\right].
\]
\end{theorem}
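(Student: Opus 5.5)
The plan is to treat the theorem as a summary: its two intervals are cited from \citet{ABB-22} and \citet{FHL-21}, and the revelation-gap interval follows from \Cref{def:revelation-gap} by monotonicity. The steps are to (i) justify the revelation interval $[0.644,0.648]$, (ii) justify the all-mechanism interval $[0.772,0.931]$, and (iii) combine them by interval arithmetic.

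For (i), I first use the stated reduction that a revelation mechanism is, without loss, a map from the sample to a distribution over posted prices. By \citet{HJL-25}, scale invariance is without loss, so the mechanism is a distribution over multipliers $a$, with posted price $a\,\sample$. The lower bound $0.644$ comes from exhibiting a multiplier distribution and bounding its worst-case ratio over MHR distributions. The upper bound $0.648$ comes from nature's side of the zero-sum game, exactly as in the proofs of \Cref{s:robustness-paradigms}: a mixture over MHR distributions against which every multiplier distribution earns at most $0.648$ of the benchmark. Both bounds are numerical optimizations and are taken from \citet{ABB-22}.

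For (ii), the lower bound $0.772$ analyzes the binary-action mechanism of \citet{FHL-21}. A buyer with value $\val$ accepts iff $\val\ge a\,\expect{\sample}$. So, in every equilibrium, acceptance is a threshold at the mean times $a$, and revenue is $a\,\expect{\sample}\,\prob{\val\ge a\,\expect{\sample}}$. The mean of an MHR distribution is within constant factors of the monopoly price and its sale probability, which yields a worst-case ratio after optimizing $a$. The upper bound $0.931$ requires a bound against all mechanisms, including non-revelation ones. The single-buyer restriction prevents report-the-prior tricks. I expect this step to be the main obstacle. The approach I would try is to exhibit two MHR priors that the buyer's behavior cannot help distinguish, so that the sample is the only information the seller has about which prior obtains. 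This forces a loss against second best, which I take from \citet{FHL-21}.

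For (iii), the revelation gap $1-\OptPIratio(\revM,\Dists)/\OptPIratio(\allM,\Dists)$ is increasing in the denominator and decreasing in the numerator. The minimum is therefore attained at numerator $0.648$ and denominator $0.772$, giving $1-0.648/0.772\approx0.16$. The maximum is attained at numerator $0.644$ and denominator $0.931$, giving $1-0.644/0.931\approx0.31$. This establishes the stated interval.
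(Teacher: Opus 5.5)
Your proposal is correct and follows essentially the same route as the paper. The revelation interval is imported from \citet{ABB-22} via the sample-to-price-distribution reduction and scale invariance \citep{HJL-25}. The all-mechanism interval is imported from \citet{FHL-21} via the binary mechanism whose equilibrium threshold is $a\,\mathbb{E}[s]$. The gap interval then follows from your monotonicity argument. The only difference is that the paper's sketch of the $0.931$ bound is more concrete than your ``indistinguishable priors'' heuristic: it compares the uniform distribution on $[1,2]$ with point masses at the realized samples, and uses that a point-mass buyer can imitate the action of the value-$2$ uniform buyer.
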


For the all-mechanism lower bound, the buyer's expected utility from
accepting is \(\val-\samplemult\,\mathbb{E}[\sample]\), so they
accept if and only if \(\val\geq\samplemult\,\mathbb{E}[\sample]\).
The buyer's distribution-dependent equilibrium decision therefore
induces the threshold \(\critval = \samplemult\,\mathbb{E}[\sample]\),
even though the mechanism is fixed.  With \(\samplemult=0.824\), the
mechanism has worst-case ratio at least \(0.772\).

Two steps make the analysis of the non-revelation mechanism
challenging.  On the constructive side, the proof compares
the revenue from the distribution-dependent threshold \(\critval =
\samplemult\,\mathbb{E}[\sample]\) with the monopoly revenue uniformly
over \(\MHR\) distributions and optimizes the multiplier
\(\samplemult\).  On the impossibility side, \citet{FHL-21} compare a
uniform distribution on \([1,2]\) with point-mass distributions at
realized samples.  A buyer from a point-mass instance can imitate the
strategy of a buyer with value \(2\) from the uniform instance.
Incentive compatibility and the two performance guarantees then imply
the upper bound \(0.931\).

The revelation bounds require a separate optimization over rules that
map the sample directly to a price \citep{ABB-22}.

\subsection{Notes}

See \Cref{s:robust-exchange-notes} for an extensive discussion of
prior work on bilateral exchange.  The revelation-gap analysis above
focuses on no-deficit mechanisms and therefore implies the same result
for budget-balanced mechanisms.  The budget-balanced special case also
follows from the randomized-posted-price characterization initiated by
\citet{HR-87} and proved in general by \citet{CP-16}.  Under a randomized
posted-price mechanism, trade occurs if both agents accept the realized
price.  Any fixed distribution over prices assigns arbitrarily little
probability to some profitable interval, implying a prior-robust ratio
of zero for the gains-from-trade objective.

There is a literature on revenue maximization in bilateral exchange
where non-revelation mechanisms do well.  See the discussion of \citet{LN-07,LN-26} and \citet{NYK-14} in \Cref{s:robust-exchange-notes}.

For welfare maximization with public budgets, \citet{mas-00}
characterizes the Bayesian optimal mechanism.  \citet{aus-04}
introduces the ascending clinching auction and \citet{DHH-13} develop
prior-robust methods for welfare and revenue maximization using the
clinching auction.  \citet{FH-18} bring these perspectives together in
their formulation and study of the revelation gap.  An extensive
literature on budget constraints in mechanism design considers other
aspects of the problem; further citations are omitted.

The pricing-from-samples model was developed by \citet{DRY-10,DRY-15}
with the main result that using the random sample as a posted price
guarantees a ratio of $\nicefrac12$ against the optimal monopoly revenue
(\Cref{thm:single-sample-monopoly-pricing}).  \citet{HMR-18} show that
the \(\nicefrac12\) ratio guarantee is optimal among deterministic pricing rules.
\citet{FILS-15} show that the bound can be improved by randomization.
\citet{AB-19} and \citet{ABB-22} give the bounds reviewed above on the optimal randomized
sample-pricing guarantee for regular and \(\MHR\) distributions.  The
result presented above, that non-revelation mechanisms can outperform
revelation mechanisms, is from \citet{FHL-21}.  Recently,
\citet{TTW-26} revisit the model taking the implementation-theory
perspective of asking the agent to report the distribution, while
enforcing the incentive compatibility constraint.

\bibliographystyle{plainnat}

\end{document}